\newcommand{\titleinfo}{Conditional Method Confidence Set} 
\title{\titleinfo}
\def\authora{Lukas Bauer}
\def\affa{University of Freiburg}
\def\emaila{\href{mailto:lukas.bauer@vwl.uni-freiburg.de}{lukas.bauer@vwl.uni-freiburg.de}}
\def\authorb{Ekaterina Kazak}
\def\affb{University of Birmingham} 
\def\emailb{\href{mailto:e.kazak@bham.ac.uk}{e.kazak@bham.ac.uk}}
\newcommand{\authorinfo}{\authora, \authorb}
\definecolor{DarkBlue}{rgb}{0.1,0,0.55}
\definecolor{DarkGreen}{RGB}{24,126,35}
\definecolor{lgrey}{RGB}{245,245,245}     
\definecolor{colKeys}{RGB}{0,0,255}       
\definecolor{colIdentifier}{RGB}{0,0,0}	  
\definecolor{colComments}{RGB}{34,139,34} 
\definecolor{colString}{RGB}{160,32,240}  
\newtheoremstyle{standard}
  {0,5cm}      
  {0,5cm}      
  {\upshape} 
  {}         
  {\bfseries}
  {}        
  {\newline} 
  {}         
\theoremstyle{standard}
\newtheorem{definition}{Definition}[section]
\newtheorem*{definition*}{}
\newtheorem{assumption}{Assumption}[section]
\newtheorem{theorem}{Theorem}[section]
\newtheorem{corollary}{Corollary}[section]
\newtheorem{lemma}{Lemma}[section]
\newtheorem{remark}{Remark}[section]
\tikzset{algpxIndentLine/.style={style}}
\newtheoremstyle{notes}
  {0.5cm}      
  {0.5cm}      
  {} 
  {}         
  {\itshape}
  {}        
  {\newline} 
  {}         
\theoremstyle{notes}
\newcounter{saveenumi}
\DeclareMathOperator*{\Ew}{E}
\newcommand{\ec}[2]{\Ew \left[ \left. #1 \right| #2 \right]}
\newcommand{\1}{\mbox{$\mathrm{1\hspace*{-2.5pt}l}$\,}}
\DeclareMathOperator*{\argmax}{arg\,max\,}
\newcommand{\beq}{\begin{equation}}
\newcommand{\eeq}{\end{equation}}
\newcommand{\beqno}{\begin{equation*}}
\newcommand{\eeqno}{\end{equation*}}
\newcommand{\beqn}{\begin{eqnarray}}
\newcommand{\eeqn}{\end{eqnarray}}
\newcommand{\beqnn}{\begin{eqnarray*}}
\newcommand{\eeqnn}{\end{eqnarray*}}
\newcommand{\balgn}{\begin{align}}
\newcommand{\ealgn}{\end{align}}
\newcommand{\balgnn}{\begin{align*}}
\newcommand{\ealgnn}{\end{align*}}
\newcommand{\ben}{\begin{enumerate}}
\newcommand{\een}{\end{enumerate}}
\newcommand{\bit}{\begin{itemize}}
\newcommand{\eit}{\end{itemize}}
\newcommand{\bbm}{\begin{bmatrix}}
\newcommand{\ebm}{\end{bmatrix}}
\DeclareMathOperator{\sign}{sgn}
\DeclareMathOperator{\diag}{diag}
\begin{document}

\begin{titlepage}
	\title{\titleinfo \thanks{
			The authors would like to thank Roxana Halbleib, Winfried Pohlmeier and Christian Gouriéroux for helpful comments.
			Early versions of the paper have been presented at the CFE 2024 in London and the FFM 2025 in Venice.
			All remaining errors are ours.
		} \\
	}
	\author{\Large\authora\thanks{Corresponding author: Faculty of Economics and Behavioural Sciences, University of Freiburg, Rempartstr. 16, D-79098 Freiburg, Germany. Email: \emaila.} \\[0.2cm] \large \affa  \and \Large \authorb\thanks{Department of Economics, Birmingham Business School, University House, 116 Edgbaston Park Rd, Birmingham B15 2TY, UK. Email: \emailb.} \\[0.2cm] \large \affb}
	\date{This version: \today \\}
	\maketitle
	\thispagestyle{empty}
	\begin{abstract}
		This paper proposes a Conditional Method Confidence Set (CMCS) which allows to select the best subset of forecasting methods with equal predictive ability conditional on a specific economic regime. The test resembles the Model Confidence Set by \textcite{Hansen.2011} and is adapted for conditional forecast evaluation.
		We show the asymptotic validity of the proposed test and illustrate its properties in a simulation study.
		The proposed testing procedure is particularly suitable for stress-testing of financial risk models required by the regulators. We showcase the empirical relevance of the CMCS using the stress-testing scenario of Expected Shortfall. The empirical evidence suggests that the proposed CMCS procedure can be used as a robust tool for forecast evaluation of market risk models for different economic regimes. 
	\end{abstract}
	\vfill
	\noindent \textbf{Keywords:} forecast evaluation, conditional loss function, Value-at-Risk, Expected Shortfall\\
\end{titlepage}
\newpage

\section{Introduction}
\label{sec:intro}

Forecasting performance of econometric methods has been a critical focus in financial and macroeconomic research, largely driven by the strict demands of institutional regulators who aim at mitigating systemic risk \parencite{Ellis.2022}. In the aftermath of the 2008 Financial Crisis and the recent COVID-19 pandemic, the need for robust, reliable risk assessment models has become more pressing. Regulators, such as the Basel Committee on Banking Supervision, have responded with comprehensive guidelines, including stress-testing requirements, to ensure that the econometric tools employed by banks and financial institutions are sufficiently robust during periods of market turbulence. From an econometric standpoint, these regulatory requirements highlight the necessity for advanced statistical procedures capable of selecting and evaluating models under various market regimes. A prominent example of such regulations is the requirement to ``stress-test'' methods employed by financial institutions to forecast Expected Shortfall, or the worst expected loss associated with the market portfolio. The periods of stress are defined by a variety of risk factors provided by the regulator, associated with different \textit{liquidity horizons}, where a forecasting method is expected to be robust with respect to high values of risk factors across all liquidity horizons, which can be interpreted as economic regimes.

This paper contributes to the literature by proposing a Conditional Method Confidence Set (CMCS), a robust tool for evaluating the performance of forecasting methods under various financial stress scenarios. The CMCS builds on the Model Confidence Set (MCS) test by \textcite{Hansen.2011} and extends it to allow for state-dependent model comparisons, where forecast accuracy is assessed conditional on a specific economic regime. Our empirical application focuses on stress-testing Expected Shortfall forecasts across different liquidity horizons as required by \textcite{BaselCommittee.2019}, offering new insights into the robustness and reliability of downside risk forecasts in financial markets. 

Existing literature extensively debates the role of model misspecification, estimation errors, and the choice of loss functions in evaluating forecast accuracy. The impact of these factors on forecast comparison is critical when models are misspecified or rely on non-nested information sets. 
\textcite{Patton.2020} demonstrates that the rankings of models, within the class of consistent loss functions, can change based on the specific choice of the loss function. In regulated environments, where financial downside risks such as Value-at-Risk and Expected Shortfall need to be forecasted, robustness of forecast evaluation is crucial. \textcite{Gourieroux.2021} address this challenge by introducing robust forecast intervals that account for model misspecification, which is particularly important for stress-testing.
\textcite{Zhu.2022} propose a framework where conditioning variables or instruments are used to detect the best performing forecast. This approach builds on the earlier methods by introducing the role of conditioning variables to improve forecast quality, thereby shifting the focus from unconditional to conditional forecast evaluation.

The challenge of evaluating forecast accuracy becomes more complex in the context of misspecified models. \textcite{Giacomini.2006} introduced the first formal approach to comparing potentially misspecified forecasts, conditional on an information set, by developing the Conditional Predictive Ability (CPA) test. Their framework allows for a pairwise comparison of forecasts based on conditional expected loss.
Building on this, \textcite{Li.2022} with \textcite{Li.2020} extended the CPA test to allow for uniform inference on conditional loss differentials, proposing tests that can assess the hypothesis that the loss differential between models remains zero across all conditioning variables. 
\textcite{Giacomini.2010}  and \textcite{Richter.2020} further contributed to the growing body of literature by proposing a dynamic framework to test for equal predictive ability, which adjusts model rankings as forecast performance shifts over time due to structural changes in the data-generating process.

A more recent contribution by \textcite{Borup.2017} and \textcite{Borup.2024} introduced a dynamic forecast combination (DFC) framework. This method enables the comparison of models in a multivariate setting, extending the pairwise comparison by \textcite{Giacomini.2006} to identify the best set of models based on their conditional performance. 
\textcite{Hansen.2011} provided a crucial advancement in the model selection with the concept of the Model Confidence Set (MCS), which allows for the identification of a subset of models that exhibit equal predictive ability. This set-based approach is particularly valuable in contexts where multiple models need to be considered, and their predictive accuracy needs to be assessed simultaneously. Recent work by \textcite{Arnold.2024} extends this concept by incorporating sequential testing methods, allowing the MCS to be applied in a real-time framework with continuous updates as new data becomes available.

Our paper complements the unconditional approach of the MCS by \textcite{Hansen.2011}: even if one cannot distinguish between the predictive ability of  a set of models on average, the models may display very different forecast accuracy conditional on the state of the environment. In this case, the CMCS refines the MCS as, for each state, it delivers a set of models with indistinguishable predictive ability that may differ strongly from the unconditional one. 

The methods of \textcite{Li.2020} and \textcite{Li.2022} apply to a setting that is conceptually different from ours. They consider a conditional expected loss that is a continuous function, which rules out conditioning on indicator variables - yet, many observable states of the economy are discrete, and only a few states may be of interest. 
Moreover, the method of \textcite{Li.2022} yields a set of models that are weakly superior over all values of the conditioning set. Consequently, as pointed out by the authors, this set is potentially empty, as uniform weak superiority may be too strong an assumption under misspecification. In contrast, our CMCS is less restrictive, as it also applies if the relative conditional predictive ability between two models varies from state to state. 

While the testing procedures by \textcites{Giacomini.2006}{Borup.2024} indicate if there are differences in the conditional predictive ability, they do not directly identify the state (variable) that underlies these differences. They address this issue by proposing an selection rule based on the predicted forecasting loss, which, however, may lead to eliminations based on weak evidence. This procedure is conceptually analogous to the regression F-test, which \textcite{Hansen.2011} discuss. However, our CMCS procedure ensures that eliminations are based on sufficient evidence by adopting statewise the coherency requirement that \textcite{Hansen.2011} lay out.


The remainder of the  paper is organized as follows. In Section \ref{sec:model} we propose the Conditional Method Confidence Set and discuss its theoretical properties.
Section \ref{sec:sim} illustrates the theoretical properties of the proposed test and compares it to the existing Wald-type tests for conditional predictive ability.
Section \ref{sec:empir} provides empirical evidence of the performance of the proposed test in the context of stress-testing the Expected Shortfall forecasts. Section \ref{sec:conclusions} summarizes the main findings and gives an outlook on future research.

%
\section{Conditional MCS}
\label{sec:model}
This section develops a state-wise multiple testing procedure that delivers conditional method confidence sets. 
We extend the Model Confidence Set (MCS) procedure by \textcite{Hansen.2011} by applying it to statewise losses, i.e., subsamples of losses that are selected based on the state of the world at the forecast origin. We show the asymptotic validity of our approach.   

\subsection{Motivating example}\label{sec:motivating_example}
To illustrate the theoretical results presented below and introduce some notation consider an example of the risk manager making a choice between two forecasting methods: a GARCH(1,1) model with Gaussian innovations and a GARCH(1,1) model with innovations following a standardized Student-t distribution \( sst(\nu) \) with \( \nu \) degrees of freedom. Both models can be used to e.g.~forecast market volatility and then assume a location-scale parametrization to forecast the Expected Shortfall and the Value-at-Risk.

The risk manager estimates the parameters of each model \( i \), \( i \in \{1, 2\} \), using an estimation window of length \( r_{i} \), i.e., she uses a maximum of \( r= \max_{i} \{r_{1}, r_{2} \} \) observations. Using estimated parameters, each model makes a forecast of VaR and ES at time \( t+k \), where, for simplicity, we consider the case \( k=1 \), i.e., a one-day-ahead forecast. Following the realization of the financial return, she uses a statistical loss function \( L \) to assess the accuracy of the forecasts, which yields a scalar loss corresponding to each forecast. Thus, for each time \( t \) such that \( t+1\) is included in the out-of-sample window, she obtains the loss differential \( d_{12, t} = l_{1, t} - l_{2, t} \) to compare the forecasts of the two GARCHes.

Next, assume that the DGP of the financial return at time \( t+1 \) follows a GARCH model with innovations from the random variable \( \Psi_{t+1} \) that has a state-dependent distribution: conditional on a bivariate state variable \( S_{t} \), the innovations are distributed as

\begin{equation} 
    \Psi_{t+1} \sim 
    \begin{cases} \mathcal{N}(0, 1), \text{ if } S_{t} = 1, \\
    sst(\nu), \text{ if }  S_{t} = 2.
    \end{cases} 
\end{equation}

The state variable \( S_{t} \) takes two values, non-crisis and crisis, which we index and refer to by \( l=1, 2 \). From the DGP, it is clear that both models are misspecified: they assume the wrong distribution of the innovations with (unconditional) probability \( P(S_{t} = 2) \) and \( P(S_{t} = 1) \), respectively. In the following, we use the term method to stress that the models are misspecified and that the forecasts depend on estimated parameters. 

Assume that the regulator observes the state of the market, \( S_{t} \), with some error, and reports its observations. The error is such that, when the regulator reports non-crisis, the GARCH with Gaussian innovations has a smaller expected loss than the GARCH with \( sst(\nu) \) innovations, and vice versa when the regulator reports a crisis. Consequently, exploiting the information provided by the regulator can lead to improved forecasting accuracy if one chooses the superior method. 

Thus, the risk manager needs to evaluate the methods' forecasting performance in both states  \( l \in \{1, 2\}\), for which she makes \(n^{1} \) and \( n^{2} \) observations each of the loss differential \( d_{12, t} \). Therefore, she tries to make inference about the expected value of \( d^{1}_{12, \tau}, \tau \in {1, 2, \ldots, n^{1}} \) and \( d^{2}_{12, \tau}, \tau \in {1, 2, \ldots, n^{2}} \), which are the loss differentials that correspond to the state of non-crisis and crisis, respectively.

The CMCS approach performs two separate tests about the mean of the conditional loss differentials \( d^{1}_{12}, \tau \) and \( d^{2}_{12, \tau} \) each, based on the test statistics 
\begin{align}\label{eq:conditional_T_statistic}
T^{1} &= \dfrac{\bar{d}^{1}_{12}}{ \widehat{var}(\bar{d}^{1}_{12}) },   &T^{2}&=\dfrac{\bar{d}^{2}_{12}}{ \widehat{var}(\bar{d}^{2}_{12}) } ,
\end{align}
where  \( \widehat{var}(\bar{d}^{l}_{12}), \; l \in \{1, 2\} \) is a consistent estimator of the variance of \( \bar{d}^{l}_{12} \). For \(n^{1} \) and \( n^{2} \) large enough, the tests reveal for each state the method that is superior.  

If she decides to use the test by \textcites{Giacomini.2006} (which is nested by the multivariate extension of \textcite{Borup.2024}) to select one method for each state, she applies a two-step-procedure. First, she performs a Wald type test that uses the vector of test functions \( h_{t} = (1, \mathbbm{I}_{\{s_{t}=1\}})^{\prime} \in R^{2 \times 1}\) to obtain the instrumented loss differential \( z_{t} = h_{t} d_{t} = (d_{t}, d_{t}\mathbbm{I}_{\{s_{t}=1\}} )^{\prime} \). 
The test statistic is then
\begin{align}\label{eq:Wald_statistic}
    T^{h} &= n \bar{z}^{\prime} \hat{\Sigma}^{-1}  \bar{z},
\end{align}
where \( \hat{\Sigma} \) is a consistent estimator of the variance-covariance matrix of \( z_{t} \). 

If the test rejects that \( \mathbb{E}[z_{t}] = (0,0)^{\prime} \), she concludes that there are differences in the conditional predictive ability, but she does not have evidence in which state they exist. 

Second, she thus applies the proposed decision rule. With two methods and two disjoint states, the decision rule is based on the sign of \( \bar{d}^{1}_{12} \) and \( \bar{d}^{2}_{12} \), i.e., she chooses the method with the smaller conditional average loss. 

Compared to this combination of conditional test and decision rule by \textcites{Giacomini.2006}{Borup.2024}, the statewise testing of the CMCS ensures that choosing one model over the other is based on sufficiently strong evidence in finite samples, which we exemplify in Section \ref{sec:two_models_Wald_vs_t}.

The unconditional MCS approach reduces to a Diebold-Mariano test (\textcite{Diebold.1995}) that uses all observations to perform a test about the unconditional mean of \( d_{12, t} \) based on 
\begin{align}\label{eq:unconditional_T_statistic}
    T^{0} = \dfrac{\bar{d}_{12}}{ \widehat{var}(\bar{d}_{12})}, 
\end{align} 
where  \( \widehat{var}(\bar{d}_{12}) \) is a consistent estimator of the variance of \( \bar{d}_{12} \).
Thus, for \( n \) large enough, the MCS reveals the method that is better on average, but will make inferior forecasts in one of the states. The CMCS, however, yields for each state the method that is superior, and therefore enables the risk manager to make more accurate forecasts.

For ease of exposition, this illustration considers only two forecasting methods, which reduces the comparison problem to a pairwise one. 
For \( m \geq 3 \), the CMCS approach can be implemented using \( T^{l}_{max}, \; l \in \{1, 2\} \), which takes the maximum over the individual conditional t statistics, while the MCS test can be implemented analoguously based on the unconditional \( T_{max} \) test statistic. The multivariate test by \textcite{Borup.2024} is based on the Kronecker product of \( h_{t} \) and the \(m-1\) vector of loss differences between a baseline method and the remaining ones.

\subsection{Description of the environment}\label{sec:description_environment}

We consider a stochastic process \( \mathbf{W} \equiv {W_{t}:\Omega \to R^{s+1}, s \in N, t=1, 2, \ldots} \) on a complete probability space \( ( \Omega, \mathcal{F}, P) \), where \( W_{t} \equiv (Y_{t}, X_{t}^{'})^{'}\) is observable. \( Y_{t}: \Omega \to R \) is the variable of interest, while \( X_{t}: \Omega \to R^{s} \) are the predictor variables, and \( \mathcal{F}_{t}=\sigma(W^{'}_{1}, \ldots,  W^{'}_{t})^{'} \) (cf. \textcites{White.1994}{Giacomini.2006}{Borup.2024}).

Assume there is a finite set of \( m \) competing forecasting methods that make univariate forecasts of some functional \( F \) of the variable \( Y \), e.g., the conditional mean, median or quantile. At each time \( t \), method \( i \) makes a forecast of \( F( Y_{t+k}) \), i.e., \( k\)-steps-ahead. We denote the forecast as \( \hat{f}^{i}_{t, k, r^{i}} = f^{i}(W_{t}, W_{t-1}, \ldots, W_{t-r^{i}+1}; \hat{\beta}^{i}_{t, r^{i}} ) \) for \( i=1, \ldots, m\), where \( f^{i} \) is an \( \mathcal{F}_{t}\)-measurable function.  

The subscript \(r^{i} \) on \( \hat{f} \) indicates that the forecast is generated using \(r^{i}\) observations prior to time \( t \). Moreover, \( \hat{\beta}^{i}_{t, r^{i}} \) denotes the estimates that the \( i^{th} \) forecasting method uses to generate the forecast.  These estimates can be parametric, semi-parametric, or nonparametric.

Let \( r = \max \{r^{1}, \ldots, r^{m} \} \), i.e., \( r \) is the maximum length of an estimation window over the set of competing forecasting methods. Along the lines of \textcites{Giacomini.2006}{Borup.2024}, we require that \( r < \infty \) is finite. This rules out an expanding estimation window, but includes the rolling or fixed window estimation scheme with both fixed and time-varying \( r^{i} \). Consequently, we may compare nested models, as the estimation error does not vanish, which would lead to degenerate limiting distributions (\textcite{Giacomini.2006}).

For each method \( i \), we obtain a total of \( n \) pairs of the target variable \(Y_{t+k} \) and the \(k\)-steps-ahead forecasts made at time \( t \). We evaluate the forecasts using a real-valued, scalar loss function \(L(Y_{t+k}, \hat{f}^{i}_{t, k, r_{i}}) \). While we focus on statistical loss functions that are strictly consistent in the sense of \textcite{Gneiting.2007}, economic measures such as utility or a monetary criterion are also admissible. The shorthand notation \( L_{i, t} \) denotes the loss associated with the forecast that method \( i \) makes at time \( t \).

We aim to test for differences in the predictive ability of the competing forecasting methods when we observe a specific economic condition, e.g., an oil price shock or a financial crisis. We condition on \( d \) disjoint states\footnote{ While we restrict the admissible conditioning variables as compared to \textcite{Giacomini.2006} and \textcite{Borup.2024}, this allows us to obtain interpretable MCSs. For further discussion, see Section \ref{sec:testing_procedure}.}, i.e., we may think of the state variable \( S_{t} \) as a one-dimensional categorical random variable. We can thus represent realizations of \( S_{t} \) as a \( d-1 \) vector of indicator variables \( \tilde{s}_{t} \), which is \( \mathcal{F}_{t}\)-measurable. This yields the specific \textit{test functions} \( h_{t} = (1, \tilde{s}_{t}^{\prime})^{\prime} \) as used in the tests by \textcite{Giacomini.2006} and \textcite{Borup.2024}.

Let superscript \( l \) indicate that a random variable or observation is conditional on state \( l =1, \ldots, d \). If helpful, \( l=0 \) denotes the unconditional case. We denote the index set associated with all \( n \) out-of-sample losses \( I = \{ t_{0}, t_{0}+1, \ldots, t_{0} + n-1 \}  \), where \( t_{0} \) is the earliest forecast origin. Moreover, we write \( I^{l} = \{ t: S_{t} = s^{l}, \: t \in I \} \), i.e., the observations for which we observe state \( l \) at the forecast origin, with \( n^{l} = | I^{l}| \). In the following, for each \( l \), we use \( \tau \in \{1, \ldots, n^{l} \} \) to index the observations of \(L_{i,t} | S_{t} = s^{l} \).
Thus, \( \{ L^{l}_{\tau} \} \) denotes the subsequence of \( \{ L_{t}\} \) such that \(t \in I^{l}\), i.e., \( \{L_{t} | S_{t} = s^{l}\} \).

\subsection{Conditional hypotheses}

For a given state or economic regime $l$, e.g., a stress period as defined by regulator (\textcite{BaselCommittee.2019}), we want to formally test for differences in the predictive ability of \( m \) competing forecasting methods. 
Thus, for the state $l$ we define the conditional relative performance variables as
\begin{equation}
	d^{l}_{ij, \tau} \equiv L^{l}_{i, \tau} - L^{l}_{j, \tau} \; \text{for } i, j \in  \mathcal{M}^{\cdot, 0}, \tau \in \{ 1, \ldots, n^{l} \}, 
\end{equation} 
where \( \mathcal{M}^{\cdot, 0} \) is the initial set of all \( m \) methods. The competing methods are ranked in terms of their conditional expected losses.
\begin{definition}\label{def:def1hln_conditional}
    The set of conditionally superior objects is defined by \\
    \( \mathcal{M}^{l,*} \equiv \{ i \in \mathcal{M}^{\cdot, 0}: \mu^{l}_{ij} \leq 0 \text{ for all } j \in \mathcal{M}^{\cdot, 0} , \; l \in \{0, 1, \cdots, d\} \}\).     
\end{definition}

\noindent Definition \ref{def:def1hln_conditional} is the conditional equivalent to \textcite{Hansen.2011}. In this paper we impose an assumption that \( \mu^{l}_{ij} \equiv \mathbb{E}[d^{l}_{ij, n^{l}\tau}] < \infty \), and that \( \sign ( \mathbb{E}[d^{l}_{ij, n^{l}\tau}] ) \) does not depend on \( \tau \) for all \( i, j \in \mathcal{M}^{\cdot, 0} \), which implies that the conditional ranking of the forecasting methods is stable over time. 

The hypotheses which need to be tested to find the set $\mathcal{M}^{l,*}$ take the form 
\[ H^{l}_{0, \mathcal{M} }: \mu^{l}_{ij} = 0 \text{  for all  } i,j \in \mathcal{M}, \; l \in \{0, 1, \cdots, d\} \] and 
\[ H^{l}_{A, \mathcal{M} }: \mu^{l}_{ij} \neq 0 \text{  for some  } i,j \in \mathcal{M}, \; l \in \{0, 1, \cdots, d\}, \]
where \( \mathcal{M}^{l} \subset \mathcal{M}^{\cdot, 0}. \)

Equivalently, we can express the hypotheses in terms of \( \mu^{l}_{i \cdot} \equiv \mathbb{E}[d^{l}_{i \cdot, n^{l}\tau}] \), where \( d^{l}_{i \cdot, n^{l}\tau} = L^{l}_{i, \tau} - \dfrac{1}{m} \displaystyle \sum_{j=1}^{m} L^{l}_{j, \tau} \). We define the conditional method confidence set as any subset of \( \mathcal{M}^{\cdot, 0} \) that contains \( \mathcal{M}^{l, *} \).

\subsection{CMCS Testing procedure}\label{sec:testing_procedure}

Operationally, the CMCS procedure is very similar to the unconditional MCS procedure by \textcite{Hansen.2011}. The difference of the CMCS compared to the unconditional testing is that the CMCS is performed statewise on the time series of losses, which correspond to a state $l$.

Generally, a main concern in testing multiple hypotheses is the control of the familywise error rate (FWER), which is defined as making at least one false rejection. In the context of testing predictive ability, this means to eliminate at least one method with the smallest expected loss. 
One principle to control the FWER while avoiding pairwise comparisons is the so-called \textit{closure method} by \textcite{Marcus.1976}. To reject a hypothesis, every intersection hypothesis, i.e., a hypothesis that nests the individual hypothesis, must be rejected (see \textcite{Lehmann.2022}, ch.~9.2). This ensures that the rejection of a hypothesis is based on sufficiently strong evidence.

In the MCS testing procedure, \textcite{Hansen.2011} impose a similar requirement that they call ``coherency'' between test and elimination rule, and which needs to be fulfilled to devise a valid multiple testing procedure that avoids pairwise comparison.

We impose analogous assumptions about the statewise testing procedures that we provide in Section \ref{sec:assumptions_on_testing_procedure}. We adopt the coherency requirement statewise, which implies the we control the FWER for each state. As we consider disjoint states, CMCS-based decisions are robust against false rejections.

Moreover, we impose assumptions such that the conditional loss differentials have finite moments and display finite temporal dependence, thus central limit theorems for mixing random variables apply. 

\begin{assumption}\label{ass:GW_mixing_HLN_adj}
    For some \( r>2\) and \( \gamma >0 \), it holds that \( \{ W_{t} \equiv (Y_{t}, X_{t}^{'})^{'}\} \) and test functions \( \{ h_{t} \} \) are \( \alpha \)-mixing of order \( -r/(r-2) \).  
\end{assumption}

Define \( z_{ij, t} = h_{t} d_{ij,t} \). With Corollary \ref{cor:product_conditional_GW}, \( \{ z_{ij, t}\}_{i,j \in \mathcal{M}^{0}} \) is \( \alpha \)-mixing of order \( -r/(r-2) \), and thus \( \{ d^{l}_{ij, n^{l}\tau}\}_{i,j \in \mathcal{M}^{0}}  \) is \( \alpha \)-mixing of at most order \( -r/(r-2) \).

\begin{assumption}\label{ass:GW_finite_moments}
    For some \( r>2\) and \( \gamma >0 \), it holds that \( \mathbb{E} | d^{l}_{ij, n^{l}\tau}|^{r+\gamma} < \infty \) for all \( i, j, l, n^{l}, \tau \).
\end{assumption}

\begin{assumption}\label{ass:subsampled_loss_differentials}
    Moreover, assume about \( \{ d^{l}_{ij, n^{l}\tau}\}_{i,j \in \mathcal{M}^{0}} \) for all \( l, n^{l}, \tau \) that (i) \( \sign \mathbb{E}(d^{l}_{ij, n^{l}\tau}) \) is constant across $n^{l}\tau$, (ii) \( var(d^{l}_{ij, n^{l}\tau}) > 0 \), and (iii) \( var((n^{l})^{-1/2} \sum^{n^{l}}_{\tau=1}d^{l}_{ij, n^{l}\tau}) > \delta > 0 \) for all \( n^{l} \) sufficiently large.
\end{assumption}

\begin{assumption}\label{ass:probability_of_states}
    We assume that \( P(S_{t}=s^{l}) > \varepsilon > 0 \) for each \( t, l \), with \( \varepsilon >0\) such that \( n^{l}  \to \infty \text{ as } n \to \infty \).
\end{assumption}

Assumptions \ref{ass:GW_mixing_HLN_adj} to Assumptions \ref{ass:subsampled_loss_differentials} are essentially analogous to those of \textcites{Giacomini.2006}{Hansen.2011}{Borup.2024}; the data may exhibit both considerable heterogeneity and temporal dependence. 
Assumption \ref{ass:subsampled_loss_differentials} (i) is an additional assumption to \textcites{Giacomini.2006}{Borup.2024}. It implies that we permit arbitrary structural breaks, except for changes in the sign of the expected loss differentials, i.e., as long as the structural breaks do not alter the conditional ranking of the forecasting methods. Especially, our method imposes restrictions on the state variables, in the sense that it requires careful conditioning by the researcher or regulator. This is less restrictive than the strict stationarity that \textcite{Hansen.2011} impose on \( d_{ij, t} \) in the unconditional case.  

In contrast to \textcites{Giacomini.2006}{Borup.2024}, assuming that the conditional ranking of the forecasting methods is stable rules out arbitrary shifts in the means of the conditional loss differentials also under non-stationarity. This implies that the null and the alternative are exhaustive, i.e., if \( \mathbb{E}[\{ d^{l}_{ij, n^{l}\tau}\}] = 0 \) then it also holds that \( \mathbb{E}[\{ d^{l}_{ij, n^{l *}\tau}\}] = 0 \) for any sequence \( \{n^{l*}\} \), and analogously in the case of an inequality. 

Finally, Assumption \ref{ass:probability_of_states} ensures that the statewise sample size \(n^{l} \) goes to infinity as \( n \) goes to infinity for each state \(l\). 

\paragraph{}
The specific multiple testing procedure that we consider to test \( H^{l}_{0, \mathcal{M} } \) uses \( t^{l}_{i\cdot} = \bar{d}^{l}_{i\cdot, n^{l}} / \sqrt{\widehat{var}(\bar{d}^{l}_{i\cdot, n^{l}})} \). In brief, if \( T^{l}_{max, \mathcal{M}} = \max_{i} t^{l}_{i\cdot} \) exceeds a critical value \( c \), apply the elimination rule \(e^{l}_{max, \mathcal{M}} \) to decide which method to eliminate, where \( e^{l}_{max, \mathcal{M}}=\argmax_{i \in \mathcal{M}^{l}} t^{l}_{i\cdot} \) removes the method with the largest conditional standardized excess loss. Theorem \ref{th:th4hln_conditional} below provides the asymptotic properties of this testing procedure. 

Let \( \rho^{l}_{n^{l}} \) denote the conditional \( m \times m \) correlation matrix that is implied by the covariance matrix \( \Omega^{l}_{n^{l}} \) of Lemma \ref{lm:lm2hln_conditional}. Further, given the vector of random variables \( \xi^{l}_{n^{l}} \sim N_{m}(0, \rho^{l}_{n^{l}})\), we let \( F^{l}_{\rho_{n^{l}}} \) denote the distribution of \( \max_{i} \xi^{l}_{i, n^{l}} \). We define \( \bar{V}^{l}_{n^{l}} = (\bar{d}^{l}_{i\cdot, n^{l}}, \ldots, \bar{d}^{l}_{m\cdot, n^{l}})^{'}. \), \( l =1, \ldots, d \).\\

\begin{theorem}\label{th:th4hln_conditional}
    Let Assumptions \ref{ass:GW_mixing_HLN_adj}, \ref{ass:GW_finite_moments} and \ref{ass:subsampled_loss_differentials} hold and suppose that \( \widehat{(\omega^{l}_{i, n^{l}})}^{2} \equiv \widehat{var}((n^{l})^{1/2}\bar{d}^{l}_{i \cdot, n^{l}}) = n^{l} \widehat{var}(\bar{d}^{l}_{i\cdot, n^{l}}) \overset{p}{\to}(\omega^{l}_{i, n^{l}})^{2}, \) where \( (\omega^{l}_{i, n^{l}})^{2}, i =1, \ldots, m\), are the diagonal elements of \( \Omega^{l}_{n^{l}}\).
    
    Under \( H^{l}_{0, \mathcal{M}} \), we have \( T^{l}_{max, \mathcal{M}} \overset{d}{\to} F^{l}_{\rho_{n^{l}}} \), and under the alternative hypothesis \( H^{l}_{A, \mathcal{M}} \), we have \( T^{l}_{max, \mathcal{M}} \to \infty \) in probability. Moreover, under the alternative hypothesis, we have \( T^{l}_{max, \mathcal{M}} = t^{l}_{j \cdot} \), where \( j=e^{l}_{max, \mathcal{M}} \notin \mathcal{M}^{l,*} \) for \( n^{l} \) sufficiently large. 
\end{theorem}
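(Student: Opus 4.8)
The plan is to adapt the proof of Theorem 4 in \textcite{Hansen.2011} to the statewise setting, using Lemma \ref{lm:lm2hln_conditional} as the engine that supplies the statewise central limit theorem for \( \bar{V}^{l}_{n^{l}} \). All three claims then reduce to standard arguments—the continuous mapping theorem for the null distribution, and a law of large numbers together with the variance normalization for the divergence and elimination claims—applied to the subsampled sequence \( \{ d^{l}_{ij, n^{l}\tau}\} \), whose \( \alpha \)-mixing property is guaranteed by Assumption \ref{ass:GW_mixing_HLN_adj} via Corollary \ref{cor:product_conditional_GW}.

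For the null distribution I would start from Lemma \ref{lm:lm2hln_conditional}, which under Assumptions \ref{ass:GW_mixing_HLN_adj}--\ref{ass:subsampled_loss_differentials} delivers \( (n^{l})^{1/2} \bar{V}^{l}_{n^{l}} \overset{d}{\to} N_{m}(0, \Omega^{l}_{n^{l}}) \) under \( H^{l}_{0, \mathcal{M}} \), where \( \mu^{l}_{i\cdot} = 0 \) for all \( i \). Assumption \ref{ass:subsampled_loss_differentials}(iii) keeps the asymptotic variances bounded away from zero, so the diagonal normalizers \( \omega^{l}_{i, n^{l}} \) are well defined; combined with the hypothesized consistency \( \widehat{(\omega^{l}_{i, n^{l}})}^{2} \overset{p}{\to} (\omega^{l}_{i, n^{l}})^{2} \), Slutsky's theorem gives that the standardized vector \( (t^{l}_{1\cdot}, \ldots, t^{l}_{m\cdot})' \) converges to \( N_{m}(0, \rho^{l}_{n^{l}}) \). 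Applying the continuous mapping theorem to the coordinatewise maximum then yields \( T^{l}_{max, \mathcal{M}} \overset{d}{\to} F^{l}_{\rho_{n^{l}}} \).

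For the divergence and elimination claims I would work under \( H^{l}_{A, \mathcal{M}} \). Since \( \sum_{i} \mu^{l}_{i\cdot} = 0 \) and the \( \mu^{l}_{i\cdot} \) are not all zero under the alternative, at least one method has \( \mu^{l}_{i\cdot} > 0 \). A law of large numbers for \( \alpha \)-mixing sequences (valid by Assumptions \ref{ass:GW_mixing_HLN_adj} and \ref{ass:GW_finite_moments}) gives \( \bar{d}^{l}_{i\cdot, n^{l}} \overset{p}{\to} \mu^{l}_{i\cdot} \); because \( t^{l}_{i\cdot} = (n^{l})^{1/2} \bar{d}^{l}_{i\cdot, n^{l}} / \widehat{(\omega^{l}_{i, n^{l}})} \) has a denominator converging to a finite positive limit and \( n^{l} \to \infty \) by Assumption \ref{ass:probability_of_states}, the statistic for any inferior method diverges to \( +\infty \), whence \( T^{l}_{max, \mathcal{M}} \to \infty \) in probability. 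For the same reason, methods with \( \mu^{l}_{i\cdot} < 0 \) have \( t^{l}_{i\cdot} \to -\infty \) and methods with \( \mu^{l}_{i\cdot} = 0 \) stay \( O_{p}(1) \); Assumption \ref{ass:subsampled_loss_differentials}(i) guarantees that the sign of the excess loss does not reverse along the subsample, so these comparisons are stable. Hence, with probability approaching one, \( e^{l}_{max, \mathcal{M}} = \argmax_{i} t^{l}_{i\cdot} \) selects some \( j \) with \( \mu^{l}_{j\cdot} > 0 \); and \( \mu^{l}_{j\cdot} > 0 \) forces \( \mu^{l}_{jk} > 0 \) for at least one \( k \), so \( j \notin \mathcal{M}^{l,*} \) by Definition \ref{def:def1hln_conditional}.

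I expect the main obstacle to be the fact that the limiting law \( F^{l}_{\rho_{n^{l}}} \) is indexed by \( n^{l} \): because the environment permits heterogeneity and non-stationarity, the correlation matrix \( \rho^{l}_{n^{l}} \) drifts with the sample size, so ``\( \overset{d}{\to} F^{l}_{\rho_{n^{l}}} \)'' must be read as a uniform, Pólya-type distributional approximation along a triangular array rather than as convergence to a fixed limit. The care required is twofold: Lemma \ref{lm:lm2hln_conditional} must be the drifting-array CLT that justifies this reading, and the correlation matrices must remain uniformly nondegenerate—again secured by Assumption \ref{ass:subsampled_loss_differentials}(iii)—so that the maximum functional behaves well uniformly in \( n^{l} \) and the continuous mapping step transfers to the drifting limit.
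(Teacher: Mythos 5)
Your proposal is correct and follows essentially the same route as the paper: the null-distribution claim is obtained from the statewise CLT of Lemma \ref{lm:lm2hln_conditional} combined with the assumed consistency of the variance estimators and the maximum functional, and the divergence and elimination claims follow from the convergence of \( \bar{d}^{l}_{j\cdot, n^{l}} \) to a strictly positive limit for some inferior method, which forces \( t^{l}_{j\cdot} \) and hence \( T^{l}_{max,\mathcal{M}} \) to diverge at rate \( (n^{l})^{1/2} \) and the argmax to select a method outside \( \mathcal{M}^{l,*} \). Your additional remarks on reading \( F^{l}_{\rho_{n^{l}}} \) as a drifting, triangular-array limit are a fair point of care but do not change the argument, which matches the paper's proof.
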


\begin{proof}
    Let \( \Lambda^{l}_{n^{l}} \equiv \diag((\omega^{l}_{1, n^{l}})^{2}, \ldots, (\omega^{l}_{m, n^{l}})^{2}) \) and  \( \hat{\Lambda}^{l}_{n^{l}} \equiv \diag(\widehat{(\omega^{l}_{1, n^{l}})}^{2}, \ldots, \widehat{(\omega^{l}_{m, n^{l}})}^{2}) \). 
    
    From Lemma \ref{lm:lm2hln_conditional} it follows that \( \xi^{l}_{n^{l}} = ( \xi^{l}_{1,n^{l}}, \ldots, \xi^{l}_{m,n^{l}})^{'} \equiv (\Lambda^{l}_{n^{l}})^{-1/2} (n^{l})^{-1/2} \bar{V^{l}_{n^{l}}} \overset{d}{\to}N_{n^{l}}(0, \rho^{l}_{n^{l}})\), since \( \rho^{l}_{n^{l}} = (\Lambda^{l}_{n^{l}})^{-1/2} \Omega^{l}_{n^{l}} (\Lambda^{l}_{n^{l}})^{-1/2} \).\\

    From \( t^{l}_{i\cdot} = \dfrac{\bar{d}^{l}_{i\cdot, n^{l}}}{\sqrt{\widehat{var}(\bar{d}^{l}_{i\cdot, n^{l}})}} = (n^{l})^{1/2} \bar{d}^{l}_{i\cdot, n^{l}} / \hat{\omega}^{l}_{i, n^{l}} = \xi^{l}_{i,n^{l}} \frac{\omega^{l}_{i, n^{l}}}{\hat{\omega}^{l}_{i, n^{l}}}\), it now follows that \( T^{l}_{max, \mathcal{M}} = \max_{i} t^{l}_{i\cdot} = \max^{l}_{i}((\widehat{\Lambda}^{l}_{n^{l}})^{-1/2} (n^{l})^{1/2}\bar{V^{l}_{n^{l}}})_{i} \overset{d}{\to}F^{l}_{\rho_{n^{l}}}.\) \\ 

    Under the alternative, \( \bar{d}^{l}_{j\cdot, n^{l}} \overset{d}{\to} \mathbb{E}(\bar{d}^{l}_{j\cdot, n^{l}}) > 0 \) for any \( j \notin \mathcal{M}^{*}\), so that both \( t_{j \cdot, n} \) and \(T^{l}_{max, \mathcal{M}} \) converge to infinity at rate \((n^{l})^{1/2} \) in probability. Moreover, it follows that \( j=e_{max, \mathcal{M}} \notin \mathcal{M}^{l,*} \) for \( n \) sufficiently large.
\end{proof}

Theorem \ref{th:th4hln_conditional} shows that \( T^{l}_{max, \mathcal{M}} \) converges to the maximum of a normal distribution under the null, while it detects inferior methods under the alternative. 

\subsection{Bootstrap implementation}
We implement the CMCS testing procedure by performing the bootstrapped MCS testing procedure as put forth by \textcite{Hansen.2011}, but use the statewise losses separately for each state $l$. Consequently, we bootstrap a functional of the mean of weakly dependent time series, i.e., the conditional losses. As discussed by \textcite{Hansen.2011}, the bootstrap implicitly accounts for the correlations among the (conditional) loss differentials, which influence \( F^{l}_{\rho_{n^{l}}} \). \textcite{Goncalves.2002} show that the block bootstrap variance estimator for the sample mean is consistent under the type of dependence that we consider. The CMCS bootstrap procedure is outlined in Appendix \ref{app:bootstrap}.


\section{Simulation Study}
\label{sec:sim}
This section illustrates the finite sample properties of the proposed CMCS by means of Monte Carlo simulation. In Section \ref{sec:sim_power} we illustrate the power properties of the conditional tests compared to the unconditional ones. In Section \ref{sec:two_models_Wald_vs_t} we compare the properties of the t-test, which the CMCS is built upon, to the Wald test underlying the DFC by \textcite{Borup.2024}. Finally, Section \ref{sec:simulation_wald_vs_t_rejection_regions} sheds light on the reasons behind the differences between t-test and Wald type test in the context of conditional forecast evaluation.

\subsection{CMCS power properties} \label{sec:sim_power}

\textcite{Hansen.2011} define the power of the unconditional MCS procedure as the average number of elements in the unconditional MCS \( \mathcal{ \widehat{M}}_{1-\alpha}^{*} \).\footnote{The power of a multiple testing procedure can be defined in several ways, see, e.g., \textcite{Romano.2005}, for a discussion thereof.} Analoguously, we define the power of the CMCS procedure for each state as the average number of elements in the statewise CMCS \( \mathcal{ \widehat{M}}_{1-\alpha}^{l, *} \).

We consider a setting in which all competing methods have the same unconditional predictive ability, while their conditional predictive ability varies according to the current state. 
The setup of these simulations is similar to the one in \textcites{Giacomini.2006}{Borup.2017}. Accordingly, we define a state variable \( S_{t} \), where \(P(S_{t} = 1)=p \), and \(P(S_{t} = 2)=1-p \). 
Similarly to the simulation studies in the literature, we set \( p=0.5 \). For \( n \in \{150, 500 , 1000\} \) observations, we generate \(5000 \) sequences of losses according to \( \mathbf{L}_{t+1} = \boldsymbol{\mu}_{t+1} + \boldsymbol{\epsilon}_{t+1} \), where for each method \( i \in \{1, \ldots, m \} \)

\begin{equation}\label{eq:monte_carlo_losses}
	\boldsymbol{\mu}_{i, t+1} = 
	\begin{cases} -\mu(1-c_{i} ),  \text{ if } S_{t} = 1, \\
	\mu(1-c_{i} ), \text{ if } S_{t} = 2,
\end{cases} 
\end{equation}
with \( c_{i} = \frac{2(i-1)}{m-1}\), and error terms \(\boldsymbol{\epsilon}_{t+1} \), that follow a multivariate normal \( \mathcal{N} (\mathbf{0}, \Phi) \) with \( \Phi = \mathbb{I}_{m_{0}} \).
The choice of the \( c_{i} \) is such that the vector of expected conditional losses is equally spaced, and consequently \( \boldsymbol{\mu}_{t} \) is such that the largest difference in the conditional predictive ability is \( 2 \mu \), and that the conditional predictive ability is symmetric in the two states: in state 1 (\(S_{t} = 1\) ), model 1 is conditionally the best and model \(m \) the worst. In state 2, (\(S_{t} = 2\) ), the ranking is reversed. Moreover, it holds that the absolute value of the conditional loss differential between two models \( i, j \) is the same in both states for all \( i, j \in \mathcal{M}^{\cdot, 0}\).

We consider \( \mu \in [0.1, 0.5 ] \), which covers the same range of the expected loss differential between the two models as in \textcite{Giacomini.2006}. We perform both the unconditional MCS testing procedure, and the statewise CMCS for \(m=10\) models, and set the level of the test \( \alpha=0.05 \). 

Figure \ref{fig:monte_carlo_power_10_models} presents the power - the average number of models in the final set of models - of both the MCS and the CMCS for \( n \in \{150, 500, 1000\} \) observations in total, i.e., an expected number of conditional observations of \( n/2 \). The DGP implies that, in expectation, the power is the same in state 1 and state 2, though the CMCS will contain different models according to the state of the economy: it holds that \(\mathbb{E}[d^{1}_{ij}] = -\mathbb{E}[d^{2}_{ij}] \) for all \( i,j, \in \{1, \ldots, m \}\). Varying the conditional expected loss differentials and the state probability \( p \) would imply that the power of the CMCS is different in the two states.

\begin{figure}[h!]
    \centering
    \caption{Power properties}
	\includegraphics[trim={0.5cm 0.1cm 0.1cm 0.2cm},clip,width=1\linewidth]{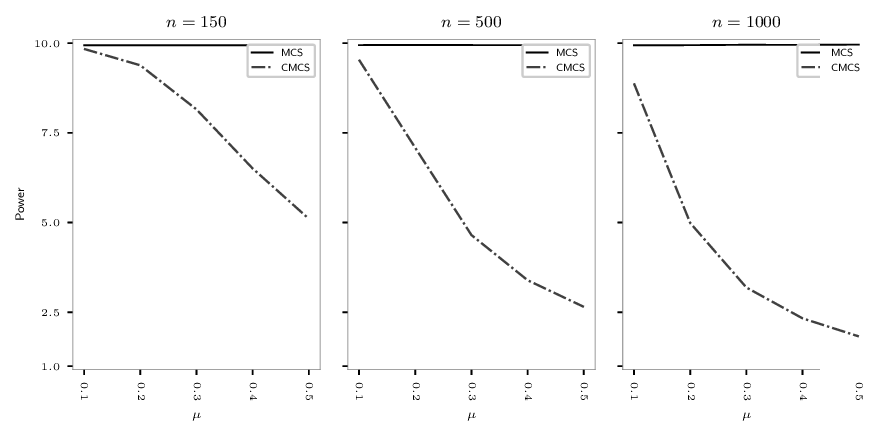} 
    \label{fig:monte_carlo_power_10_models}
    \caption*{ \scriptsize The figure displays the simulated power of the tests defined as the number of models in the confidence set averaged over 5000 simulation iterations, of the CMCS procedure (in dashed-dotted line) and the unconditional MCS procedure (in solid line). The x-axis shows \( \mu \) which controls the difference in the CPA between the competing methods according to the DGP in Equation (\ref{eq:monte_carlo_losses}). The number of methods is set to \(m=10\), the level of the test \( \alpha=0.05 \), and the two conditioning states have equal probability. The results are shown for state 1, but the DGP implies that the power curves are the same in expectation in state 2. The unconditional predictive ability is the same for all methods.}
\end{figure}

Both testing procedures display the desired behaviour. On the one hand, the unconditional MCS rarely eliminates any methods, as the null of unconditional equal predictive ability holds. 
On the other hand, the CMCS clearly has power, as the number of methods in the confidence set decreases with $\mu$. The proposed CMCS considerably narrows down the set of models for which we cannot reject equal conditional predictive ability. Comparing the different panels on Figure \ref{fig:monte_carlo_power_10_models} indicates that the power of the CMCS test increases with the sample size $n$, as the number of methods in the confidence set, for a given distance form the null $\mu$, is lower, the larger the $n$.

\subsection{Two models, two states: Wald vs. t-test}\label{sec:two_models_Wald_vs_t} 

The simulation study below illustrates the difference between the proposed CMCS approach and the two-step-approach of \textcites{Giacomini.2006}{Borup.2024}, i.e., the combination of a Wald type test and a decision rule. 
For clarity and ease of interpretation we consider a simple case of two forecasting methods and two disjoint states, where the forecast evaluation is based on the sign of the loss differential for a given state: if the Wald test provides evidence to reject the null of equal predictive ability, the procedure defines the MCS as the method that has the smaller average loss conditional on the current state. 

As above, we define a state variable \( S_{t} \), where \(P(S_{t} = 1)=p \), and \(P(S_{t} = 2)=1-p \), with  \( p\in \{0.2, 0.5\} \). For \( n=500 \) observations, we generate \(10,000 \) time series of losses according to \( \mathbf{L}_{t+1} = \boldsymbol{\mu}_{t+1} + \boldsymbol{\epsilon}_{t+1} \), where 

\begin{equation}\label{eq:monte_carlo_losses_2_models}
	\boldsymbol{\mu}_{t+1} = 
	\begin{cases} (-\mu, \mu)^{\prime},  \text{ if } S_{t} = 1, \\
        v \cdot (\mu, -\mu)^{\prime}, \text{ if } S_{t} = 2,
\end{cases} 
\end{equation}
where \( v \in [0, 1] \) and the error terms \(\boldsymbol{\epsilon}_{t+1} \) follow a multivariate normal \( \mathcal{N} (\mathbf{0}, \Phi) \) with \( \Phi = \mathbb{I}_{m_{0}} \).

We consider \( \mu \in [0.05, 0.3 ] \) and first focus on the case \(p=0.5\), when both states occur with equal probability.
In state 1, method 1 is superior, and the difference in the conditional predictive ability is \( \Delta_{1} =- 2 \mu \).  In state 2, if \( v =1\), the difference in the conditional predictive ability is \( \Delta_{2} = 2 \mu \), i.e. method 2 is superior, whereas the unconditional predictive ability is the same. For \( v \in (0, 1) \), method 2 is superior in state 2, while method 1 is unconditionally superior. For \(v=0 \), both methods have the same conditional equal predictive ability in state 2, and method 1 is uniformly weakly superior.

We use \( c^{\star} \) as shorthand notation for \( c( \mathcal{D}^{\star}, 1-\alpha) \), the critical value of a distribution \( \mathcal{D}^{\star} \) when we perform a test at the \( 1-\alpha \) confidence level. \( P(|T^{\star}| > c^{\star}) \) denotes the simulated probability with which the absolute value of test statistic \(T^{\star} \) exceeds the critical value \(c^{\star}\) of its distribution under the null hypothesis, which is the rejection rate of the test. \( T^{1}\) and \( T^{2}\) denote the test statistics for the statewise t-tests, while \( T^{h} \) is the Wald-type test statistic used by \textcite{Giacomini.2006}. The test statistics are defined in Section \ref{sec:motivating_example}, Equations \ref{eq:conditional_T_statistic} and \ref{eq:Wald_statistic}. We report results for the level of the test \( \alpha=0.05\). 

Table \ref{tab:simulation_wald_vs_t_0_p50} presents the results for \(v=0 \), where in state 1, we would expect the method confidence set to contain the model 1. The further is the distance from the null, captured by increasing $\Delta_{1}$ in table rows, the greater is the power of the t-test for state 1. The t-test for state 2, as expected, produces nominal rejection rates of about 5\%, as both methods have the same CPA for this state. The null of equal predictive ability holds in state 2, and conditional on a rejection of the Wald test, we always eliminate one method, which implies a type 1 error of 100\% for this testing procedure. 

Table \ref{tab:simulation_wald_vs_t_p_50} presents the results for \( v \in (0, 1]\).
Firstly, for small values of \( v \), \( v < 0.5 \), we observe that using the Wald type test has less power than the t test in state 1. Compared to the statewise t-test, the critical value increases with the degrees of freedom of the \( \chi^{2}_{df} \)-distribution, and state 2 fails to provide sufficient evidence to exceed this value.  

Secondly, small values of \( v \) are associated with frequently estimating the wrong sign of the expected loss differential in state 2. If \( \mu \) is large enough such that the differences in predictive ability in state 1 drive the rejections of the Wald test, this then leads to large ``type III'' or directional errors in state 2: the decision rule eliminates the superior or equally accurate method 2 far more often than the nominal level of the test. 

Thirdly, if \( v \) is large, the distance from the conditional null in state 2 is larger and the sign of the predicted loss differential is usually estimated correctly. In this case, the Wald type testing procedure based on insufficient evidence has more power than the statewise t tests, without committing large type III errors. 

These simulation results illustrate the need for additional testing as formalized in the closed testing procedure (\textcite{Marcus.1976}), and also highlight the loss of power of the Wald type test if the difference in predictive ability in one state is very small.  

\renewcommand{\arraystretch}{0.9}
\begin{table}[h!]
	\scriptsize
	\centering
	\caption{Rejection rate of Wald and statewise t-tests for $v=0$, $p=0.5$}
	\begin{tabularx}{0.60\textwidth}{l X X X }
	\toprule
						&\( P(|T^{1}| >c^{1}) \) & \( P(|T^{2}| >c^{2}) \) & \( P(|T^{h}| > c^{h})\)    \\
	\toprule
	\( \Delta_{1}=-0.1\)   &  0.126 &  0.050 &  0.102   \\
	\( \Delta_{1}=-0.2\)   &  0.354 &  0.054 &  0.282   \\
	\( \Delta_{1}=-0.3\)   &  0.657 &  0.050 &  0.550   \\
	\( \Delta_{1}=-0.4\)   &  0.880 &  0.052 &  0.797   \\
	\( \Delta_{1}=-0.5\)   &  0.976 &  0.054 &  0.945   \\
	\( \Delta_{1}=-0.6\)   &  0.996 &  0.054 &  0.990   \\
	\bottomrule
	\end{tabularx}
    \caption*{\scriptsize This table displays the rejection rates of the state-wise t-tests vs. the Wald type test. Holding \( v=0 \) fixed, which implies equal conditional predictive ability in state 2, the parameter \( \Delta_{1} \) characterizes the DGP as in Equation \ref{eq:monte_carlo_losses_2_models}. \( P(|T^{\star}| > c^{\star}) \) denotes the simulated probability with which the absolute value of test statistic \(T \) exceeds the corresponding critical value \(c\), which is the rejection rate of the test. \( T^{1}\) and \( T^{2}\) denote the test statistics for the statewise t-tests, while \( T^{h} \) is the Wald-type test statistic. The level of all tests is \( \alpha=0.05 \). The simulation is based on 10000 sequences of $n = 500$ losses, and both state 1 and state 2 have probability \(p=0.5 \).}
    \label{tab:simulation_wald_vs_t_0_p50}
\end{table}

\begin{table}[h!]
    \scriptsize
    \centering
    \caption{Rejection rate of Wald and statewise t-tests for $v\in (0,1]$ and \(p=0.5\)}
	\begin{tabularx}{.9\textwidth}{l X X X X }
		&\(v\) & \( P(|T^{2}| >c^{2}) \) & \( P\left(|T^{h}| > c^{h}\right)\)  & \( P(|T^{h}| > c^{h} \cap \bar{d^{2}} < 0) \)   \\
\toprule
\multirow{3.0}{*}{\( \Delta_{1}=-0.1\)}           &  0.050 &  0.054 &  0.100 &  0.049   \\
		&  0.100 &  0.052 &  0.099 &  0.046   \\
		&  0.250 &  0.057 &  0.107 &  0.038   \\
\multirow{3.0}{*}{\( P(|T^{1}| >c^{1})=0.126\)}   &  0.500 &  0.071 &  0.114 &  0.029   \\
		&  0.750 &  0.095 &  0.132 &  0.023   \\
		&  1.000 &  0.120 &  0.149 &  0.013   \\
\bottomrule
\multirow{3.0}{*}{\( \Delta_{1}=-0.2\)}           &  0.050 &  0.054 &  0.268 &  0.123   \\
		&  0.100 &  0.053 &  0.276 &  0.115   \\
		&  0.250 &  0.071 &  0.293 &  0.088   \\
\multirow{3.0}{*}{\( P(|T^{1}| >c^{1})=0.351\)}   &  0.500 &  0.125 &  0.334 &  0.051   \\
		&  0.750 &  0.222 &  0.402 &  0.028   \\
		&  1.000 &  0.356 &  0.507 &  0.013   \\
\bottomrule
\multirow{3.0}{*}{\( \Delta_{1}=-0.3\)}           &  0.050 &  0.056 &  0.547 &  0.248   \\
		&  0.100 &  0.055 &  0.553 &  0.217   \\
		&  0.250 &  0.096 &  0.572 &  0.141   \\
\multirow{3.0}{*}{\( P(|T^{1}| >c^{1})=0.659\)}   &  0.500 &  0.233 &  0.661 &  0.061   \\
		&  0.750 &  0.435 &  0.761 &  0.019   \\
		&  1.000 &  0.663 &  0.864 &  0.005   \\
\bottomrule
\multirow{3.0}{*}{\( \Delta_{1}=-0.4\)}           &  0.050 &  0.053 &  0.810 &  0.342   \\
		&  0.100 &  0.062 &  0.812 &  0.304   \\
		&  0.250 &  0.129 &  0.835 &  0.165   \\
\multirow{3.0}{*}{\( P(|T^{1}| >c^{1})=0.882\)}   &  0.500 &  0.351 &  0.891 &  0.040   \\
		&  0.750 &  0.660 &  0.951 &  0.008   \\
		&  1.000 &  0.888 &  0.987 &  0.001   \\
\bottomrule
\multirow{3.0}{*}{\( \Delta_{1}=-0.5\)}           &  0.050 &  0.056 &  0.946 &  0.396   \\
		&  0.100 &  0.071 &  0.950 &  0.328   \\
		&  0.250 &  0.174 &  0.958 &  0.147   \\
\multirow{3.0}{*}{\( P(|T^{1}| >c^{1})=0.975\)}   &  0.500 &  0.506 &  0.981 &  0.024   \\
		&  0.750 &  0.841 &  0.995 &  0.002   \\
		&  1.000 &  0.975 &  1.000 &  0.000   \\
\bottomrule
\multirow{3.0}{*}{\( \Delta_{1}=-0.6\)}           &  0.050 &  0.059 &  0.992 &  0.397   \\
		&  0.100 &  0.080 &  0.990 &  0.316   \\
		&  0.250 &  0.215 &  0.994 &  0.120   \\
\multirow{3.0}{*}{\( P(|T^{1}| >c^{1})=0.997\)}   &  0.500 &  0.654 &  0.998 &  0.009   \\
		&  0.750 &  0.943 &  1.000 &  0.000   \\
		&  1.000 &  0.997 &  1.000 &  0.000   \\
\bottomrule
\end{tabularx}
    \raggedright \\
    \caption*{\scriptsize This table displays the rejection rates of the statewise t-tests and the Wald type test. The parameters \( \Delta_{1} \) and \( v \) describe the DGP as in Equation \ref{eq:monte_carlo_losses_2_models}. \( P(|T^{\star}| > c^{\star}) \) denotes the simulated probability with which the absolute value of test statistic \(T \) exceeds the corresponding critical value \(c\), which is the rejection rate of the test. \( T^{1}\) and \( T^{2}\) denote the test statistics for the statewise t-tests, while \( T^{h} \) is the Wald-type test statistic. The last column displays the simulated probability with which method 2, which is superior in state 2, is eliminated in state 2 after applying the Wald-type test and decision rule by \textcite{Giacomini.2006}. The level of all tests is \( \alpha=0.05 \). The simulation is based on 10000 sequences of 500 losses, and both state 1 and state 2 have probability \(p=0.5 \). The two methods have the same unconditional predictive ability if \( v=1\).}
    \label{tab:simulation_wald_vs_t_p_50}
\end{table}


For $p = 0.2$, Tables \ref{tab:simulation_wald_vs_t_0_p20} and \ref{tab:simulation_wald_vs_t_p_20} present the results for \(v=0 \) and \( v \in (0, 1]\), respectively. In addition to the above mentioned mechanisms, these tables show the loss of power of the Wald test if there are large differences in predictive ability in the state that occurs with smaller probability (here \( p=0.2\) ), while the more frequent state displays only small differences in predictive ability.

\begin{table}[h!]
	\scriptsize
	\centering
	\caption{Rejection rate of Wald and statewise t-tests for $v=0$ and \(p=0.2\)}
	\label{tab:simulation_wald_vs_t_0_p20}
	\begin{tabularx}{0.7\textwidth}{l X X X }
	\toprule
						&\( P(|T^{1}| >c^{1}) \) & \( P(|T^{2}| >c^{2}) \) & \( P(|T^{h}| > c^{h})\)    \\
	\toprule
	\( \Delta_{1}=-0.1\)   &  0.086 &  0.053 &  0.069   \\
	\( \Delta_{1}=-0.2\)   &  0.178 &  0.054 &  0.133   \\
	\( \Delta_{1}=-0.3\)   &  0.329 &  0.052 &  0.238   \\
	\( \Delta_{1}=-0.4\)   &  0.523 &  0.050 &  0.395   \\
	\( \Delta_{1}=-0.5\)   &  0.695 &  0.049 &  0.568   \\
	\( \Delta_{1}=-0.6\)   &  0.845 &  0.049 &  0.743   \\
	\bottomrule
	\end{tabularx}
    \caption*{\scriptsize The table displays the rejection rates of the statewise t-tests and the Wald test. Holding \( v=0 \) fixed, which implies equal conditional predictive ability in state 2, the parameter \( \Delta_{1} \) characterizes the DGP as in Equation \ref{eq:monte_carlo_losses_2_models}. \( P(|T^{\star}| > c^{\star}) \) denotes the simulated probability with which the absolute value of test statistic \(T \) exceeds the corresponding critical value \(c\), which is the rejection rate of the test. \( T^{1}\) and \( T^{2}\) denote the test statistics for the statewise t-tests, while \( T^{h} \) is the Wald-type test statistic. The level of all tests is \( \alpha=0.05 \). The level of all tests is \( \alpha=0.05 \). The simulation is based on 10000 sequences of 500 losses, and state 1 has probability \(p=0.2 \) and state 2 \(1-p=0.8 \).}
\end{table}

\begin{table}[h!]
    \scriptsize
    \centering
    \caption{Rejection rate of Wald and statewise t-tests for $v\in (0,1]$ and \(p=0.2\)}
    \label{tab:simulation_wald_vs_t_p_20}
	\begin{tabularx}{.9\textwidth}{l X X X X }
		&\(v\) & \( P(|T^{2}| >c^{2}) \) & \( P(|T^{h}| > c^{h})\)  & \( P(|T^{h}| > c^{h} \cap \bar{d^{2}} < 0) \)   \\
\toprule
\multirow{3.0}{*}{\( \Delta_{1}=-0.1\)}           &  0.050 &  0.056 &  0.069 &  0.033   \\
		&  0.100 &  0.052 &  0.069 &  0.027   \\
		&  0.250 &  0.058 &  0.075 &  0.024   \\
\multirow{3.0}{*}{\( P(|T^{1}| >c^{1})=0.086\)}   &  0.500 &  0.084 &  0.093 &  0.017   \\
		&  0.750 &  0.118 &  0.116 &  0.009   \\
		&  1.000 &  0.171 &  0.161 &  0.007   \\
\bottomrule
\multirow{3.0}{*}{\( \Delta_{1}=-0.2\)}           &  0.050 &  0.050 &  0.122 &  0.054   \\
		&  0.100 &  0.061 &  0.142 &  0.053   \\
		&  0.250 &  0.081 &  0.151 &  0.032   \\
\multirow{3.0}{*}{\( P(|T^{1}| >c^{1})=0.175\)}   &  0.500 &  0.172 &  0.227 &  0.015   \\
		&  0.750 &  0.329 &  0.352 &  0.006   \\
		&  1.000 &  0.518 &  0.499 &  0.002   \\
\bottomrule
\multirow{3.0}{*}{\( \Delta_{1}=-0.3\)}           &  0.050 &  0.055 &  0.240 &  0.099   \\
		&  0.100 &  0.061 &  0.248 &  0.082   \\
		&  0.250 &  0.121 &  0.298 &  0.045   \\
\multirow{3.0}{*}{\( P(|T^{1}| >c^{1})=0.329\)}   &  0.500 &  0.324 &  0.457 &  0.010   \\
		&  0.750 &  0.608 &  0.671 &  0.002   \\
		&  1.000 &  0.847 &  0.860 &  0.000   \\
\bottomrule
\multirow{3.0}{*}{\( \Delta_{1}=-0.4\)}           &  0.050 &  0.062 &  0.397 &  0.154   \\
		&  0.100 &  0.070 &  0.415 &  0.128   \\
		&  0.250 &  0.168 &  0.498 &  0.056   \\
\multirow{3.0}{*}{\( P(|T^{1}| >c^{1})=0.517\)}   &  0.500 &  0.508 &  0.705 &  0.006   \\
		&  0.750 &  0.853 &  0.905 &  0.000   \\
		&  1.000 &  0.975 &  0.983 &  0.000   \\
\bottomrule
\multirow{3.0}{*}{\( \Delta_{1}=-0.5\)}           &  0.050 &  0.056 &  0.575 &  0.222   \\
		&  0.100 &  0.084 &  0.594 &  0.171   \\
		&  0.250 &  0.238 &  0.690 &  0.058   \\
\multirow{3.0}{*}{\( P(|T^{1}| >c^{1})=0.701\)}   &  0.500 &  0.701 &  0.890 &  0.003   \\
		&  0.750 &  0.963 &  0.986 &  0.000   \\
		&  1.000 &  0.999 &  0.999 &  0.000   \\
\bottomrule
\multirow{3.0}{*}{\( \Delta_{1}=-0.6\)}           &  0.050 &  0.059 &  0.741 &  0.279   \\
		&  0.100 &  0.091 &  0.760 &  0.194   \\
		&  0.250 &  0.325 &  0.845 &  0.047   \\
\multirow{3.0}{*}{\( P(|T^{1}| >c^{1})=0.842\)}   &  0.500 &  0.845 &  0.972 &  0.002   \\
		&  0.750 &  0.994 &  0.999 &  0.000   \\
		&  1.000 &  1.000 &  1.000 &  0.000   \\
\bottomrule
\end{tabularx}
    \caption*{\scriptsize This table displays the rejection rates of the statewise t-tests and the Wald test. The parameters \( \Delta_{1} \) and \( v \) describe the DGP as in Equation \ref{eq:monte_carlo_losses_2_models}. \( P(|T^{\star}| > c^{\star}) \) denotes the simulated probability with which the absolute value of test statistic \(T \) exceeds the corresponding critical value \(c\), which is the rejection rate of the test. \( T^{1}\) and \( T^{2}\) denote the test statistics for the statewise t-tests, while \( T^{h} \) is the Wald-type test statistic. The last column displays the simulated probability with which method 2, which is superior in state 2, is eliminated in state 2 after applying the Wald-type test and decision rule by \textcite{Giacomini.2006}. The level of all tests is \( \alpha=0.05 \). The simulation is based on 10000 sequences of 500 losses, and state 1 has probability \(p=0.2 \) and state 2 \(1-p=0.8 \). The two methods have the same unconditional predictive ability if \( v=0.25\).}
\end{table}
\renewcommand{\arraystretch}{1}  

\FloatBarrier

To provide a theoretical explanation for the smaller power of the Wald test for small values of \( v \), i.e, \( \Delta_{2} \), we provide the following lemma.

\begin{lemma}\label{lemma:two_models_power_wald_type}
	Assume that the DGP follows Equation \ref{eq:monte_carlo_losses_2_models}, where \( \Delta_{1} <0 \) and \( \Delta_{2} \), denote the conditional expected losses in state 1 and state 2, respectively, while \( \sigma^{2} \) denotes the common variance of the conditional losses. Moreover, assume that the covariance matrix \( \Sigma \)  of \( z_{t} = h_{t} d_{t} = (d_{t}, d_{t}\mathbbm{I}_{\{s_{t}=1\}} )^{\prime} \) is known, and that \(n_{1} = p n \), \( n_{2} = (1-p) n \) are fixed. Then, it holds that 
	\begin{align*}
		T^{h} & \equiv n \left[ D + E + F \right] \\
		&=n  \bigl [ p^{2} (\overline{d^{1}})^{2} \dfrac{ \sigma^{2} + p \Delta_{2}^{2} }{ p \sigma^2 \left( (1 - p) \Delta_1^2 + p \Delta_2^2 + \sigma^2 \right)} +2 p(1-p) \bar{d^{1}} \bar{d^{2}} \dfrac{\Delta_{1} \Delta_{2}}{\sigma^2 \left( (1 - p) \Delta_1^2 + p \Delta_2^2 + \sigma^2 \right)} \\
		&+  (1-p)^{2} (\overline{d^{2}})^{2}  \dfrac{ \sigma^{2} + (1-p) \Delta_{1}^{2} }{(1 - p) \sigma^2 \left( (1 - p) \Delta_1^2 + p \Delta_2^2 + \sigma^2 \right)}  \bigr ] \\ 
		&< n_{1} \dfrac{(\overline{d^{1}})^{2}}{\sigma^{2}} + n [E +F], 
	\end{align*}
\end{lemma}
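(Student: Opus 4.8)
The plan is to make the quadratic form defining $T^h$ tractable by reparametrizing the instrument before inverting any matrix. Because the identity $d_t = d_t\mathbbm{I}_{\{s_t=1\}} + d_t\mathbbm{I}_{\{s_t=2\}}$ holds for every $t$, the vector $z_t = (d_t,\, d_t\mathbbm{I}_{\{s_t=1\}})'$ is a fixed invertible linear image of the state-indexed vector $u_t \equiv (d_t\mathbbm{I}_{\{s_t=1\}},\, d_t\mathbbm{I}_{\{s_t=2\}})'$, namely $z_t = A u_t$ with $A = \left(\begin{smallmatrix}1&1\\1&0\end{smallmatrix}\right)$ and $\det A = -1$. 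Writing the known covariance matrix as $\Sigma = A\,\Sigma_u A'$, where $\Sigma_u$ is the covariance matrix of $u_t$, and $\bar z = A\bar u$, the Wald statistic is invariant under this change of basis: $T^h = n\bar z'\Sigma^{-1}\bar z = n\bar u'\Sigma_u^{-1}\bar u$. It therefore suffices to compute $\bar u$ and $\Sigma_u$, which are far simpler than their $z$-counterparts.

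First I would record the first and second moments of $u_t$. With $n_1 = pn$ and $n_2 = (1-p)n$ fixed, the averages are $\bar u = (p\,\bar{d^1},\, (1-p)\,\bar{d^2})'$. The decisive simplification in $\Sigma_u$ is that the indicators are mutually exclusive, so $u_{t,1}u_{t,2} = 0$ and hence the off-diagonal entry is $c \equiv \mathbb{E}[u_{t,1}u_{t,2}] - \mathbb{E}[u_{t,1}]\mathbb{E}[u_{t,2}] = -p(1-p)\Delta_1\Delta_2$; the diagonal entries are $a \equiv p[\sigma^2 + (1-p)\Delta_1^2]$ and $b \equiv (1-p)[\sigma^2 + p\Delta_2^2]$, using $\mathbb{E}[d_t\mid s_t = l] = \Delta_l$ and $\mathrm{var}(d_t\mid s_t=l)=\sigma^2$.

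The crux is the determinant. Expanding $\det\Sigma_u = ab - c^2$, the quartic $\Delta_1^2\Delta_2^2$ terms cancel and leave $\det\Sigma_u = p(1-p)\sigma^2\,K$ with $K \equiv (1-p)\Delta_1^2 + p\Delta_2^2 + \sigma^2$, which is exactly the common denominator appearing in $D$, $E$ and $F$. Substituting the cofactors and $\bar u$ into $n\bar u'\Sigma_u^{-1}\bar u = \tfrac{n}{\det\Sigma_u}\bigl(b\,\bar u_1^2 - 2c\,\bar u_1\bar u_2 + a\,\bar u_2^2\bigr)$ and cancelling the $p$- and $(1-p)$-factors term by term reproduces $n[D+E+F]$ as stated. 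This is the only genuinely tedious step and the main obstacle; the reparametrization to $u_t$ (which makes the raw cross-moment vanish) together with the clean factorization of $\det\Sigma_u$ is what keeps the bookkeeping from becoming unwieldy, since $\det\Sigma_u$ has to come out as $p(1-p)\sigma^2 K$ exactly for the three terms to match.

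Finally, the strict inequality is immediate once the equality is in place. The terms $E$ and $F$ appear identically on both sides, so the claim reduces to $nD < n_1(\bar{d^1})^2/\sigma^2 = pn(\bar{d^1})^2/\sigma^2$. Since $D = p(\bar{d^1})^2(\sigma^2 + p\Delta_2^2)/(\sigma^2 K)$, dividing by the positive factor $p(\bar{d^1})^2/\sigma^2$ (so the inequality is strict precisely when $\bar{d^1}\neq 0$) reduces the claim to $(\sigma^2 + p\Delta_2^2)/K < 1$, i.e. $0 < (1-p)\Delta_1^2$, which holds because $\Delta_1 < 0$ and $p \in (0,1)$. The interpretation, and the reason the bound is informative for Tables \ref{tab:simulation_wald_vs_t_p_50} and \ref{tab:simulation_wald_vs_t_p_20}, is that the weight the Wald statistic places on the state-$1$ evidence $(\bar{d^1})^2$ is strictly below the weight $p/\sigma^2$ that the statewise state-$1$ $t$-statistic places on it, so the pooled Wald test dilutes exactly the signal that the conditional $t$-test isolates.
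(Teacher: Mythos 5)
Your proof is correct and arrives at exactly the paper's expression, but it organizes the algebra in a genuinely different way. The paper's Appendix works directly in the $z_t=(d_t,\, d_t\mathbbm{I}_{\{s_t=1\}})'$ coordinates: it computes the three entries of $\Sigma$ (including the mixture variance $\sigma^2+p(1-p)(\Delta_1-\Delta_2)^2$), inverts via the cofactor formula, expands $n\bar z'\Sigma^{-1}\bar z$ in $\bar d$ and $\bar{z^1}$, substitutes $\bar d = p\,\bar{d^1}+(1-p)\,\bar{d^2}$, and then separately simplifies the three combinations $\Sigma^{-1}_{11}+2\Sigma^{-1}_{12}+\Sigma^{-1}_{22}$, $\Sigma^{-1}_{11}+\Sigma^{-1}_{12}$ and $\Sigma^{-1}_{11}$ that multiply $(\bar{d^1})^2$, $\bar{d^1}\bar{d^2}$ and $(\bar{d^2})^2$. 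You instead pass to $u_t=(d_t\mathbbm{I}_{\{s_t=1\}},\, d_t\mathbbm{I}_{\{s_t=2\}})'$ through the invertible map $z_t=Au_t$ and invoke invariance of the Wald form, so that mutual exclusivity of the indicators kills the raw cross-moment and $\Sigma_u$, its determinant $p(1-p)\sigma^2 K$, and the quadratic form are immediate. This buys a shorter computation and also explains structurally why precisely those three combinations of $\Sigma^{-1}$ entries appear in the paper's $D$, $E$, $F$: they are the entries of $A'\Sigma^{-1}A=\Sigma_u^{-1}$. Your treatment of the inequality coincides with the paper's (reduce to $(\sigma^2+p\Delta_2^2)/K<1$, i.e.\ $(1-p)\Delta_1^2>0$, which uses $\Delta_1<0$), and you are right to note that strictness requires $\bar{d^1}\neq 0$ --- a caveat the paper leaves implicit, though it holds almost surely under the Gaussian DGP.
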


\begin{proof}
	See the Algebra in Appendix \ref{app:wald_type_two_states_simulation}.
\end{proof}

\begin{remark}
	We see that \( n_{1} \dfrac{(\overline{d^{1}})^{2}}{\sigma^{2}} \) is proportional to \( T^{1} \) (see Equation \ref{eq:conditional_T_statistic}). In the special case \( \Delta_{2} = 0 \), it holds that \( E=F=0 \), i.e., \( T^{h} = n D < n_{1} \dfrac{(\overline{d^{1}})^{2}}{\sigma^{2}} \).
	If the critical value \( c(\chi^{2}_{1}, 1-\alpha) < n_{1} \dfrac{(\overline{d^{1}})^{2}}{\sigma^{2}} \), but \( n D < c(\chi^{2}_{2}, 1-\alpha) \), a rejection depends on the evidence that comes from state 2 in terms \( E \) and \( F \). Additionally, \( E \) and \( F \) need to compensate for the inflated denominator in term \( A \). For \( \Delta_{2} \) small enough, the Wald type test thus has less power than the statewise test in state 1. 
\end{remark}

\FloatBarrier
\subsection{Rejection regions} \label{sec:simulation_wald_vs_t_rejection_regions}
To further examine the behaviour of the Wald test and statewise t-tests, we plot the rejection regions as a function of the average conditional out-of-sample loss for the same DGP as in Section \ref{sec:two_models_Wald_vs_t}. We fix \( n=500 \), and vary \( p \), i.e., the probability of being in state 1, and the expected conditional loss differentials \( \mathbf{\Delta}=(\Delta_{1}, \Delta_{2})^{\prime} \), to convey how state probabilities and distance from the null impact the power of the tests.

Moreover, we use the true value of the covariance matrix of the loss differentials obtained from the simulations to construct the test statistics, and set the number of observations per state to their expected values \( np \) and \( n(1-p)\), respectively. We derive the expression for the Wald type test statistic in Appendix \ref{app:wald_type_two_states_simulation}. 

On each panel of Figure \ref{fig:rejection_regions}, the black square depicts the expected value of the conditional loss differentials \( \Delta \). To illustrate the distance from the null of conditional equal predictive ability, the outer black bars around the black squares correspond to the 2.5\% and 97.5\% quantiles of the sample mean of the loss differentials, and the inner bars to the 25\% and 75\% quantiles respectively. 
The areas where the null of equal conditional predictive ability is not rejected of the statewise t-tests are shaded in grey, while the non-rejection area of the Wald test lies within the black ellipsis.

The two upper panels on Figure \ref{fig:rejection_regions} show the case when there are large conditional differences in the predictive ability, while the two bottom panels illustrate a setting with a large expected loss difference in state 1, and a much smaller one in state 2, i.e., when the predictive ability difference is close to the null in state 2. 
In the first case, as illustrated by the upper panels, the Wald test has sufficient power to reject the null of CPA, and the decision rule yields reliable results as the sign of the expected conditional loss differential is usually estimated correctly. 

The lower left panel demonstrates one pitfall of combining the Wald type test with the elimination rule as in \textcite{Giacomini.2006}: while the rejection is driven by the large conditional loss differential in state 1, the superior model is eliminated very frequently in state 2 if the sign of the loss differential was estimated incorrectly. 

The lower right panel illustrates the loss of power if there are large differences in predictive ability in the state that occurs with smaller probability, \( p=0.3\) in this case, while the more frequent state displays only small differences in the predictive ability. This setting relates to the states of crisis / non-crisis in financial markets: the forecasting methods' performances differ strongly in times of crisis, for which relatively few observations are available, while their performance is very similar during calm periods. Relative to the Wald type test, the loss of power of the statewise t-test for state 1 is much smaller.

\begin{figure}[H]
    \centering
    \caption{Rejection region of Wald-type and statewise t-tests}
    \includegraphics[trim = {0.5cm, 1cm, .2cm, 2cm},clip, width=.85\linewidth]{./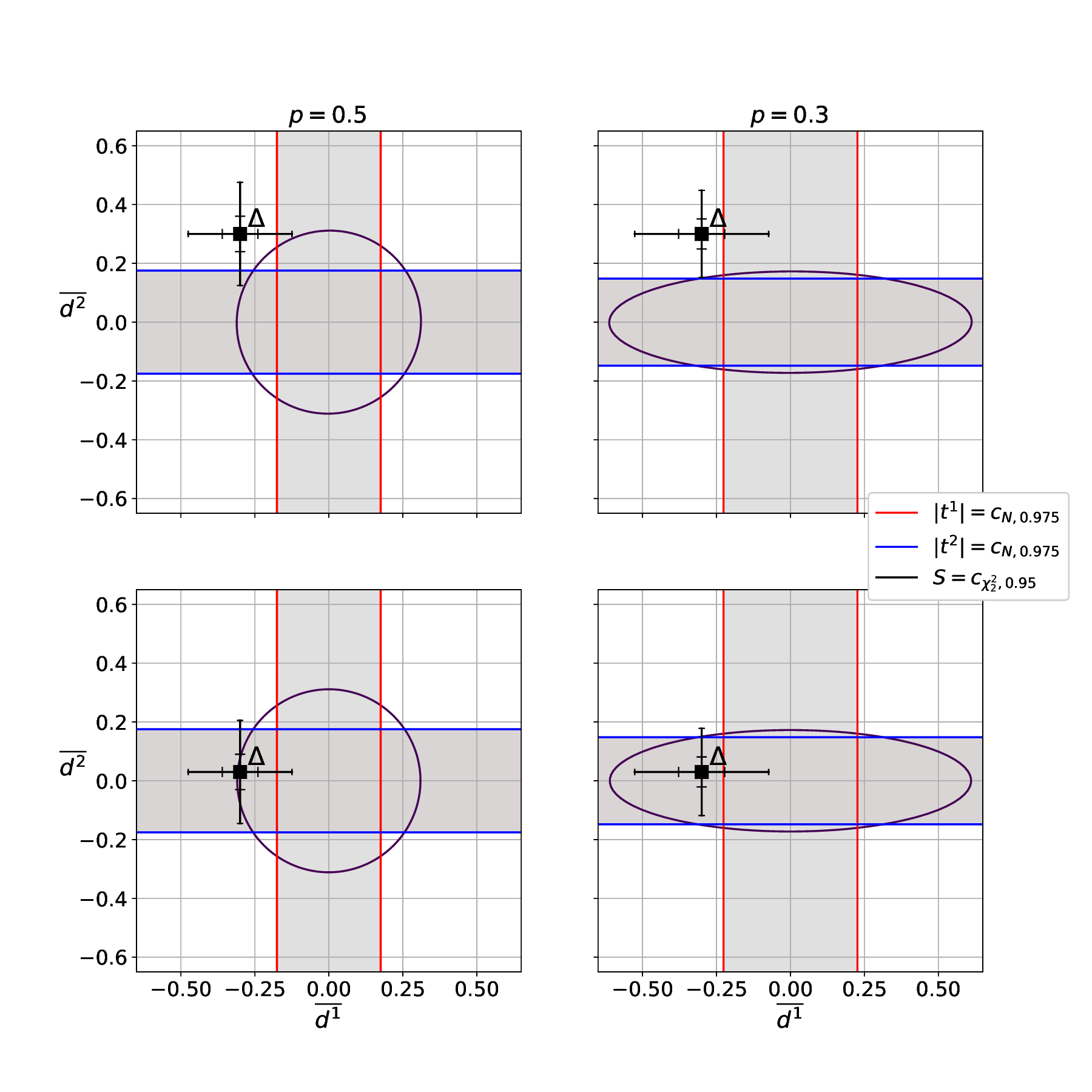}   
    \label{fig:rejection_regions}   	
	\caption*{\scriptsize This plot visualizes the rejection regions of the statewise t-tests (outside the respective grey area) and of the joint Wald type test (outside the black ellipsis). The solid square depicts the expected conditional loss differential \( \Delta \). The level of all tests is \( \alpha=0.05 \). The sample size is \( n=500\), while the number of observations in state 1 is \(n_{1}=p \cdot 500 \) and in state 2 \(n_{2}=(1-p) \cdot 500 \).}
\end{figure}



\section{Empirical Evidence}
\label{sec:empir}

This section demonstrates how CMCS can be used in the context of forecasting and stress testing Expected Shortfall (ES). 

\subsection{Downside measures of market risk}
Value-at-Risk (VaR) has been used by the Basel Committee on Banking Supervision (BCBS) to assess market risk since 1996 and it measures the worst possible loss of a portfolio, which can happen with small probability. Empirically VaR is estimated as a 1\% quantile of the profit and loss distribution of a portfolio. 
Let  $r_t = \ln(P_t) - \ln(P_{t-1})$ be the daily log return process,
where $P_t$ is the closing price on day $t$, $t= 1, \ldots T$.
We assume that daily returns $r_t$ follow a specific conditional distribution with a cumulative distribution function $D_t$: $r_t|\mathcal{F}_{t-1} \sim d_t$, where $d_t$ is the probability density function corresponding to $D_t$ and  $\mathcal{F}_{t-1}$ denotes past filtration. A one-day ahead forecast of the return quantile at a level $p$, known as the VaR, is defined as $\text{VaR}_{t+1}(p) = D_{t+1}^{-1}(p)$. 
ES has been introduced in 2016 as a more ``prudent'' risk measure, which estimates the expected value of the loss, conditional on VaR threshold being crossed, usually at the $p = 2.5\%$ level. The ES is then defined as 
$\text{ES}_{t+1}(p) = \ec{r_{t+1}}{r_{t+1}<\text{VaR}_{t+1}(p),\mathcal{F}_{t}}$.
\textcite{BaselCommittee.2019} requires the banks to report one day ahead VaR forecasts at $p  =1\%$ and 10 days ahead ES forecasts at $p = 2.5\%$. The forecasts of VaR are checked to fit the definition of a quantile with various tests, e.g.~out of 100 reported forecasts the nominal level of 1\% implies 1 day where the loss is lower than the provided VaR. 
The quality of VaR forecasts determines the scaling, or the penalty factor for the banks' capital requirements, which in turn are measured based on ES.
Additionally ES forecasts should be stress-tested based on the risk factors of different liquidity horizons. 
BCBS specifies several liquidity horizons and associated risk factors for the ES stress-testing, which are summarized in Table \ref{tab:LH}.

\begin{table}[h!]
	\centering
	\caption{Liquidity horizons and examples of risk factors}
	\begin{tabular}{l|l|p{12cm}}\hline \hline
	Horizon $j$	& $LH_j$ & Possible risk factors \\ \hline
        $j=1$   & 10    & Sovereign bond interest rates, foreign exchange rates \\
		$j=2$   & 20    & VIX, small cap equity prices  \\
		$j=3$   & 40    & Foreign exchange rate volatility, Corporate bonds credit spread \\
		$j=4$   & 60    & Commodities price \\
		$j=5$   & 120   & Other commodities price and volatility \\ \hline \hline
	\end{tabular}%
	\label{tab:LH}%
    \caption*{\scriptsize More information on the liquidity horizons is available in e.g. Minimum capital requirements for market risk (2019) by BCBS, p.98.}
\end{table}%
The period of ``stress'' is defined as the most severe time period of 252 days, where the risk factor was at its highest, and the considered sample of the risk factor should include the time period starting from January 2007. 
The coefficient to be reported, $ES_{BCBS}$ is defined in \eqref{eq:ES_bcbs}:
\begin{equation}
	ES_{BCBS} = \sqrt{ (ES(j = 1))^{2} + \sum_{j\geq2} \left( ES(j) \sqrt{\frac{LH_{j} - LH_{j-1}}{T}}\right)^{2}},
	 \label{eq:ES_bcbs}
\end{equation}
where $ES(j = k), \; k = 2,...,5$ denotes the ES forecast for the ``stress'' period of the risk factor at liquidity horizon $LH_k$, $T$ denotes the forecasting horizon, e.g. 1 or 10 days. The scaling factors $LH_j$ are defied in Table \ref{tab:LH}. Table \ref{tab:LH_ours} in the Appendix reports risk factors implemented in the empirical study below. 

\begin{figure}[h!]
	\caption{Illustration of stress periods.}
	\centering
	\includegraphics[trim={0cm 0cm 0cm 0cm},clip,width = 0.8\textwidth]{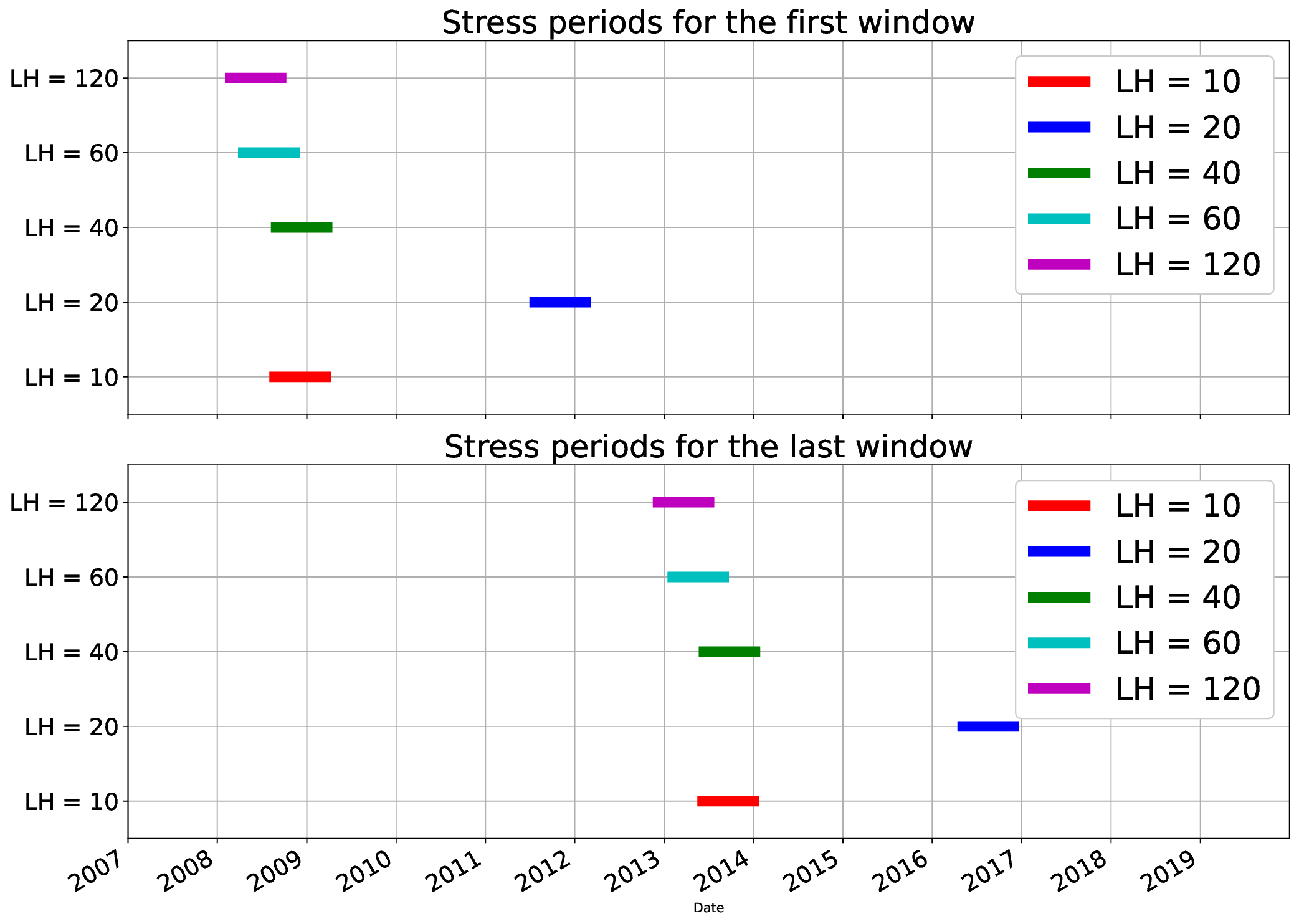}\vspace{-5pt}
	\caption*{\scriptsize Coloured lines correspond to stress periods for different liquidity horizons. Panels correspond to the evaluation periods.}
	\label{fig:stress}
\end{figure}
Figure \ref{fig:stress} provides a graphical illustration of the stress period associated with different liquidity horizons. Notably, the more ``stressful'' periods for risk factors such as volatility or small cap prices are quite different from the high risk periods associated with interest or foreign exchange rates. Given the heterogeneity of the risk factors we expect the conditional forecast combination decomposition vary significantly accords $LH$.

In this paper we consider daily return data of 30 large cap stock returns. The details about the stocks and some selective descriptive statistics are presented in Appendix \ref{app:data}. 
The testing period, which is used to construct the forecast combinations and to stress test the methods, is from January 2000 until December 2014 and the performance of forecast combinations is evaluated during the period from January 2015 until December 2019 ($H = 1250$).
We consider 20 models for the forecast combination, which are estimated on a 2 or 4 years of data prior to 2014 in a rolling window fashion, i.e.~model parameters are updated with every shift of the estimation and testing windows. 

\begin{table}[h!]
	\centering
	\caption{Models considered for the forecast combination}
	\begin{tabular}{p{2cm}|p{3cm}|p{9cm}} \hline \hline
		Model name & Estimation window length & Parameters \\ \hline
		GARCH & T = 500, 1000 & innovation distributions: Normal, Student-t \\
		EGARCH & T = 500, 1000 & innovation distributions: Normal, Student-t \\
		RiskMetrics & T = 500, 1000 & fixed parameter $\lambda = 0.97$ \\
		EVT   & T = 500, 1000 & Tail parameters are estimated on returns standardized by conditional volatilities of GARCH(1,1) with Normal innovations \\
		HS    & T = 500, 1000 &  \\
		FHS   & T = 500, 1000 & VaR and ES are estimated on returns standardized by conditional volatilities from  GARCH(1,1), EGARCH(1,1) with Normal innovations and Risk Metrics \\ \hline \hline
	\end{tabular}%
	\label{tab:m_i}%
\end{table}%

The model choice is motivated by the empirical properties of the daily return data, which exhibit strong overkurtosis and non-zero skewness. We also include historical simulation (HS) and filtered historical simulation (FHS) models, which are widespread in the professional world \parencite{BaroneAdesi.1999}. The details on model implementation are presented in Table \ref{tab:m_i}.

\subsection{Forecasting and stress-testing ES}  \label{sec:ea_es} 

To construct a forecast combination for different stress periods we compare the proposed CMCS test with the dynamic forecast combination (DFC) by \textcite{Borup.2024}. Both testing procedures are implemented based on the consistent loss function for the ES, which is elicitable jointly with VaR \parencite{Fissler.2016}:
\begin{align}\label{eq:es.loss}
	 L\left(VaR_{t}(p), ES_t(p), r_t\right) & = G_1(VaR_t)\cdot \left(H_t - p\right) - H_t\cdot G_1(r_t) \\
	 &+ G_2(ES_t) \left(ES_t-VaR_t +H_t\cdot\frac{VaR_t-r_t}{p}\right)  \nonumber\\
	 &- \xi_2(ES_t) + a(r_t), \nonumber
\end{align} 
where $p$ denotes the probability level and equals to 2.5\% as specified in \textcite{BaselCommittee.2019}, $H_t = \1(r_t\leq VaR_t) $ denotes the hit at time $t$, $G_1(x) = x$, $G_2(x) = \exp(x)/(1+exp(x))$, $\xi_2(x) = \ln(1+ \exp(x))$, $a = \ln(2)$ \parencite[e.g.][]{Taylor.2020}.
For both CMCS and Wald test we use the significance level $\alpha = 0.05$.
The CMCS is implemented with a bootstrap of $B = 100$ iterations. The forecast combination for both testing procedures is constructed as an average of the ES forecasts which remain in the Model Confidence Set. 

Appendix Tables \ref{tab:cmcs1} and \ref{tab:cmcs10} report average out-of-sample conditional ES forecasts for all liquidity horizons and across all stocks for the forecast combination based on the proposed CMCS for 1 and 10 days ahead respectively. The last column of the table reports the coefficient $ES_{BCBS}$ defined in \eqref{eq:ES_bcbs} and indicates the overall riskiness of the investment across the liquidity horizons. The variability in ES forecasts across liquidity horizons highlights that assets do not exhibit uniform risk exposure. Some assets might exhibit increased risk exposure to shorter-term factors (e.g.~interest rates, equity prices), while others might be more susceptible to longer-term factors (e.g.~commodity prices). 
 For instance, the ES forecast for \texttt{BAC} (Bank of America) is more sensitive to stress periods in foreign exchange rates and sovereign bond interest rates, rather than commodity prices.
In contrast, a stock like \texttt{GE} exhibits relatively small changes across liquidity horizons, suggesting it is less sensitive to the associated risk factors.

\begin{figure}[h!]
	\caption{Time series of 10-day ahead ES forecasts for BAC: CMCS.}
	\centering
	\includegraphics[trim={0cm 0cm 0cm 0cm},clip,width = 0.8\textwidth]{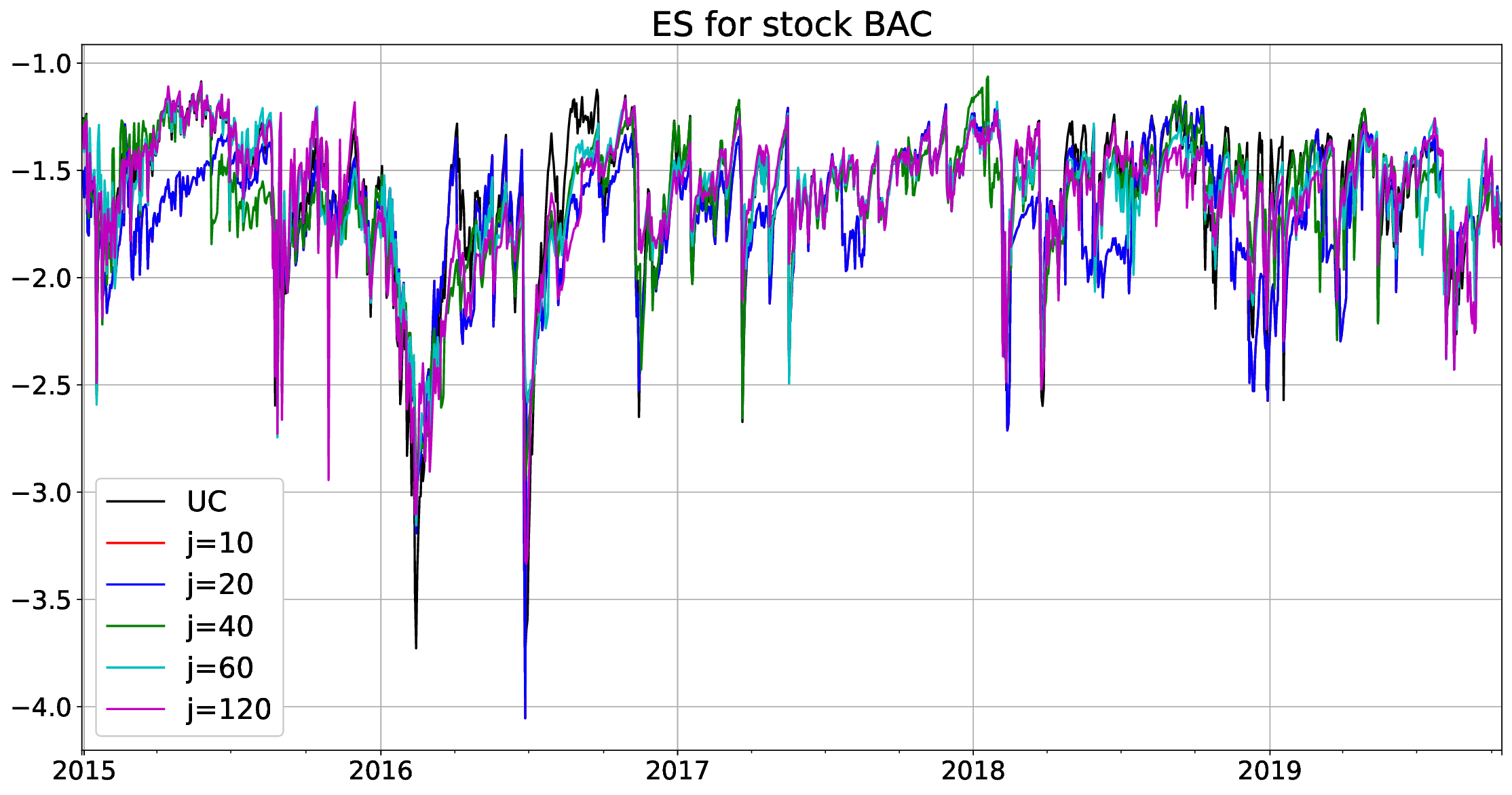}\vspace{-5pt}
	\caption*{\scriptsize 
		Coloured lines correspond to 10-day ahead ES forecasts for different liquidity horizons.}
	\label{fig:MCS_BAC}
\end{figure}

\begin{figure}[h!]
	\caption{Time series of 10-day ahead ES forecasts for GE: CMCS.}
	\centering
	\includegraphics[trim={0cm 0cm 0cm 0cm},clip,width = 0.8\textwidth]{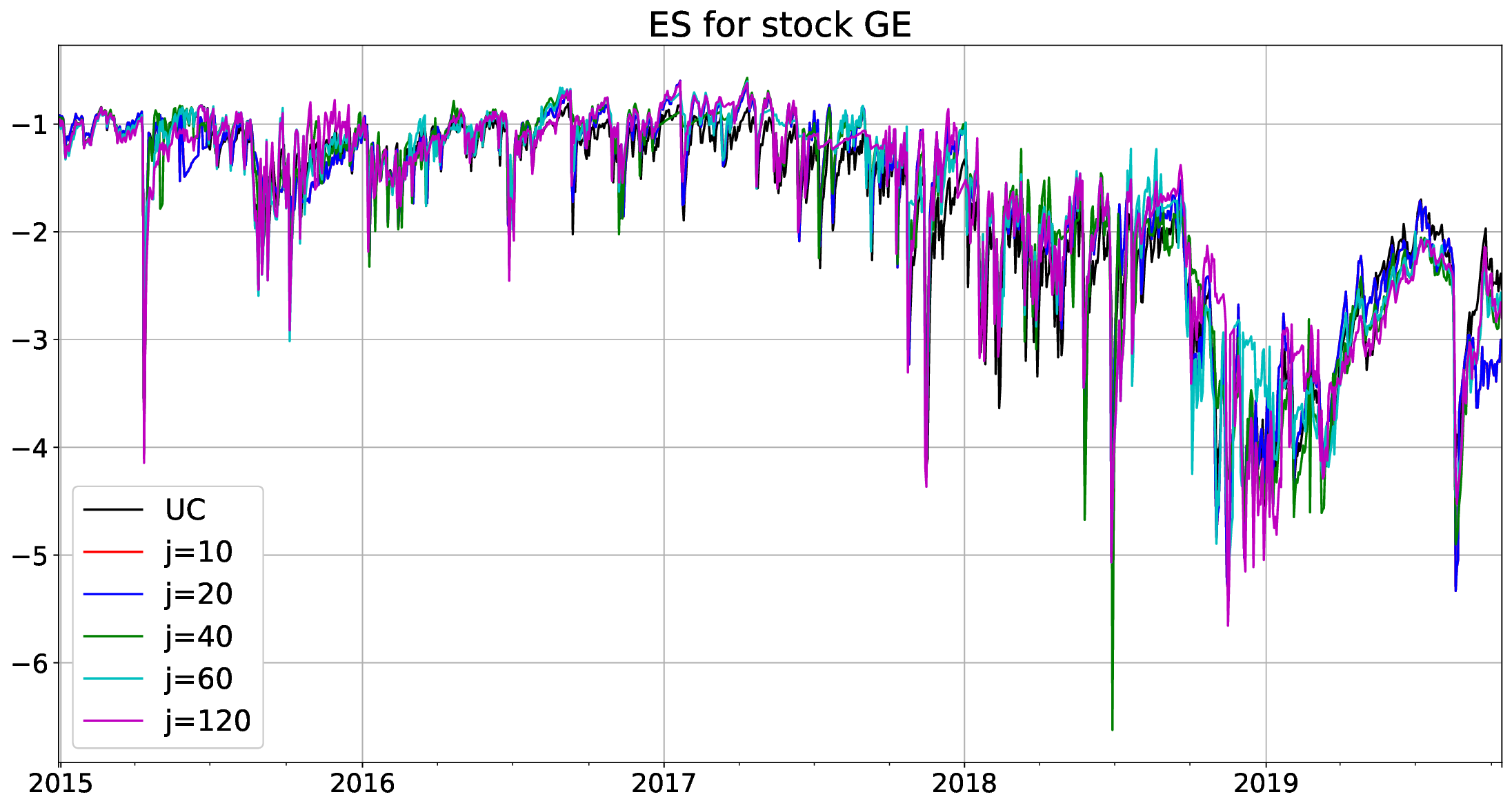}\vspace{-5pt}
	\caption*{\scriptsize 
		Coloured lines correspond to 10-day ahead ES forecasts for different liquidity horizons.}
	\label{fig:MCS_GE}
\end{figure}

The time series of the CMCS ES forecast combination for the \texttt{BAC} and \texttt{GE} stocks are presented in Figures \ref{fig:MCS_BAC} and \ref{fig:MCS_GE}. Lines of different colours correspond to liquidity horizons. 

The plots illustrate that the CMCS dynamically adopts the composition of the forecast: a large expected loss for the \texttt{BAC} stock at the beginning of the year is associated with the $LH = 20$ depicted as a blue line, whereas at the end of the year all of the risk factors, including the unconditional forecast, are of a similar level. The forecast of expected loss for the \texttt{GE} on the contrary does not highlight a specific risk factor to which the stock could be particularly sensitive to.

\begin{figure}[h!]
	\caption{Model selection for 1 day ahead ES forecasts for different liquidity horizons: CMCS.}
	\centering
	\includegraphics[trim={0cm 0cm 0cm 0cm},clip,width = 0.9\textwidth]{MCS_stocks.eps}
	\caption*{\scriptsize Heatmaps correspond to the liquidity horizons as specified in Table \ref{tab:LH}. On each heatmap the x-axis corresponds to stocks and the y-axis to models. The heatmap cells correspond to the average number of the out-of-sample periods where the model is included in the MCS. The warmer the colour, the more frequently was the model selected for the forecast combination.}
	\label{fig:MCS_stocks}
\end{figure}

\begin{figure}[h!]
	\caption{Model selection for 10 days ahead ES forecasts for different liquidity horizons: CMCS.}
	\centering
	\includegraphics[trim={0cm 0cm 0cm 0cm},clip,width = 0.9\textwidth]{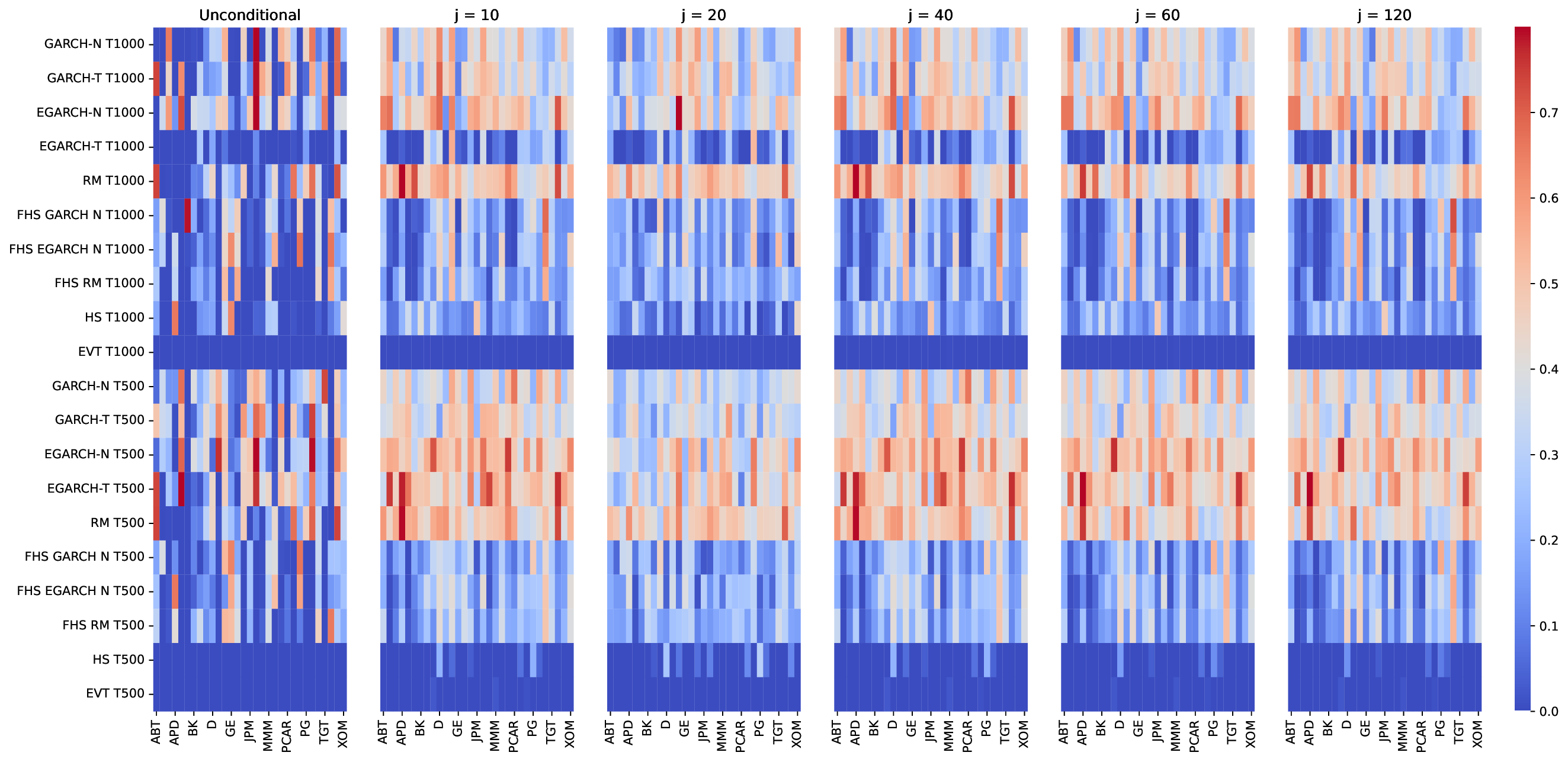}
	\caption*{\scriptsize Heatmaps correspond to the liquidity horizons as specified in Table \ref{tab:LH}. On each heatmap the x-axis corresponds to stocks and the y-axis to models. The heatmap cells correspond to the average number of the out-of-sample periods where the model is included in the MCS. The warmer the colour, the more frequently was the model selected for the forecast combination.}
	\label{fig:MCS_stocks10}
\end{figure}

Figures \ref{fig:MCS_stocks} and \ref{fig:MCS_stocks10} provide insights into the forecast composition of the CMCS. The heatmaps depict the out-of-sample average frequency of a given model (in y-axis) to be selected for a forecast combination for a given stock (in x-axis) for 1 and 10 days ahead forecasts respectively. The heatmaps illustrate that the model selection considerably differs across \textit{(i)} forecasting horizons (Figures \ref{fig:MCS_stocks} and \ref{fig:MCS_stocks10}); \textit{(ii)} risk factors (panels on each figure); and \textit{(iii)} across stocks (patterns of each heatmap). The warmer the colour, the more frequently the method was included in the CMSC.
Notably, the conservative Extreme Value Theory (EVT) models are excluded from the CMCS for all stocks and liquidity horizons and ES models which are based on historical simulation are in general selected fewer times. Furthermore, the forecasting horizon plays a role, e.g.~an EGARCH with Student-t innovations estimated on a large window of 1000 observations is rarely picked up for 10 day forecasting horizon, compared to the 1 day forecasting horizon. The selection of the same model, but estimated on a different estimation window, differs as well. Figure \ref{fig:MCS_stocks10} demonstrates that for the 10 days forecasting horizon the EGARCH-t model estimated on a shorter window of 500 observations is selected in the CMCS rather frequently, compared to the same model estimated on 1000 observations. Furthermore, the conditional model selection drastically differs from the unconditional one, the latter being more sparse. The conditional method selection varies largely across stocks, as well as liquidity horizons: comparison of the second and the last heatmaps on Figure \ref{fig:MCS_stocks} demonstrate that for different risk factors a different set of models is selected for the same stock. 
These results indicate that the model performance in downside risk measurement is very heterogeneous and call for data-driven methods in the risk management.

Next, we consider an alternative MCS-based forecast combination - the DFC by \textcite{Borup.2024}. The test statistics of DFC crucially depends on the covariance estimator of the loss differentials, see \textcite{Borup.2017} for the details. Given the daily frequency of the ES forecasts, we implement the HAC estimator with a truncated kernel with a truncation lag of a quarter of the sample and report the results based on sample covariance estimator and larger truncation lag in Appendix \ref{app:ea}. The results for the sample covariance estimator, as expected, differ substantially from the HAC estimator, however the performance of the testing procedure for a different truncation lag is comparable to the results reported below.

\begin{figure}[h!]
	\caption{Method selection for 10 day ahead ES forecasts for different liquidity horizons:  DFC with truncation $T/4$.}
	\centering
	\includegraphics[trim={0cm 0cm 0cm 0cm},clip,width = 0.9\textwidth]{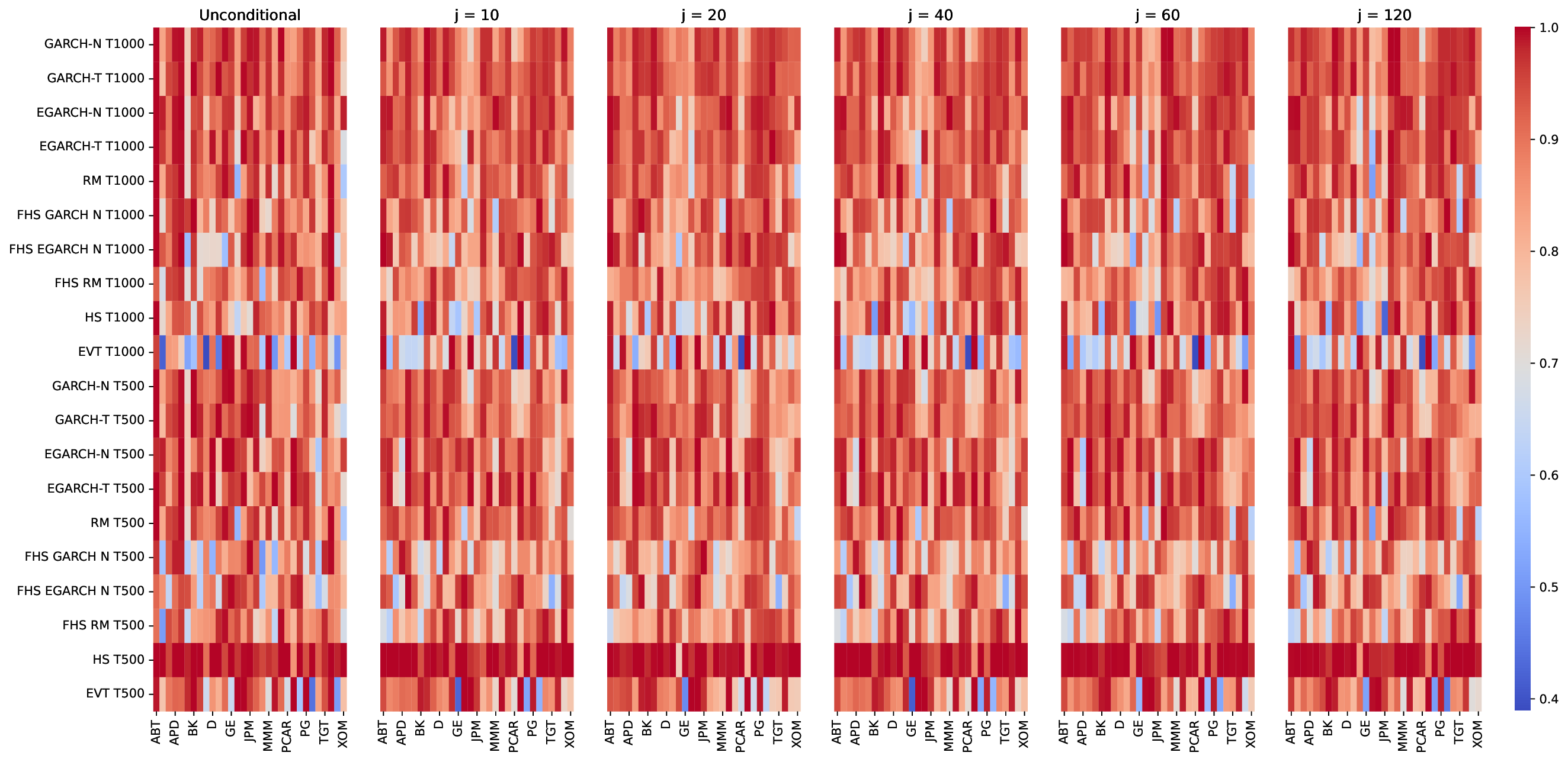}
	\caption*{\scriptsize Heatmaps correspond to the liquidity horizons as specified in Table \ref{tab:LH}. On each heatmap the x-axis corresponds to stocks and the y-axis to models. The heatmap cells correspond to the average number of the out-of-sample periods where the model is included in the MCS. The warmer the colour, the more frequently was the model selected for the forecast combination.}
	\label{fig:BT1_stocks}
\end{figure}

Appendix Tables \ref{tab:bt_sample1}  - \ref{tab:bt_tr2_10} report the average ES forecasts of the DFC method for each stock. The reported forecasts for the expected loss are more negative compared to the results for the proposed CMCS in Tables \ref{tab:cmcs1} and \ref{tab:cmcs10}. This difference can be explained by the model selection patterns in Figure \ref{fig:BT1_stocks}: the more conservative ES forecasts of EVT and RiskMetrics are often included in the DFC forecast. Overall, the testing procedure seem to lack in power to discriminate the performance of candidate models during the stress periods of different risk factors. The regulatory requirements of the ES stress testing imply that the considered sample in which the forecasts are stress-tested is considerably large, however the ``conditioning'' period of stress is relatively short, making it difficult for the test of DFC, which is based on the whole sample, to differentiate between the models. 

\begin{figure}[h!]
	\caption{Time series of 10-day ahead ES forecasts for BAC:  DFC with truncation $T/4$.}
	\centering
	\includegraphics[trim={0cm 0cm 0cm 0cm},clip,width = 0.8\textwidth]{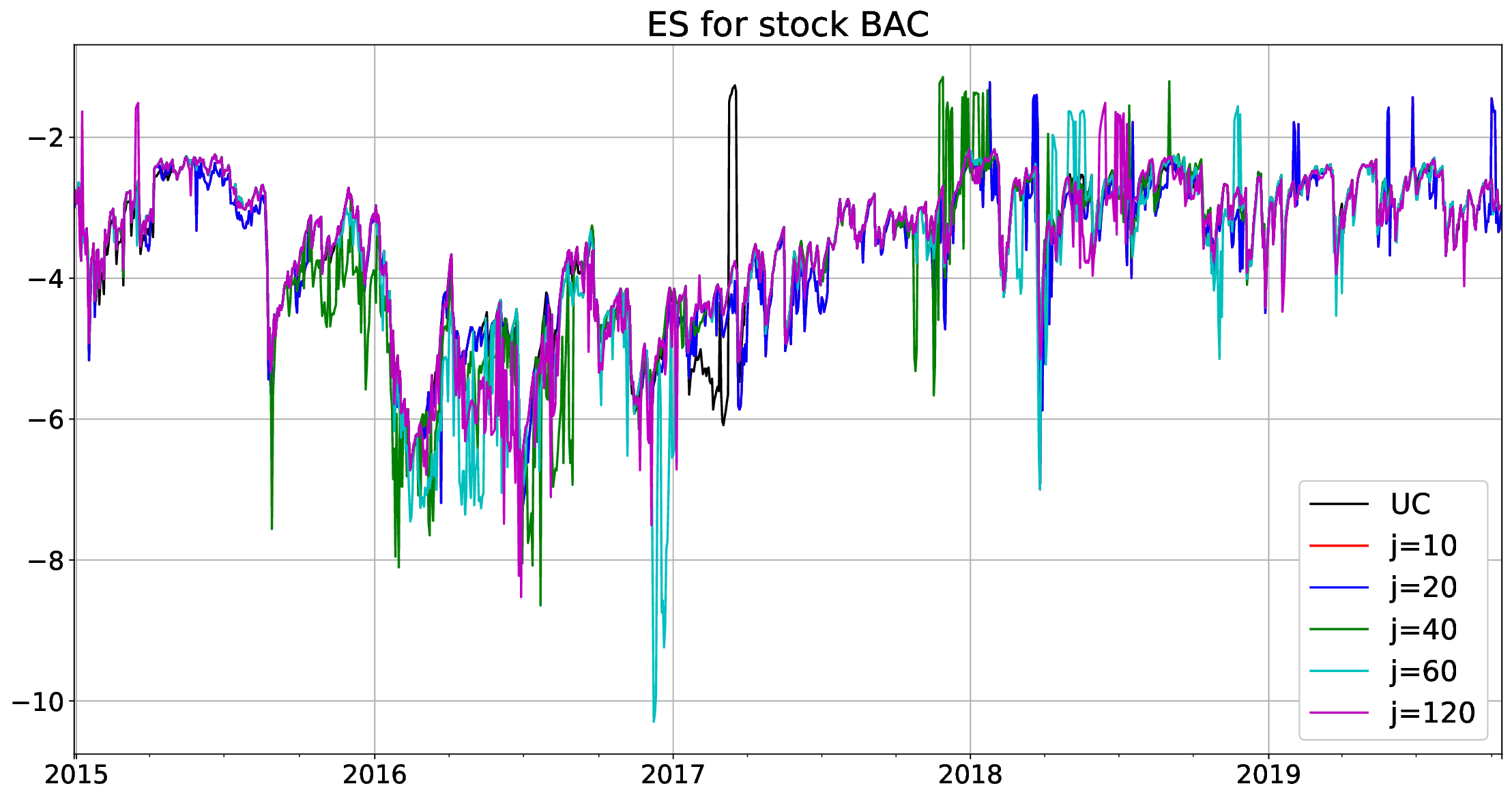}\vspace{-5pt}
	\caption*{\scriptsize 
		Coloured lines correspond to 10-day ahead ES forecasts for different liquidity horizons.}
	\label{fig:BT1_BAC}
\end{figure}
As a consequence, the resulting forecast combination includes under-performing models, which results in over-estimation of downside risk. Providing overly conservative foreacsts implies that the financial institutions would have to increase their reserves drastically.  Furthermore, as depicted in Figure \ref{fig:BT1_BAC}, the forecast combination is unstable over time, as the test does not exclude volatile forecasts. The performace of the Wald test-bsed DFC for different forecasting horizons and truncation lags are presented in Appedix \ref{app:ea}. These results imply that (i) the testing procedure is rather sensitive to the truncation lag; (ii) for the HAC estimator the test lacks power to differentiate between the methods, risk factors and forecasting horizon. For the down-side risk stress testing and forecasting we therefore suggest to rely on the proposed state-wise CMCS, which offers a robust tool for data driven conditional method selection.


%
\section{Conclusions}
\label{sec:conclusions}
This paper proposes a new methodology for evaluating the forecasting performance of econometric models under varying financial conditions, introducing the Conditional Method Confidence Set (CMCS). The CMCS extends the traditional Model Confidence Set (MCS) of \textcite{Hansen.2011} by incorporating conditional forecast evaluation, allowing for model comparison based on specific economic regimes. This approach is particularly relevant in the context of stress-testing financial downside risk measures, such as Value-at-Risk (VaR) and Expected Shortfall (ES), which are critical for regulatory compliance under the Basel Accords.

Our theoretical contributions include the development of a state-dependent testing procedure that is asymptotically valid. By allowing forecast accuracy to be evaluated conditionally on specific market states, the CMCS approach offers a more robust and adaptable method for selecting models in volatile financial environments.

Empirically, we demonstrate the efficacy of the CMCS in stress-testing ES forecasts across different liquidity horizons implied by a variety of risk factors and economic conditions. The results show that different assets react differently to stress scenarios for different risk factors, and that the proposed CMCS method provides a consistent framework for identifying the best-performing models under these conditions. Importantly, the conditional testing approach captures the heterogeneous nature of risk exposure across assets and forecasting horizons, offering more insights into forecast performance compared to traditional unconditional methods.

Future work may examine different means of accounting for false discoveries, such as controlling the false discovery rate or the mixed directional FWER. This may improve the power of the CMCS and its ability to lead to more accurate forecasts, but is beyond the scope of this paper. Moreover, exploring other test statistics to construct the CMCS could result in improved finite sample properties. Finally, investigating additional empirical applications may help assess the usefulness of our proposed method. 
\newpage

\addcontentsline{toc}{section}{\hspace{0.6cm}References}
\newpage
\printbibliography

@article{Hansen.2011,
abstract = {[This paper introduces the model confidence set (MCS) and applies it to the selection of models. A MCS is a set of models that is constructed such that it will contain the best model with a given level of confidence. The MCS is in this sense analogous to a confidence interval for a parameter. The MCS acknowledges the limitations of the data, such that uninformative data yield a MCS with many models, whereas informative data yield a MCS with only a few models. The MCS procedure does not assume that a particular model is the true model; in fact, the MCS procedure can be used to compare more general objects, beyond the comparison of models. We apply the MCS procedure to two empirical problems. First, we revisit the inflation forecasting problem posed by Stock and Watson (1999), and compute the MCS for their set of inflation forecasts. Second, we compare a number of Taylor rule regressions and determine the MCS of the best regression in terms of in-sample likelihood criteria.]},
author = {Hansen, Peter R. and Lunde, Asger and Nason, James M.},
year = {2011},
title = {The model confidence set},
url = {http://www.jstor.org/stable/41057463},
pages = {453--497},
volume = {79},
number = {2},
issn = {0012-9682},
journal = {Econometrica}
}

@article{Li.2020,
author = {Li, Jia and Liao, Zhipeng},
year = {2020},
title = {Uniform nonparametric inference for time series},
pages = {38--51},
volume = {219},
number = {1},
issn = {03044076},
journal = {Journal of Econometrics},
doi = {10.1016/j.jeconom.2019.09.011}
}

@article{Li.2022,
author = {Li, Jia and Liao, Zhipeng and Quaedvlieg, Rogier},
year = {2022},
title = {Conditional Superior Predictive Ability},
pages = {843--875},
volume = {89},
number = {2},
issn = {0034-6527},
journal = {The Review of Economic Studies},
doi = {10.1093/restud/rdab039}
}

@article{Patton.2020,
author = {Patton, Andrew J.},
year = {2020},
title = {Comparing Possibly Misspecified Forecasts},
pages = {796--809},
volume = {38},
number = {4},
issn = {0735-0015},
journal = {Journal of Business {\&} Economic Statistics},
doi = {10.1080/07350015.2019.1585256}
}

@book{Politis.1991,
author = {Politis, Dimitris N. and Romano, Joseph P.},
year = {1991},
title = {A circular block-resampling procedure for stationary data},
publisher = {{Purdue University. Department of Statistics}}
}

@article{Romano.2005,
author = {Romano, Joseph P. and Wolf, Michael},
year = {2005},
title = {Stepwise multiple testing as formalized data snooping},
pages = {1237--1282},
volume = {73},
number = {4},
issn = {0012-9682},
journal = {Econometrica}
}

@article{Taylor.2020,
author = {Taylor, James W.},
year = {2020},
title = {Forecast combinations for value at risk and expected shortfall},
pages = {428--441},
volume = {36},
number = {2},
issn = {01692070},
journal = {International Journal of Forecasting},
doi = {10.1016/j.ijforecast.2019.05.014}
}

@book{White.2001,
author = {White, Halbert},
year = {2001},
title = {ASYMPTOTIC THEORY FOR ECONOMETRICIANS: REVISED EDITION},
address = {SAN DIEGO 525 B STREET, SUITE 1900},
publisher = {{ACADEMIC PRESS}},
isbn = {0-12-746652-5},
series = {ECONOMIC THEORY, ECONOMETRICS, AND MATHEMATICAL ECONOMICS}
}

@article{Gourieroux.2021,
author = {Gourieroux, C. and Monfort, A.},
year = {2021},
title = {Model risk management: Valuation and governance of pseudo-models},
pages = {1--22},
volume = {17},
issn = {24523062},
journal = {Econometrics and Statistics},
doi = {10.1016/j.ecosta.2020.08.001}
}

@article{BaroneAdesi.1999,
author = {Barone--Adesi, Giovanni and Giannopoulos, Kostas and Vosper, Les},
year = {1999},
title = {VaR without correlations for portfolios of derivative securities},
pages = {583--602},
volume = {19},
number = {5},
issn = {0270-7314},
journal = {Journal of Futures Markets}
}

@misc{BaselCommittee.2019,
author = {{Basel Committee}},
year = {2019},
title = {Minimum Capital Requirements for Market Risk, Technical Report, Basel Committee on Banking Supervision},
url = {https://www.bis.org/bcbs/publ/d457.pdf}
}

@article{Diebold.1995,
abstract = {The authors propose and evaluate explicit tests of the null hypothesis of no difference in the accuracy of two competing forecasts. In contrast to previously developed tests, a wide variety of accuracy measures can be used (in particular, the loss function need not be quadratic and need not even be symmetric) and forecast errors can be non-Gaussian, nonzero mean, serially correlated, and contemporaneously correlated. Asymptotic and exact finite sample tests are proposed, evaluated, and illustrated.},
author = {Diebold, Francis and Mariano, Roberto},
year = {1995},
title = {Comparing Predictive Accuracy},
url = {https://EconPapers.repec.org/RePEc:bes:jnlbes:v:13:y:1995:i:3:p:253-63},
pages = {253--263},
volume = {13},
number = {3},
issn = {0735-0015},
journal = {Journal of Business {\&} Economic Statistics}
}

@article{Fissler.2016,
author = {Fissler, Tobias and Ziegel, Johanna F.},
year = {2016},
title = {Higher order elicitability and Osband's principle},
pages = {1680--1707},
volume = {44},
number = {4},
issn = {0090-5364},
journal = {The Annals of Statistics}
}

@article{Giacomini.2010,
author = {Giacomini, Raffaella and Rossi, Barbara},
year = {2010},
title = {Forecast comparisons in unstable environments},
pages = {595--620},
volume = {25},
number = {4},
issn = {08837252},
journal = {Journal of Applied Econometrics},
doi = {10.1002/jae.1177}
}

@article{Gneiting.2007,
author = {Gneiting, Tilmann and Raftery, Adrian E.},
year = {2007},
title = {Strictly Proper Scoring Rules, Prediction, and Estimation},
pages = {359--378},
volume = {102},
number = {477},
issn = {0162-1459},
journal = {Journal of the American Statistical Association},
doi = {10.1198/016214506000001437}
}

@TechReport{Borup.2017,
  author    = {Daniel Borup and Martin Thyrsgaard},
  title     = {Statistical tests for equal predictive ability across multiple forecasting methods},
  year      = {2017},
  month     = may,
  number    = {2017-19},
  type      = {WorkingPaper},
  abstract  = {We develop a multivariate generalization of the Giacomini-White tests for equal conditional predictive ability. The tests are applicable to a mixture of nested and non-nested models, incorporate estimation uncertainty explicitly, and allow for misspecification of the forecasting model as well as non-stationarity of the data. We introduce two finite-sample corrections, leading to good size and power properties. We also provide a two-step Model Confidence Set-type decision rule for ranking the forecasting methods into sets of indistinguishable conditional predictive ability, particularly suitable in dynamic forecast selection. In the empirical application we consider forecasting of the conditional variance of the S&P500 Index.},
  day       = {17},
  keywords  = {forecast comparison, multivariate tests of equal predictive ability, Giacomini-White test, Diebold-Mariano test, conditional forecast combination},
  language  = {English},
  publisher = {Institut for {\O}konomi, Aarhus Universitet},
  school    = {Institut for {\O}konomi, Aarhus Universitet},
  series    = {CREATES Research Paper},
}

@book{Lehmann.2022,
author = {Lehmann, E. L. and Romano, Joseph P.},
year = {2022},
title = {Testing Statistical Hypotheses},
address = {Cham},
publisher = {{Springer International Publishing}},
isbn = {978-3-030-70577-0},
doi = {10.1007/978-3-030-70578-7}
}

@Article{White.1984,
  author    = {White, Halbert and Domowitz, Ian},
  journal   = {Econometrica},
  title     = {Nonlinear Regression with Dependent Observations},
  year      = {1984},
  issn      = {0012-9682},
  month     = jan,
  number    = {1},
  pages     = {143},
  volume    = {52},
  doi       = {10.2307/1911465},
  publisher = {JSTOR},
}

@Article{Marcus.1976,
  author    = {Marcus, Ruth and Eric, Peritz and Gabriel, K. R.},
  journal   = {Biometrika},
  title     = {On closed testing procedures with special reference to ordered analysis of variance},
  year      = {1976},
  issn      = {1464-3510},
  number    = {3},
  pages     = {655--660},
  volume    = {63},
  doi       = {10.1093/biomet/63.3.655},
  publisher = {Oxford University Press (OUP)},
}

@Article{Giacomini.2006,
  author   = {Giacomini, Raffaella and White, Halbert},
  journal  = {Econometrica},
  title    = {Tests of Conditional Predictive Ability},
  year     = {2006},
  issn     = {0012-9682},
  number   = {6},
  pages    = {1545--1578},
  volume   = {74},
  abstract = {[We propose a framework for out-of-sample predictive ability testing and forecast selection designed for use in the realistic situation in which the forecasting model is possibly misspecified, due to unmodeled dynamics, unmodeled heterogeneity, incorrect functional form, or any combination of these. Relative to the existing literature (Diebold and Mariano (1995) and West (1996)), we introduce two main innovations: (i) We derive our tests in an environment where the finite sample properties of the estimators on which the forecasts may depend are preserved asymptotically. (ii) We accommodate conditional evaluation objectives (can we predict which forecast will be more accurate at a future date?), which nest unconditional objectives (which forecast was more accurate on average?), that have been the sole focus of previous literature. As a result of (i), our tests have several advantages: they capture the effect of estimation uncertainty on relative forecast performance, they can handle forecasts based on both nested and nonnested models, they allow the forecasts to be produced by general estimation methods, and they are easy to compute. Although both unconditional and conditional approaches are informative, conditioning can help fine-tune the forecast selection to current economic conditions. To this end, we propose a two-step decision rule that uses current information to select the best forecast for the future date of interest. We illustrate the usefulness of our approach by comparing forecasts from leading parameter-reduction methods for macroeconomic forecasting using a large number of predictors.]},
  doi      = {10.1111/j.1468-0262.2006.00718.x},
  url      = {http://www.jstor.org/stable/4123083},
}

@TechReport{Richter.2020,
  author = {Richter, Stefan and Smetanina, Ekaterina},
  title  = {Forecast Evaluation and Selection in Unstable Environments},
  year   = {2020},
  series = {Working Paper Chicago Booth},
}

@Article{Borup.2024,
  author    = {Borup, Daniel and Eriksen, Jonas N. and Kjær, Mads M. and Thyrsgaard, Martin},
  journal   = {Management Science},
  title     = {Predicting Bond Return Predictability},
  year      = {2024},
  issn      = {1526-5501},
  month     = feb,
  number    = {2},
  pages     = {931--951},
  volume    = {70},
  doi       = {10.1287/mnsc.2023.4713},
  publisher = {Institute for Operations Research and the Management Sciences (INFORMS)},
}

@Article{Zhu.2022,
  author    = {Zhu, Yinchu and Timmermann, Allan},
  journal   = {Journal of Econometrics},
  title     = {Conditional Rotation between Forecasting Models},
  year      = {2022},
  number    = {2},
  pages     = {329--347},
  volume    = {231},
  publisher = {Elsevier},
}

@Misc{Arnold.2024,
  author        = {Sebastian Arnold and Georgios Gavrilopoulos and Benedikt Schulz and Johanna Ziegel},
  title         = {Sequential model confidence sets},
  year          = {2024},
  archiveprefix = {arXiv},
  eprint        = {2404.18678},
  primaryclass  = {stat.ME},
  url           = {https://arxiv.org/abs/2404.18678},
}

@Book{White.1994,
  author    = {White, Halbert},
  publisher = {Cambridge University Press},
  title     = {Estimation, inference and specification analysis},
  year      = {1994},
  number    = {22},
}

@Article{Goncalves.2002,
  author    = {Gon{\c{c}}alves, S{\'\i}lvia and White, Halbert},
  journal   = {Econometric Theory},
  title     = {The bootstrap of the mean for dependent heterogeneous arrays},
  year      = {2002},
  number    = {6},
  pages     = {1367--1384},
  volume    = {18},
  publisher = {Cambridge University Press},
}

@Article{Ellis.2022,
  author   = {Scott Ellis and Satish Sharma and Janusz Brzeszczyński},
  journal  = {Journal of Financial Stability},
  title    = {Systemic risk measures and regulatory challenges},
  year     = {2022},
  issn     = {1572-3089},
  pages    = {100960},
  volume   = {61},
  abstract = {This paper discusses different definitions of systemic risk and identifies the challenges, which regulators face in addressing this phenomenon. We conducted a systematic literature review of 4859 abstracts to categorize the various methodologies developed to measure systemic risk. In total, 60 systemic risk measures proposed post-2000 have been critically appraised to inform academics and regulators of their practical applications and model vulnerabilities. This review suggests that most of these methods focus on individual financial institutions rather than on system stability. Those methodologies directly reflect the current regulations, which aim to ensure individual institutions’ soundness. As macro-prudential regulation evolves, policy-makers face the issues of understanding contagion and how regulations should be implemented. This paper also discusses new systemic risk and regulatory challenges resulting from the current COVID-19 pandemic.},
  doi      = {https://doi.org/10.1016/j.jfs.2021.100960},
  keywords = {Systemic risk, Systematic literature review, Data requirements, Macro-prudential regulation, COVID-19 pandemic},
  url      = {https://www.sciencedirect.com/science/article/pii/S1572308921001194},
}

\newpage
\begin{appendix}
\section{Appendix}
\label{sec:appendix}

\subsection{Assumptions on the statewise testing procedure}\label{sec:assumptions_on_testing_procedure}

The CMCS procedure is based on an equivalence test \( \delta_{\mathcal{M}} \) and an elimination rule \( \epsilon_{\mathcal{M}} \). An equivalence test,  \( \delta_{\mathcal{M}} \) is used to test for the null of CEPA \( H^{l}_{0, \mathcal{M}} \) for any \( \mathcal{M}^{l} \subset \mathcal{M}^{\cdot, 0} \). The elimination rule \( \epsilon_{\mathcal{M}} \) identifies the \( e \in \mathcal{M}^{l} \) that we eliminate if we reject \( H^{l}_{0} \). For \( \delta_{\mathcal{M}} \) and \( \epsilon_{\mathcal{M}} \), we suppress \( (\cdot) ^{l} \) for ease of notation, though the pairs of equivalence test and elimination rule could differ per state. 

Regarding notation, \( \delta_{\mathcal{M}} = 0 \) refers to the case when \( H^{l}_{0, \mathcal{M}} \) cannot be rejected, and \( \delta_{\mathcal{M}} = 1 \) refers to the case when it is rejected. Note that the former also implies that the testing procedure is halted. In the latter case, the notation \( \epsilon_{\mathcal{M}} = j \) indicates that method \( j \) is eliminated from the current set of candidate methods \( \mathcal{M}^{l} \). 

\subsubsection*{Asymptotic size and power}\label{subsec:mcs_asymptotics_conditional}
\begin{assumption}\label{ass:ass1hln_conditional}
	For any \( \mathcal{M}^{l} \subset \mathcal{M}^{\cdot, 0}\), \( l =1, \ldots, d \), we assume about \( (\delta_{\mathcal{M}}, \epsilon_{\mathcal{M}}) \), that 
	\begin{enumerate}[label=(\alph*)]
		\item \( \displaystyle \limsup_{n^{l} \to \infty} P(\delta_{\mathcal{M}}=1 | H^{l}_{0, \mathcal{M}}) \leq \alpha  \),  
		\item \( \displaystyle \lim_{n^{l} \to \infty} P(\delta_{\mathcal{M}}=1 | H^{l}_{A, \mathcal{M}}) = 1  \) and 
		\item\( \displaystyle \lim_{n^{l} \to \infty} P(\epsilon_{\mathcal{M}} \in \mathcal{M}^{l, *} | H^{l}_{A, \mathcal{M}}) = 0  \),
	\end{enumerate}
	where the case \( \delta_{\mathcal{M}}=1 \) indicates elimination.
\end{assumption}

These assumptions are rather standard for hypothesis tests and may be verbalized as follows. (a) requires the asymptotic level not to exceed \( \alpha \), (b) requires the asymptotic power to be 1, and (c) requires that a superior object \( i^{*} \in \mathcal{M}^{l, *} \) is not eliminated as \( n^{l} \to \infty \) as long as there are inferior methods in \( \mathcal{M}^{l} \).

\begin{theorem}[Properties of the conditional MCS \( \widehat{\mathcal{M}}^{l,*}_{1-\alpha} \)]\label{th:th1hln_conditional}
	Given Assumption 1, for each \( l =1, \ldots, d \) it holds that 
	\begin{enumerate}
		\item[(i)] \( \displaystyle \liminf_{n^{l} \to \infty} P(\mathcal{M}^{l,*} \subset \widehat{\mathcal{M}}^{l,*}_{1-\alpha} )  \geq 1 - \alpha \) and 
		\item[(ii)] \(  \displaystyle \liminf_{n^{l} \to \infty} P(i \in \widehat{\mathcal{M}}^{l,*}_{1-\alpha} ) = 0  \text{ for all } i \notin \mathcal{M}^{l,*}   \).
	\end{enumerate} 
\end{theorem}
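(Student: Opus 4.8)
The plan is to follow the iterative structure of the CMCS procedure and mirror the argument of \textcite{Hansen.2011}, applied state-wise to the subsample indexed by $\tau \in \{1, \ldots, n^{l}\}$. Recall that the CMCS is obtained by repeatedly testing $H^{l}_{0,\mathcal{M}}$ on the current candidate set $\mathcal{M}$: if $\delta_{\mathcal{M}} = 0$ the procedure halts and returns $\widehat{\mathcal{M}}^{l,*}_{1-\alpha} = \mathcal{M}$, while if $\delta_{\mathcal{M}} = 1$ we remove $\epsilon_{\mathcal{M}}$ and iterate. Since Assumption \ref{ass:probability_of_states} guarantees $n^{l} \to \infty$ as $n \to \infty$, the asymptotic statements in Assumption \ref{ass:ass1hln_conditional} apply to each state, and the two parts of the theorem can be treated separately.

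For part (i), I would first observe that a superior object $i^{*} \in \mathcal{M}^{l,*}$ can only leave the set if some rejection step eliminates it. Tracking the procedure, consider the first step at which a superior object is removed; up to that point only inferior objects have been eliminated, so the candidate set satisfies $\mathcal{M}^{l,*} \subseteq \mathcal{M}$. I would then split into two cases. If $\mathcal{M} \supsetneq \mathcal{M}^{l,*}$, then inferior objects remain, so $H^{l}_{A,\mathcal{M}}$ holds and Assumption \ref{ass:ass1hln_conditional}(c) forces $P(\epsilon_{\mathcal{M}} \in \mathcal{M}^{l,*}) \to 0$; hence a superior object is asymptotically not the one eliminated. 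If $\mathcal{M} = \mathcal{M}^{l,*}$, then because any $i, j \in \mathcal{M}^{l,*}$ satisfy $\mu^{l}_{ij} \leq 0$ and $\mu^{l}_{ji} = -\mu^{l}_{ij} \leq 0$, all pairwise differentials vanish and $H^{l}_{0,\mathcal{M}^{l,*}}$ is a true null, so Assumption \ref{ass:ass1hln_conditional}(a) bounds the rejection probability by $\alpha$ in the limit. Combining the two cases yields $\limsup_{n^{l} \to \infty} P(\mathcal{M}^{l,*} \not\subset \widehat{\mathcal{M}}^{l,*}_{1-\alpha}) \leq \alpha$, which is the claim.

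For part (ii), I would argue that any inferior $i \notin \mathcal{M}^{l,*}$ is purged with probability tending to one. As long as $i$ remains in the candidate set $\mathcal{M}$, the presence of this inferior object means $H^{l}_{A,\mathcal{M}}$ holds, so by Assumption \ref{ass:ass1hln_conditional}(b) the test rejects with probability $\to 1$, and by Assumption \ref{ass:ass1hln_conditional}(c) the eliminated object is asymptotically not a superior one. Since the candidate set is finite and strictly shrinks at each rejection, the procedure removes all inferior objects before halting, with probability approaching one, giving $P(i \in \widehat{\mathcal{M}}^{l,*}_{1-\alpha}) \to 0$.

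The main obstacle I anticipate is making the sequential ``first elimination of a superior object'' argument airtight: the candidate set is itself random, so the conditioning events are data-dependent, and one must ensure that the finitely many elimination steps do not accumulate their small error probabilities into a bound larger than $\alpha$. I would handle this by dominating the event $\{\mathcal{M}^{l,*} \not\subset \widehat{\mathcal{M}}^{l,*}_{1-\alpha}\}$ by the single event of falsely rejecting the true null on $\mathcal{M}^{l,*}$, with all earlier premature eliminations of superior objects ruled out in the limit by Assumption \ref{ass:ass1hln_conditional}(c), so that only one application of the size bound (a) is needed rather than a union over steps. This is precisely the role played by the coherency requirement in \textcite{Hansen.2011}, adopted here state-wise.
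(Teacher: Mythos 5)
Your proposal is correct and follows essentially the same route as the paper's own proof: under the alternative, Assumption \ref{ass:ass1hln_conditional}(c) rules out eliminating a superior method asymptotically, so the only way to lose an element of \( \mathcal{M}^{l,*} \) is a false rejection once \( \mathcal{M} = \mathcal{M}^{l,*} \), which Assumption \ref{ass:ass1hln_conditional}(a) bounds by \( \alpha \); part (ii) follows from (b) and (c) exactly as you argue. Your write-up is in fact somewhat more careful than the paper's, spelling out the ``first elimination'' bookkeeping and the observation that \( H^{l}_{0,\mathcal{M}^{l,*}} \) is a true null because \( \mu^{l}_{ij} \leq 0 \) and \( \mu^{l}_{ji} \leq 0 \) force \( \mu^{l}_{ij} = 0 \), both of which the paper leaves implicit.
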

\begin{proof}
	Let \( i^{l, *} \in \mathcal{M}^{l, *}. \) (i): Consider the event that \( i^{l, *} \) is eliminated from \( \mathcal{M} \). From Assumption 4.1 c it follows that \( P(\delta_{\mathcal{M}}=1, \epsilon_{\mathcal{M}}=i^{l, *}| H^{l}_{A, \mathcal{M}}) \leq P(\epsilon_{\mathcal{M}}= i^{l, *} | H^{l}_{A, \mathcal{M}}) \to 0 \) as \( n^{l} \to \infty \). Next, from Assumption 1(a) it follows that \( \displaystyle \limsup_{n^{l} \to \infty} P(\delta_{\mathcal{M}}=1, \epsilon_{\mathcal{M}}=i^{l, *}| H^{l}_{0, \mathcal{M}}) = \displaystyle \limsup_{n^{l} \to \infty} P(\delta_{\mathcal{M}}=1 | H^{l}_{0, \mathcal{M}}) \leq \alpha  \) such that the probability that \( i^{l, *} \) is eliminiated when \( \mathcal{M}^{l} = \mathcal{M}^{l, *} \) is asymptotically bounded by \( \alpha \). To prove (ii), we first note that \( \displaystyle \limsup_{n^{l} \to \infty} P(\epsilon_{\mathcal{M}}=1 | H^{l}_{A, \mathcal{M}}) = 0  \), such that only methods \( i \notin \mathcal{M}^{l, *} \) will be eliminated (asymptotically) (under the alternative, it holds that \( \mathcal{M}^{l} \neq \mathcal{M}^{l, *} \) ). On the other hand, Assumption 1(b) ensures that inferior methods will be eliminated as long as the null hypothesis is false. 
\end{proof}

\begin{corollary}\label{cor:cor1hln_conditional}
	Suppose that Assumption 1 holds and that \( \mathcal{M}^{l, *} \) is a singleton, \( l  \in \{1, \ldots, d \} \). \\
	Then \( \lim_{n^{l} \to \infty} P(\mathcal{M}^{l,*} = \widehat{\mathcal{M}}^{l,*}_{1-\alpha} )  =1 \). 
\end{corollary}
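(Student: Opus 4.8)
The plan is to upgrade the inequality of Theorem \ref{th:th1hln_conditional}(i) to a limit of one by combining two facts that become decisive under the singleton assumption: asymptotically no inferior method survives, and the sequential procedure never returns an empty set. Writing $\mathcal{M}^{l,*}=\{i^{l,*}\}$, the containment ``confidence set $\subseteq$ truth'' together with non-emptiness then forces the two sets to coincide, so the only real work is establishing these two facts and noting why the singleton structure removes the residual $\alpha$ slack.

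First I would show $P(\widehat{\mathcal{M}}^{l,*}_{1-\alpha}\subseteq\mathcal{M}^{l,*})\to 1$. Since $\widehat{\mathcal{M}}^{l,*}_{1-\alpha}\subseteq\mathcal{M}^{\cdot,0}$ and $\mathcal{M}^{\cdot,0}$ is finite, I would aggregate the statement $\lim_{n^l\to\infty}P(i\in\widehat{\mathcal{M}}^{l,*}_{1-\alpha})=0$ of Theorem \ref{th:th1hln_conditional}(ii) over the finitely many inferior methods $i\notin\mathcal{M}^{l,*}$ by a union bound, giving $P\bigl(\widehat{\mathcal{M}}^{l,*}_{1-\alpha}\cap(\mathcal{M}^{\cdot,0}\setminus\mathcal{M}^{l,*})\neq\emptyset\bigr)\to 0$, i.e.\ $P(\widehat{\mathcal{M}}^{l,*}_{1-\alpha}\subseteq\mathcal{M}^{l,*})\to 1$.

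Second I would argue that the procedure always returns a non-empty set. By construction a method is removed only when the equivalence test $\delta_{\mathcal{M}}$ rejects $H^l_{0,\mathcal{M}}$; once the candidate set has been reduced to a single method $\{i\}$, the excess loss $\bar{d}^l_{i\cdot,n^l}$ is identically zero, since a lone method is compared against its own loss, so that $t^l_{i\cdot}=0$ and $T^l_{max,\{i\}}=0$ cannot exceed any positive critical value. Thus the last surviving method is always retained and $\widehat{\mathcal{M}}^{l,*}_{1-\alpha}\neq\emptyset$ with certainty. On the event $\{\widehat{\mathcal{M}}^{l,*}_{1-\alpha}\subseteq\mathcal{M}^{l,*}\}$ we then have $\emptyset\neq\widehat{\mathcal{M}}^{l,*}_{1-\alpha}\subseteq\{i^{l,*}\}$, which forces $\widehat{\mathcal{M}}^{l,*}_{1-\alpha}=\{i^{l,*}\}=\mathcal{M}^{l,*}$, and hence $P(\widehat{\mathcal{M}}^{l,*}_{1-\alpha}=\mathcal{M}^{l,*})\geq P(\widehat{\mathcal{M}}^{l,*}_{1-\alpha}\subseteq\mathcal{M}^{l,*})\to 1$.

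The main obstacle is conceptual rather than computational: I must pin down why the $\geq 1-\alpha$ bound of Theorem \ref{th:th1hln_conditional}(i) sharpens to one. The residual probability $\alpha$ there originates from the size-$\alpha$ test that would be applied while $\mathcal{M}^l=\mathcal{M}^{l,*}$ and could erroneously eliminate the unique superior method; when $\mathcal{M}^{l,*}$ is a singleton this test is degenerate and non-rejectable, so that source of error vanishes. An alternative, more explicit route would bypass Theorem \ref{th:th1hln_conditional}(i) and directly bound the chance that $i^{l,*}$ is ever eliminated: while inferior methods remain we are under $H^l_{A,\mathcal{M}}$ and Assumption \ref{ass:ass1hln_conditional}(c) gives $P(\epsilon_{\mathcal{M}}=i^{l,*}\mid H^l_{A,\mathcal{M}})\to 0$, whereas once the set is a singleton the degenerate null prevents elimination; either route delivers the claim.
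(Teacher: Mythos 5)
Your proposal is correct and follows essentially the same route as the paper, which simply cites Theorem \ref{th:th1hln_conditional} to conclude that $i^{l,*}$ is the last surviving element with probability approaching one. Your version is more explicit than the paper's one-line argument, in particular in spelling out the union bound over the finitely many inferior methods and the key observation that the equivalence test on a singleton is degenerate (so the set is never empty and the residual $\alpha$ from Theorem \ref{th:th1hln_conditional}(i) disappears), both of which the paper leaves implicit.
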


\begin{proof}
	When \( \mathcal{M}^{l, *} = \{i^{l, *}\} \) it follows from follows from Theorem \ref{th:th1hln_conditional} that \( i^{l, *} \) will be the last surviving element with probability approaching 1 as \( n^{l} \to \infty \).
\end{proof}

\subsection*{Coherency}
\begin{theorem}\label{th:th2hln_conditional}
	Suppose that \( P(\delta=1, e\in \mathcal{M}^{l, *}) \leq \alpha\). Then we have \\
	\( P(\mathcal{M}^{l, *} \subset \widehat{\mathcal{M}}^{l,*}_{1-\alpha}) \geq 1- \alpha \).
\end{theorem}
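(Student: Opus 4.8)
The plan is to prove the contrapositive in probability, following the coherency argument of \textcite{Hansen.2011} carried out statewise. A superior object can be kept out of the final set $\widehat{\mathcal{M}}^{l,*}_{1-\alpha}$ only by being actively removed during the sequential procedure, so it is cleanest to bound the probability of the complementary event and then pass to complements. Concretely, I would aim to establish the event inclusion
\[ \{ \mathcal{M}^{l,*} \not\subset \widehat{\mathcal{M}}^{l,*}_{1-\alpha} \} \subseteq \{ \delta = 1, \ e \in \mathcal{M}^{l,*} \}, \]
so that the hypothesis $P(\delta = 1, e \in \mathcal{M}^{l,*}) \leq \alpha$ immediately yields $P(\mathcal{M}^{l,*} \not\subset \widehat{\mathcal{M}}^{l,*}_{1-\alpha}) \leq \alpha$ and hence the claim.

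To make the inclusion precise I would expose the nested structure of the procedure: write $\mathcal{M}^{\cdot,0} = \mathcal{M}_0 \supsetneq \mathcal{M}_1 \supsetneq \cdots$, where $\mathcal{M}_{k+1} = \mathcal{M}_k \setminus \{e_{\mathcal{M}_k}\}$ whenever the equivalence test rejects ($\delta_{\mathcal{M}_k}=1$), and the procedure halts at the first $k$ with $\delta_{\mathcal{M}_k}=0$, setting $\widehat{\mathcal{M}}^{l,*}_{1-\alpha} = \mathcal{M}_k$. On the event that some $i \in \mathcal{M}^{l,*}$ is ever eliminated, let $k^{\star}$ be the \emph{first} step at which this occurs. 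By minimality of $k^{\star}$, every object discarded at the earlier steps lies outside $\mathcal{M}^{l,*}$, so the active set still contains the whole superior set, $\mathcal{M}^{l,*} \subseteq \mathcal{M}_{k^{\star}}$, while $\delta_{\mathcal{M}_{k^{\star}}}=1$ and $e_{\mathcal{M}_{k^{\star}}} \in \mathcal{M}^{l,*}$. This is exactly the joint ``reject and eliminate a superior object'' event at the realized active set, which gives the inclusion above; applying the coherency hypothesis and taking complements then delivers $P(\mathcal{M}^{l,*} \subset \widehat{\mathcal{M}}^{l,*}_{1-\alpha}) \geq 1-\alpha$.

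The step I expect to be the main obstacle is precisely the reduction of the whole sequence of tests to the single coherency event, because the set $\mathcal{M}_{k^{\star}}$ at which the first wrongful elimination happens is random, and one must argue the bound transfers to this realized path rather than incurring a union over all candidate sets. The conceptually delicate point here is that $\mathcal{M}^{l,*} \subseteq \mathcal{M}_{k^{\star}}$ does \emph{not} make the null $H^{l}_{0,\mathcal{M}_{k^{\star}}}$ true: as long as inferior methods remain in $\mathcal{M}_{k^{\star}}$ the null is false, so a statement about the test's level alone is insufficient and one genuinely needs the \emph{joint} coherency statement that rejection and the selection of a superior object for elimination occur together with probability at most $\alpha$. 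This is the statewise analogue of the coherency that \textcite{Hansen.2011} impose between test and elimination rule, and since the argument is run separately over the disjoint states $l=1,\dots,d$, it controls false eliminations state by state.
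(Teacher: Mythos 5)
Your proposal is correct and follows essentially the same route as the paper's proof: both arguments localize to the first step at which a superior object would be eliminated, observe that this event is contained in the joint ``reject and eliminate a superior object'' event covered by the coherency hypothesis, and conclude by taking complements. Your write-up is simply a more explicit version of the paper's two-sentence argument, and your remark that the level condition alone would not suffice (since inferior methods may still be present at the realized active set) is exactly the point the coherency requirement is designed to handle.
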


\begin{proof}
	Consider the first case when \( \epsilon_{\mathcal{M}} \in \mathcal{M}^{l,*} \). The claim  for this first instance follows from Assumption \ref{ass:ass1hln_conditional}. Further tests only take place if all previous ones reject, so that \( P( \epsilon \in \mathcal{M}^{l,*} )\) is bound from above by \( \alpha \).
\end{proof}


If a test has correct size, it holds that \( P(\delta=1 | H_{0, \mathcal{M}}) = \alpha\) which implies \( P(\delta=1, \epsilon \in \mathcal{M}^{*}| H_{0, \mathcal{M}}) \leq \alpha\). This is the usual notion of size in statistical testing. \\
\textcite{Hansen.2011} impose the further condition that \( P(\delta=1, \epsilon \in \mathcal{M}^{*}| H_{A, \mathcal{M}}) \leq \alpha\), i.e. we are sufficiently sure that the method we eliminate is not from \( \mathcal{M}^{*} \).


\begin{corollary}\label{cor:product_conditional_GW}
	Define \( z_{ij, t} = h_{t} d_{ij,t} \). Under Assumption \ref{ass:GW_mixing_HLN_adj}, for some \( r >2 \) it holds for all \( n, t \) that \( \{ z_{ij, t}\}_{i,j \in \mathcal{M}^{0}} \) is \( \alpha \)-mixing of order \( -r/(r-2) \).  
\end{corollary}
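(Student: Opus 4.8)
The plan is to recognize that $z_{ij,t}$ is a measurable transformation of a \emph{finite} block of the underlying mixing process, so that the standard result on preservation of mixing under such transformations applies directly. The product structure $h_t d_{ij,t}$ is not itself an obstacle: I do not need a theorem on products of two \emph{separately} mixing sequences (which need not be mixing), because both factors are functions of one common, jointly mixing process over a bounded time window.

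First I would read Assumption \ref{ass:GW_mixing_HLN_adj} jointly, so that the augmented process $\{V_t \equiv (W_t', h_t')'\}$ is $\alpha$-mixing of order $-r/(r-2)$. I would then invoke the key measurability lemma (the mixing-preservation theorem of \textcite{White.1994}; cf.\ the analogous steps in \textcite{Giacomini.2006} and \textcite{Borup.2024}): if $\{V_t\}$ is $\alpha$-mixing of order $-a$ and $U_t = g_t(V_{t-\tau_1}, \dots, V_{t+\tau_2})$ for measurable $g_t$ and \emph{fixed, finite} $\tau_1,\tau_2 \ge 0$, then $\{U_t\}$ is $\alpha$-mixing of the same order $-a$. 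The reason is that the $\sigma$-algebras generated by $U$ over the remote past and the remote future are nested inside $\sigma$-algebras of $V$ whose index gap differs from the original one only by the constant $\tau_1+\tau_2$; translating the lag by a finite constant leaves the polynomial decay rate, and hence the order, unchanged.

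The crux is to verify that $z_{ij,t}$ has this finite-window form. Because the estimation window $\max_i r^i$ is finite and the forecast horizon $k$ is fixed, each loss $L_{i,t} = L\!\left(Y_{t+k}, f^i(W_t, \dots, W_{t-r^i+1}; \hat\beta^i_{t,r^i})\right)$ is a measurable function of the finite block $(W_{t-\max_i r^i+1}, \dots, W_{t+k})$; hence so is $d_{ij,t} = L_{i,t} - L_{j,t}$. Since $h_t$ is $\mathcal{F}_t$-measurable, the product $z_{ij,t} = h_t d_{ij,t}$ is a measurable function of a finite block of $\{V_s\}$ of width $\max_i r^i + k$. Applying the lemma gives that each $\{z_{ij,t}\}$ is $\alpha$-mixing of order $-r/(r-2)$; as there are only finitely many pairs $i,j \in \mathcal{M}^{\cdot,0}$ (since $m<\infty$), the same order carries over to the vector $\{z_{ij,t}\}_{i,j \in \mathcal{M}^{\cdot,0}}$, which is the claim. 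The companion statement in the text---that the subsampled $\{d^l_{ij,n^l\tau}\}$ inherit \emph{at most} this order---would then follow because passing to a subsequence of a mixing sequence cannot increase its mixing coefficients.

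I expect the main obstacle to be conceptual rather than computational: to make transparent that the finite-block structure, and thus the entire argument, hinges on the assumed finiteness of the estimation window. With an expanding window, $d_{ij,t}$ would depend on the whole past of $W$, the block width would be unbounded, and the preservation lemma would fail---precisely the reason the paper excludes expanding schemes. A second point I would state explicitly is the joint-mixing reading of Assumption \ref{ass:GW_mixing_HLN_adj}: mere separate mixing of $\{W_t\}$ and $\{h_t\}$ would not license taking a function of both, so the argument presumes the two are jointly $\alpha$-mixing at the stated order.
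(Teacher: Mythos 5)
Your proposal is correct and follows essentially the same route as the paper: the paper's proof also writes \( z_{ij,t} = f(h_t, W_{t+1}, \ldots, W_{t-r}) \) as a measurable function of a finite number of leads and lags of the jointly mixing process and invokes the mixing-preservation lemma (Lemma 2.1 of White, 1984) to conclude that \( \{z_{ij,t}\} \) inherits the same mixing size. Your additional remarks --- that the assumption must be read as joint mixing of \( W_t \) and \( h_t \), and that the finite estimation window is what guarantees the finite-block structure --- make explicit two points the paper's terser proof leaves implicit, but they do not change the argument.
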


\begin{proof}\label{proof:product_conditional_GW}
	Write \( z_{ij, t} = f(h_{t}, W_{t+1}, \ldots, W_{t-r} ) \) for some measurable function \( f(\cdot) \). As  \( \{ d_{ij,t}\}_{i,j \in \mathcal{M}^{0}} \) and \( \{ h_{t}\} \) are mixing of the same size under Assumption \ref{ass:GW_mixing_HLN_adj}, and \( f \) is a function of a finite number of leads and lags of \( W_{t+1} \), it follows from Lemma 2.1 in \textcite{White.1984} that \( \{z_{t, ij} \} \) is mixing of the same size as \( \{ d_{ij,t}\}_{i,j \in \mathcal{M}^{0}} \) and \( \{ h_{t}\} \).
\end{proof}

\subsubsection{Quadratic-Form Test}
Let \( m \) denote the number of methods in \( \mathcal{M} \), \( L^{l}_{\tau} \in \mathbb{R}^{1\times m} \) the vector of loss variables, \( \bar{L}^{l} \equiv (n^{l})^{-1} \sum^{n^{l}}_{\tau=1} L^{l}_{\tau} \) its sample average. Let \( \iota \equiv (1, \ldots, 1)^{'} \) be the column vector where all \( m \) entries equal 1. The orthogonal complement \( \iota_{\perp} \in \mathbb{R}^{m \times m-1} \) has full column rank and satisfies \( \iota_{\perp}^{'} \iota = \mathbf{0} \). The \( m-1 \) dimensional vector \( X^{l}_{\tau} \equiv \iota^{'}_{\perp} L^{l}_{\tau} \) can be viewed as \( m-1 \) contrasts. Each element is a linear combination of the loss differentials \( d^{l}_{ij, \tau} \) that have mean zero under the null. 

\begin{lemma}\label{lm:lm1hln_conditional} 
	Given Assumptions \ref{ass:GW_mixing_HLN_adj}, \ref{ass:GW_finite_moments} and \ref{ass:subsampled_loss_differentials}, let \( X^{l}_{\tau} \equiv \iota^{'}_{\perp} L^{l}_{\tau} \), \( \tau \in \{1, \ldots, n^{l} \} \), and define \( \theta^{l}_{n^{l}} \equiv E(X^{l}_{n^{l} \tau}) \). The null hypothesis \( H^{l}_{0, \mathcal{M}} \) is equivalent to \( \theta^{l}_{n^{l}} =0 \) and it holds that \( (n^{l})^{1/2} ( \bar{X^{l}_{n^{l}}} -\theta^{l}_{n^{l}} ) \overset{d}{\to} N(0, \Sigma^{l}_{n^{l}} ) \), where \( \bar{X^{l}_{n^{l}} } \equiv (n^{l})^{-1} \sum^{n^{l}}_{\tau=1} X^{l}_{\tau} \) and \( \Sigma^{l}_{n^{l}}  \equiv var((n^{l})^{1/2} \bar{X^{l}_{n^{l}}}).\)
\end{lemma}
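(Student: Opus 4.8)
The plan is to separate the claim into two independent parts: the algebraic equivalence between $H^l_{0,\mathcal{M}}$ and the linear restriction $\theta^l_{n^l}=0$, and the central limit theorem for the sample mean of the contrast array $\{X^l_{n^l\tau}\}$. I would dispatch the equivalence with a short linear-algebra argument, and then obtain the multivariate CLT by reducing it, via the Cram\'er--Wold device, to a scalar CLT for $\alpha$-mixing heterogeneous triangular arrays, for which Assumptions \ref{ass:GW_mixing_HLN_adj}--\ref{ass:subsampled_loss_differentials} supply precisely the mixing, moment, and nondegeneracy inputs.

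For the equivalence, note that by linearity $\theta^l_{n^l}=\iota'_{\perp}\,E(L^l_{n^l\tau})$, so $\theta^l_{n^l}=0$ if and only if $E(L^l_{n^l\tau})$ is orthogonal to the columns of $\iota_\perp$. Since $\iota_\perp$ has full column rank and $\iota'_\perp\iota=\mathbf{0}$, its columns span the orthogonal complement of $\mathrm{span}(\iota)$; hence $\theta^l_{n^l}=0$ is equivalent to $E(L^l_{n^l\tau})\in\mathrm{span}(\iota)$, i.e.\ to all coordinates of $E(L^l_{n^l\tau})$ being equal. This in turn says $\mu^l_{ij}=E(L^l_{i,\tau})-E(L^l_{j,\tau})=0$ for every $i,j\in\mathcal{M}$, which is exactly $H^l_{0,\mathcal{M}}$.

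For the CLT, I would first verify the hypotheses of a heterogeneous-array CLT for $\{X^l_{n^l\tau}\}$. Each coordinate of $X^l_\tau=\iota'_\perp L^l_\tau$ is a fixed linear combination of the loss differentials $\{d^l_{ij,\tau}\}$, hence a measurable function of finitely many leads and lags of $W_t$; by Corollary \ref{cor:product_conditional_GW} together with the argument of Lemma 2.1 in \textcite{White.1984}, the array $\{X^l_{n^l\tau}\}$ is then $\alpha$-mixing of order $-r/(r-2)$. The moment bound $E\|X^l_{n^l\tau}\|^{r+\gamma}<\infty$ follows from Assumption \ref{ass:GW_finite_moments} and Minkowski's inequality applied to the finite linear combination. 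To handle the covariance, I would invoke Cram\'er--Wold: for any fixed $\lambda\in\mathbb{R}^{m-1}$ with $\lambda\neq 0$, the scalar $\lambda'X^l_{n^l\tau}$ is again a contrast, so it is a linear combination of the $d^l_{ij,\tau}$ that inherits the mixing and moment conditions, and its normalized partial-sum variance is bounded away from zero by Assumption \ref{ass:subsampled_loss_differentials}(iii) (directions yielding identically zero variance being trivial). A scalar CLT for $\alpha$-mixing triangular arrays then gives asymptotic normality of $(n^l)^{1/2}\lambda'(\bar X^l_{n^l}-\theta^l_{n^l})$ after studentization; collecting all $\lambda$ yields $(\Sigma^l_{n^l})^{-1/2}(n^l)^{1/2}(\bar X^l_{n^l}-\theta^l_{n^l})\overset{d}{\to}N(0,I_{m-1})$, which is the asserted convergence to $N(0,\Sigma^l_{n^l})$.

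The main obstacle is the triangular-array character of the problem: because the distribution of $d^l_{ij,n^l\tau}$ is allowed to depend on $n^l$, an off-the-shelf stationary CLT does not apply, and I would have to appeal to a CLT for heterogeneous, dependent double arrays (e.g.\ the versions in \textcite{White.1984}). The delicate steps are checking the Lindeberg/uniform-integrability condition from the $(r+\gamma)$-moment bound and ensuring that $\Sigma^l_{n^l}$ stays uniformly bounded and uniformly nonsingular across directions, so that the studentized convergence is the correct reading of ``$\to N(0,\Sigma^l_{n^l})$''; Assumption \ref{ass:probability_of_states}, which forces $n^l\to\infty$ as $n\to\infty$, guarantees that this limit is taken along a genuinely growing subsample.
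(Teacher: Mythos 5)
Your proposal is correct and follows essentially the same route as the paper's own (much terser) proof: the null is identified with $\theta^{l}_{n^{l}}=0$ by writing $X^{l}_{\tau}=\iota'_{\perp}L^{l}_{\tau}$ as contrasts of the losses (using $\iota'_{\perp}\iota=\mathbf{0}$), and asymptotic normality is obtained from a CLT for $\alpha$-mixing heterogeneous processes, which the paper simply cites as White (2001), Th.~5.20. Your Cram\'er--Wold reduction and the verification of the mixing, moment, and nondegeneracy conditions are just the details that the paper leaves implicit in that citation.
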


\begin{proof}
	We can write \( X^{l}_{t} \equiv \iota^{'}_{\perp} L^{l}_{t} \) as a linear combination of the \( d^{l}_{ij,t} \) as \( \iota^{'}_{\perp} \iota  = 0 \).
	Thus we see that \( H^{l}_{0, \mathcal{M}}: \theta^{l} = 0 \). 
	The asymptotic normality follows by the CLT for \( \alpha \)-mixing processes. (see, e.g., \textcite{White.2001}, Th.\ 5.20.)
\end{proof}

\begin{lemma}\label{lm:lm2hln_conditional}
	Suppose that Assumptions \ref{ass:GW_mixing_HLN_adj}, \ref{ass:GW_finite_moments} and \ref{ass:subsampled_loss_differentials} hold and define \( \bar{V}^{l}_{n^{l}} = (\bar{d}^{l}_{i\cdot, n^{l}}, \ldots, \bar{d}^{l}_{m\cdot, n^{l}})^{'}. \), \( l =1, \ldots, d \).\\
	Then for each \( l \in {1, \ldots, d} \)
	\begin{equation}
		(n^{l})^{1/2}(\bar{V}^{l}_{n^{l}} - \psi^{l}_{n^{l}}) \overset{d}{\to} N_{m}(0, \Omega^{l}_{n^{l}}) \text{ as } n^{l} \to \infty,
	\end{equation}
	where \( \psi^{l}_{n^{l}} \equiv E( \bar{V}^{l}_{n^{l}}) \text { and } \Omega^{l}_{n^{l}} \equiv var((n^{l})^{1/2} \bar{V}^{l}_{n^{l}}) \) and the null hypothesis \( H^{l}_{0, \mathcal{M}} \) is equivalent to \( \psi^{l}_{n^{l}}=0.\) 
\end{lemma}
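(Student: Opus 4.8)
The plan is to obtain the statement as a direct corollary of Lemma \ref{lm:lm1hln_conditional}, by recognising that \( \bar{V}^{l}_{n^{l}} \) is a fixed linear image of \( \bar{X}^{l}_{n^{l}} \). First I would observe that the vector with entries \( d^{l}_{i\cdot, \tau} = L^{l}_{i, \tau} - m^{-1}\sum_{j} L^{l}_{j, \tau} \) is exactly the centred loss vector \( M L^{l}_{\tau} \), where \( M \equiv I_{m} - m^{-1}\iota\iota^{'} \) is the orthogonal projection onto the orthogonal complement of \( \iota \). Since \( \iota_{\perp} \) spans this same complement, \( M \) coincides with the projection \( \iota_{\perp}(\iota_{\perp}^{'}\iota_{\perp})^{-1}\iota_{\perp}^{'} \), so that
\[ M L^{l}_{\tau} = A X^{l}_{\tau}, \qquad A \equiv \iota_{\perp}(\iota_{\perp}^{'}\iota_{\perp})^{-1} \in \mathbb{R}^{m \times (m-1)}, \]
with \( A \) of full column rank \( m-1 \). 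Averaging over \( \tau \) and scaling then yields the exact identity \( (n^{l})^{1/2}(\bar{V}^{l}_{n^{l}} - \psi^{l}_{n^{l}}) = A\,(n^{l})^{1/2}(\bar{X}^{l}_{n^{l}} - \theta^{l}_{n^{l}}) \), where \( \psi^{l}_{n^{l}} = A\theta^{l}_{n^{l}} \) follows by taking expectations of the same identity.

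Next I would invoke Lemma \ref{lm:lm1hln_conditional}, which under Assumptions \ref{ass:GW_mixing_HLN_adj}, \ref{ass:GW_finite_moments} and \ref{ass:subsampled_loss_differentials} supplies \( (n^{l})^{1/2}(\bar{X}^{l}_{n^{l}} - \theta^{l}_{n^{l}}) \overset{d}{\to} N(0, \Sigma^{l}_{n^{l}}) \). Because \( x \mapsto A x \) is a fixed continuous linear map, the continuous mapping theorem gives \( (n^{l})^{1/2}(\bar{V}^{l}_{n^{l}} - \psi^{l}_{n^{l}}) \overset{d}{\to} N_{m}(0, A\Sigma^{l}_{n^{l}}A^{'}) \), and the limiting covariance equals the stated \( \Omega^{l}_{n^{l}} \) by construction, since \( \bar{V}^{l}_{n^{l}} = A\bar{X}^{l}_{n^{l}} \) forces \( var((n^{l})^{1/2}\bar{V}^{l}_{n^{l}}) = A\,var((n^{l})^{1/2}\bar{X}^{l}_{n^{l}})\,A^{'} = A\Sigma^{l}_{n^{l}}A^{'} \). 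For the null-hypothesis claim, the equivalence \( H^{l}_{0, \mathcal{M}} \iff \theta^{l}_{n^{l}} = 0 \) from Lemma \ref{lm:lm1hln_conditional}, together with the full column rank of \( A \), delivers \( \psi^{l}_{n^{l}} = A\theta^{l}_{n^{l}} = 0 \iff \theta^{l}_{n^{l}} = 0 \iff H^{l}_{0, \mathcal{M}} \).

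The one point that needs care — rather than a genuine obstacle — is that the entries of \( M L^{l}_{\tau} \) sum to zero, so \( A \) maps into an \( (m-1) \)-dimensional subspace and \( \Omega^{l}_{n^{l}} = A\Sigma^{l}_{n^{l}}A^{'} \) is singular of rank \( m-1 \); the limit is therefore a degenerate but perfectly well-defined Gaussian on \( \mathbb{R}^{m} \), which is all the downstream use in Theorem \ref{th:th4hln_conditional} requires. As a self-contained alternative one could bypass Lemma \ref{lm:lm1hln_conditional} and apply a multivariate \( \alpha \)-mixing central limit theorem to the \( m \)-vector \( M L^{l}_{\tau} \) directly: its mixing size is inherited from \( \{d^{l}_{ij,\tau}\} \) by the argument of Corollary \ref{cor:product_conditional_GW}, since \( M L^{l}_{\tau} \) is a linear function of finitely many \( W \)'s; the moment bound follows from Assumption \ref{ass:GW_finite_moments} by linearity; and non-degeneracy of every nontrivial scalar contrast is guaranteed by Assumption \ref{ass:subsampled_loss_differentials}(iii), so that the multivariate conclusion follows via the Cram\'er--Wold device. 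I would nonetheless favour the linear-map argument, as it reuses Lemma \ref{lm:lm1hln_conditional} verbatim and keeps the bookkeeping minimal.
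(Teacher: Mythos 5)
Your proposal is correct and follows essentially the same route as the paper's proof: both write \( \bar{V}^{l}_{n^{l}} \) as a fixed linear image of \( \bar{X}^{l}_{n^{l}} \) and transfer the asymptotic normality of Lemma \ref{lm:lm1hln_conditional} through that map, with the covariance transforming accordingly. You are merely more explicit than the paper (which only asserts the existence of "some" matrix \( G^{l}_{n^{l}} \)) in exhibiting the projection matrix, using its full column rank to justify the equivalence \( \psi^{l}_{n^{l}}=0 \iff H^{l}_{0,\mathcal{M}} \), and noting the rank-\((m-1)\) degeneracy of \( \Omega^{l}_{n^{l}} \).
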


\begin{proof}
	From the identity \( \bar{d}^{l}_{i\cdot, n^{l}} = \bar{L}^{l}_{i, n^{l}} - \bar{L}^{l}_{\cdot, n^{l}}  = \bar{L}^{l}_{i, n^{l}} - m^{-1} \displaystyle \sum_{j \in \mathcal{M}} \bar{L}^{l}_{j, n^{l}} = m^{-1} \displaystyle \sum_{j \in \mathcal{M}} \bar{d}^{l}_{ij, n^{l}} \), we see that the elements of \( \bar{V}^{l}_{n^{l}} \) are linear transformations of \( \bar{X}^{l}, n^{l} \) from Lemma \ref{lm:lm1hln_conditional}. Thus for some \( (m-1) \times m \) matrix \( G^{l}_{n^{l}} \), we have \( \bar{V}^{l}_{n^{l}} = G^{l \prime}_{n^{l}} \bar{X}^{l}_{n^{l}} \) and the result now follows, where \( \psi^{l}_{n^{l}} = G^{l\prime}_{n^{l}} \theta^{l}_{n^{l}} \) and \( \Omega^{l}_{n^{l}} = G^{l \prime}_{n^{l}} \Sigma^{l}_{n^{l}} G^{l}_{n^{l}}\). 
\end{proof}

\clearpage

\subsection{CMCS Bootstrap procedure}\label{app:bootstrap}
This section outlines the CMCS bootstrap, which is operationally the same as the MCS bootstrap performed on the statewise losses. 

The block bootstrap \textcite{Hansen.2011} suggest using is the circular block bootstrap introduced by \textcite{Politis.1991}. In practice, the researcher needs to choose an appropriate block length \( p \), which is tied to persistence in the loss differentials \( d_{i\cdot, t} \) and \( d_{ij} \). \textcite{Hansen.2011} acknowledge the difficulty of choosing the appropriate block length \( p \) and recommend trying different specifications for robustness.  

\subsubsection*{Step 0:  Select the statewise losses}
For each state \( l \in \{1, \ldots, d \} \), split the unconditional loss matrix \( \mathbf{L} \in \mathbb{R}^{m \times n} \) into \( d \)  matrices \( \mathbf{L}^{l} \in \mathbb{R}^{ m \times n^{l}}, \; l \in \{1, \ldots, d \}\), where \( \mathbf{L}^{l} = \mathbf{L} | t \in I^{l}  \), where \( t \) denotes the column index of \( \mathbf{L} \), thus selecting the observations for which we observe state \( l \) at the forecast origin (cf. Section \ref{sec:description_environment}).
Next, perform the bootstrap procedure from \textcite{Hansen.2011} on each statewise loss matrix \( \mathbf{L}^{l} \). For ease of notation, in the following we suppress the superscript \( l \) that indicates the state, except for \( n^{l} \) and \( \mathcal{M}^{l}\). For instance, we write \( L_{i, \tau} \) instead of \( L_{i, \tau}^{l} \).

\paragraph*{Step \(l.1\): Bootstrap indices for resampling}
\begin{enumerate}
    \item[(a)] Choose the block-length parameter \( p \). 
    \item[(b)] Generate B bootstrap resamples of the indices \( \{1, \ldots, n^{l}\} \). I.e., for \( b = 1, \ldots, B\): 
        \begin{enumerate}[label=\roman*]
            \item Choose \( \xi_{b_1} \sim U\{1, \ldots, n^{l} \} \) and set \( (v_{b, 1}, \ldots, v_{b, p} ) = ( \xi_{b_1}, \xi_{b_1}+1, \xi_{b_1}+p-1 ) \), with the convention that \( n^{l} + i = i \) for \( i \geq 1 \).
            \item Choose \( \xi_{b_2} \sim U\{1, \ldots, n^{l} \} \) and set \( (v_{b, p+1}, \ldots, v_{b, 2p} ) = ( \xi_{b_2}, \xi_{b_2}+1, \xi_{b_2}+p-1 ) \)
            \item Continue until a sample size of \( n^{l} \) is constructed.
            \item This is repeated for all resamples \( b=1, \ldots, B \) using independent draws of the \( \xi \)'s.
        \end{enumerate} 
    \item[(c)] Save the full matrix of bootstrap indices. 
\end{enumerate}

\paragraph*{Step \(l.2\): Sample and Bootstrap Statistics}
\begin{enumerate}
    \item[(a)] For each method and each point of time we evaluate the performance to obtain the variables \( L_{i,\tau} \) for \( i=1, \ldots, m \) and \( \tau = 1, \ldots, n^{l} \). These variables are used to calculate the sample averages for each method \( \bar{L}_{i, \cdot} \equiv \dfrac{1}{n^{l}}  \displaystyle \sum_{\tau=1}^{n^{l}}  L_{i, \tau}, i = 1,\ldots, m \). 
    \item[(b)] The corresponding bootstrap variables are now given by \(  L_{b,i,\tau}^{*} = L_{i, v_{b,\tau}} \) for \( b=1, \ldots, B, \; i=1, \ldots, m \text{ and } \tau=1, \ldots, n^{l}, \) and calculate the bootstrap sample averages \(  \bar{L}_{b,i}^{*} \equiv \displaystyle \sum_{\tau=1}^{n^{l}}  L_{b,i,\tau}^{*} \). It suffices to store the variables \( \bar{L}_{i} \) and \( \xi_{b,i}^{*} \equiv \bar{L}_{b,i}^{*} - \bar{L}_{i} \),  as all subsequent statistics can be calculated from these. 
\end{enumerate}

\paragraph*{Step \(l.3\): Sequential Testing}
\begin{enumerate}
    \item[(a)] Let \( m \) denote the number of elements in \( \mathcal{M}^{l} \) and calculate 
    \begin{align*} \bar{L}_{\cdot} \equiv \dfrac{1}{m} \displaystyle \sum_{i=1}^{m}  \bar{L}_{i}, \quad  \xi^{*}_{b,\cdot} \equiv \dfrac{1}{m} \displaystyle \sum_{i=1}^{m}  \xi^{*}_{b,i} \quad \text{and   } \widehat{var}(\bar{d}_{i \cdot}) \equiv \dfrac{1}{B} \displaystyle \sum_{b=1}^{B}(\xi^{*}_{b,i} - \xi^{*}_{b,\cdot})^{2}.
    \end{align*} Next, define \( t_{i \cdot} \equiv \dfrac{ \bar{d}_{i\cdot}}{\widehat{var}(\bar{d}_{i \cdot})} \) and calculate the test statistic \( T_{\max} = \max_{i} t_{i \cdot} \) . 
    \item[(b)] The bootstrap estimate of \( T_{D}\)'s distribution is given by the empirical distribution of \( T_{b,\max}^{*} = \displaystyle \max_{i} t_{b,i}^{*}, \; \text{ for } b=1, \ldots, B  \), where \( t_{b,i}^{*} = \dfrac{ \xi_{b,i}^{*} - \xi_{b,\cdot}^{*}}{\sqrt{\widehat{var}(\bar{d}_{i \cdot})}} \)
    \item[(c)] The p-value of \( H_{0, \mathcal{M}^{l}} \) is given by \( P_{H_{0, \mathcal{M}^{l}}} \equiv \dfrac{1}{B} \displaystyle \sum_{b=1}^{B} \mathbbm{I}_{\{T_{max} < T_{b,\max}^{*} \} } \), where \( \mathbbm{I}_{\{ \cdot \} } \) is the indicator function. 
    \item[(d)] If \( P_{H_{0, \mathcal{M}^{l}}}  < \alpha \), where \( \alpha \) is the level of the test, \( H_{0, \mathcal{M}^{l}} \) is rejected and \( \epsilon_{\mathcal{M}^{l}} \equiv \argmax_{i} t_{i \cdot} \) is eliminated from \( \mathcal{M}^{l} \).
    \item[(e)] The steps in (a)-(d) are repeated until the \( H_{0, \mathcal{M}^{l}} \) cannot be rejected for the first time. The resulting set of methods \( \widehat{\mathcal{M}^{l}}^{*}_{1-\alpha} \) is the \( (1-\alpha ) \) CMCS for state \( l \).
\end{enumerate}

Thus, for each state \( l \), one obtains a CMCS  \( \widehat{\mathcal{M}^{l}}^{*}_{1-\alpha} \).


\subsection{Two methods, two states: Wald type test statistic}\label{app:wald_type_two_states_simulation}

To theoretically illustrate the mechanisms that drive the results in Sections \ref{sec:two_methods_Wald_vs_t} and \ref{sec:simulation_wald_vs_t_rejection_regions}, we calculate the expression of the Wald type test statistic in a simple setting with two forecasting methods and two disjoint states.

We assume that the loss differentials follow a normal distribution conditional on a state variable \( S_{t} \), where \(P(S_{t} = 1)=p \), and \(P(S_{t} = 2)=1-p \).
\begin{equation}\label{eq:Wald_vs_t_loss_differential}
    D_{t} = 
    \begin{cases} X \sim \mathcal{N}(\Delta_{1}, \sigma^{2}),  \text{ if } S_{t} = 1, \\
        Y \sim \mathcal{N}(\Delta_{2}, \sigma^{2}), \text{ if } S_{t} = 2,
    \end{cases} 
\end{equation}

where \( \Delta_{1} < 0 \) and \( \Delta_{2} = -v \Delta_{1} \) with \( v \in [0, 1] \), and we write \( \Delta = (\Delta_{1}, \Delta_{2})^{\prime} \). 

We consider a sample of \( n \) observations.  
We denote \( d_{t},  t \in I = \{ 1, \ldots, n \} \) the realizations of \( D_{t} \), \( I_{1} = \{t \in I: s_{t}=1\} \), \( I_{2} = \{t \in I: s_{t}=2\} \), \( n_{1} = |I_{1} | \) and \( n_{2} = |I_{2} | \).\footnote{We use subscript \( l \) when indexing \( \Delta_{l}, n_{l}, I_{l} \) to simplify the notation of squared terms.}

We have the unconditional average loss differential \( \bar{d} = \dfrac{1}{n} \displaystyle \sum_{t \in I} d_{t}\), as well as the two conditional averages \( \bar{d}^{1} = \dfrac{1}{n_{1}} \displaystyle \sum_{t \in I_{1}} d_{t} \) and \( \bar{d}^{2} = \dfrac{1}{n_{2}} \displaystyle \sum_{t \in I_{2}} d_{t}\). 

We consider the vector of test functions \( h_{t} = (1, \mathbbm{I}_{\{s_{t}=1\}})^{\prime} \in R^{2 \times 1}\), which yields the instrumented loss differential \( z_{t} = h_{t} d_{t} = (d_{t}, d_{t}\mathbbm{I}_{\{s_{t}=1\}} )^{\prime} \), and we write \( z_{t}^{1} \) for \( d_{t} \mathbbm{I}_{\{s_{t}=1\}} \). 
For the Wald type test of \textcite{Giacomini.2006}, we obtain 
\begin{align*}
    T^{h} &= n \bar{z}^{\prime} \hat{\Sigma}^{-1}  \bar{z} \\
    &= n \left( \bar{d}, \bar{z^{1}} \right)  \hat{\Sigma}^{-1} \left( \bar{d}, \bar{z^{1}} \right){\prime} \\
    &= n \left[ (\bar{d})^{2} \hat{\Sigma}^{-1}_{11} + 2 \bar{d} \bar{z^{1}} \hat{\Sigma}^{-1}_{12} + (\bar{z^{1}})^{2}  \hat{\Sigma}^{-1}_{22} \right] \\
    &= n \left[ (\bar{d})^{2} \hat{\Sigma}^{-1}_{11} + 2 \bar{d} \dfrac{n_{1}}{n}\bar{d^{1}} \hat{\Sigma}^{-1}_{12} + \dfrac{n_{1}^{2}}{n^{2}}(\bar{d^{1}})^{2}  \hat{\Sigma}^{-1}_{22} \right] \\
    & \equiv n \left[ A + B + C \right].
\end{align*}
We rewrite \( A \) and \( B \) explicitly in terms of the conditional sample averages  \( \bar{d}^{1} \) and \( \bar{d}^{2} \). 
\begin{align*}
    A &=  (\bar{d})^{2} \hat{\Sigma}^{-1}_{11} \\ 
        &= \dfrac{1}{n^{2}} \left( \displaystyle \sum_{t \in I_{1}} d_{t} + \displaystyle \sum_{t \in I_{2}} d_{t} \right)^{2}  \hat{\Sigma}^{-1}_{11}\\
        &=\dfrac{1}{n^{2}} \left( n_{1}^{2} (\overline{d^{1}})^{2} + 2 n_{1} n_{2} \overline{d^{1}} \; \overline{d^{2}} + n_{2}^{2} (\overline{d^{2}})^{2} \right) \hat{\Sigma}^{-1}_{11},\\
    B &=  2 \bar{d} \dfrac{n_{1}}{n} \bar{d^{1}} \hat{\Sigma}^{-1}_{21} \\ 
    &\overset{*}{=} 2 \dfrac{n_{1}}{n^{2}} \left( n_{2} \bar{d^{1}} \bar{d^{2}} + n_{1} (\bar{d^{1}})^{2} \right) \hat{\Sigma}^{-1}_{21} ,\\
    C &=  \dfrac{n_{1}^{2}}{n^{2}}(\bar{d^{1}})^{2}  \hat{\Sigma}^{-1}_{22},\\
	& \mathbb{*} \\
	\bar{d} \bar{d^{1}} &= \dfrac{1}{n n_{1}} \displaystyle \sum_{t \in I} d_{t} \displaystyle \sum_{t \in I_{1}} d_{t}  \\
	&= \dfrac{1}{n n_{1}}  \left[ n_{2} \bar{d^{2}} n_{1} \bar{d^{1}} + n_{1}^{2} (\bar{d^{1}})^{2}  \right]  \\
	&=  \dfrac{n_{2}}{n} \bar{d^{2}} \bar{d^{1}} + \dfrac{n_{1}}{n}  (\bar{d^{1}})^{2},
\end{align*}
We consider a simplified expression for \( T^{h} \) in the plots in Section \ref{sec:simulation_wald_vs_t_rejection_regions} by using the true covariance matrix \( \Sigma \) and setting \( n_{1} = np \) and \( n_{2}=n(1-p) \). We can thus write \( T^{h} \) as a function of the sample average of the conditional loss differentials and obtain
\begin{align}\label{eq:S_h_D_E_F}
    T^{h} &= n \left[ A + B + C \right] \nonumber \\
    &=n \bigl [ \dfrac{1}{n^{2}} \left( n_{1}^{2} (\overline{d^{1}})^{2} + 2 n_{1} n_{2} \overline{d^{1}} \; \overline{d^{2}} + n_{2}^{2} (\overline{d^{2}})^{2} \right) \Sigma^{-1}_{11} + \nonumber \\
	&+ 2 \dfrac{n_{1}}{n^{2}} \left( n_{2} \bar{d^{1}} \bar{d^{2}} + n_{1} (\bar{d^{1}})^{2} \right) \Sigma^{-1}_{21}  + \dfrac{n_{1}^{2}}{n^{2}}(\bar{d^{1}})^{2} \Sigma^{-1}_{22} \bigr ] \nonumber \\
    &=n \bigl [ \left( p^{2} (\overline{d^{1}})^{2} + 2 p (1-p) \overline{d^{1}} \; \overline{d^{2}} + (1-p)^{2} (\overline{d^{2}})^{2} \right) \Sigma^{-1}_{11} \nonumber \\
	&+ 2 \left( p(1-p) \bar{d^{1}} \bar{d^{2}} + p^{2} (\bar{d^{1}})^{2} \right) \Sigma^{-1}_{21}  + p^{2}(\bar{d^{1}})^{2} \Sigma^{-1}_{22} \bigr ] \nonumber \\
    &=n \left[ p^{2} (\overline{d^{1}})^{2} \left(  \Sigma^{-1}_{11} + 2 \Sigma^{-1}_{12} +\Sigma^{-1}_{22}  \right) +2 p(1-p) \bar{d^{1}} \bar{d^{2}} \left( \Sigma^{-1}_{11} + \Sigma^{-1}_{12} \right) +  (1-p)^{2} (\overline{d^{2}})^{2}  \Sigma^{-1}_{11}  \right]  \nonumber \\ 
    & \equiv n \left[ D + E + F \right].
\end{align}

\subsubsection*{Calculation of \( \Sigma, \Sigma^{-1} \)}

We obtain as the covariance matrix \( \Sigma\) of \( (D_{t}, D_{t} \mathbbm{1}_{{S_{t}=1}})^{\prime} \)
\begin{align*}
    \Sigma &= \begin{pmatrix}
        \sigma^{2} + p (1-p) (\Delta_{1} - \Delta_{2})^{2}              &   p \sigma^{2} + p(1-p)(\Delta_{1}^{2}-\Delta_{1} \Delta_{2}) \\
        p \sigma^{2} + p(1-p)(\Delta_{1}^{2}-\Delta_{1} \Delta_{2})     &    p \sigma^{2} + p(1-p) \Delta_{1}^{2}  \\
        \end{pmatrix} 
\end{align*}
where 
\begin{align*}
\Sigma_{11} =  \sigma^{2} + p (1-p) (\Delta_{1} - \Delta_{2})^{2} \\
\end{align*}
is the variance of the mixture of normals \( D_{t} \) according to Equation \ref{eq:Wald_vs_t_loss_differential}, 
\begin{align*}
    \Sigma_{12} &= Cov(D_{t}, D_{t} \cdot \mathbbm{1}_{{S_{t}=1}}) \\
    &= E[D_{t} \cdot D_{t} \cdot \mathbbm{1}_{{S_{t}=1}}] - E[D_{t}] \cdot E[{D_{t} \cdot \mathbbm{1}_{{S_{t}=1}}}] \\
    &= E[D_{t}^{2} | S_{t}=1 ] \cdot E[\mathbbm{1}_{{S_{t}=1}}] - E[D_{t}] E[D_{t} | S_{t}=1] \cdot  E[\mathbbm{1}_{{S_{t}=1}}] \\
    &= p (\sigma^{2} + \Delta_{1}^{2}) - (p \Delta_{1} + (1-p) \Delta_{2}) p \Delta_{1} \\
    &= p \sigma^{2} + p(1-p)(\Delta_{1}^{2}-\Delta_{1} \Delta_{2}) \\
\Sigma_{22} &= Var(D_{t} \cdot \mathbbm{1}_{{S_{t}=1}})\\
&= E[(D_{t} \cdot \mathbbm{1}_{{S_{t}=1}})^{2}] - E[D_{t} \cdot \mathbbm{1}_{{S_{t}=1}}]^{2} \\
&= E[D_{t}^{2} | S_{t}=1 ] \cdot E[\mathbbm{1}_{{S_{t}=1}}] - E[D_{t} | S_{t}=1]^{2} \cdot  E[\mathbbm{1}_{{S_{t}=1}}]^{2} \\
&= p (\sigma^{2} + \Delta_{1}^{2}) - p^{2} \Delta_{1}^{2}\\
&= p \sigma^{2} + p(1-p) \Delta_{1}^{2} 
\end{align*}
Given that it exists, the inverse of the \( 2 \times 2 \) matrix \(  \Sigma \) is given by 
\begin{align*}
    \Sigma^{-1} = \dfrac{1}{det (\Sigma)}  
    \begin{pmatrix}
        \Sigma_{22}            &   - \Sigma_{12} \\
        - \Sigma_{12}    &    \Sigma_{11}  \\
        \end{pmatrix},
\end{align*}
with
\begin{align}
    det (\Sigma) = (1 - p) p \sigma^2 \left( (1 - p) \Delta_1^2 + p \Delta_2^2 + \sigma^2 \right).
\end{align}

\subsubsection*{Plugging into \( T^{h} \) }
Thus, as factors in \( D, E, F \) in the expression of \( T^{h} \)  in Equation \ref{eq:S_h_D_E_F} above we obtain 
\begin{align*}  
    \Sigma^{-1}_{11} + 2 \Sigma^{-1}_{12} +\Sigma^{-1}_{22}   
    &= \dfrac{1}{det (\Sigma)}  \times \Bigl ( p \sigma^{2} + p(1-p) \Delta_{1}^{2} - 2 [p \sigma^{2} + p(1-p)(\Delta_{1}^{2}-\Delta_{1} \Delta_{2})]   \\
	&+ \sigma^{2} + p (1-p) (\Delta_{1} - \Delta_{2})^{2} \Bigr )\\
    &= \dfrac{1}{det (\Sigma)}  \times \left( (1-p) \sigma^{2} + p(1-p) \Delta_{2}^{2} \right)\\
    &= \dfrac{ (1-p) (\sigma^{2} + p \Delta_{2}^{2}) }{(1 - p) p \sigma^2 \left( (1 - p) \Delta_1^2 + p \Delta_2^2 + \sigma^2 \right)} \\
    &= \dfrac{ \sigma^{2} + p \Delta_{2}^{2} }{ p \sigma^2 \left( (1 - p) \Delta_1^2 + p \Delta_2^2 + \sigma^2 \right)} \\
    \Sigma^{-1}_{11} + \Sigma^{-1}_{12}
    &= \dfrac{1}{det (\Sigma)}  \times \left( p \sigma^{2} + p(1-p) \Delta_{1}^{2} - p \sigma^{2} + p(1-p)(\Delta_{1}^{2}-\Delta_{1} \Delta_{2})\right)\\
    &= \dfrac{1}{det (\Sigma)}  \times \left( p(1-p) \Delta_{1} \Delta_{2} \right)\\
    &= \dfrac{ p(1-p) \Delta_{1} \Delta_{2}}{(1 - p) p \sigma^2 \left( (1 - p) \Delta_1^2 + p \Delta_2^2 + \sigma^2 \right)} \\
    &= \dfrac{\Delta_{1} \Delta_{2}}{\sigma^2 \left( (1 - p) \Delta_1^2 + p \Delta_2^2 + \sigma^2 \right)} \\
    \Sigma^{-1}_{11} 
	&= \dfrac{1}{det (\Sigma)}  \times \left( p \sigma^{2} + p(1-p) \Delta_{1}^{2} \right)\\
    &= \dfrac{ p \sigma^{2} + p(1-p) \Delta_{1}^{2} }{(1 - p) p \sigma^2 \left( (1 - p) \Delta_1^2 + p \Delta_2^2 + \sigma^2 \right)} \\
    &= \dfrac{ \sigma^{2} + (1-p) \Delta_{1}^{2} }{(1 - p) \sigma^2 \left( (1 - p) \Delta_1^2 + p \Delta_2^2 + \sigma^2 \right)} \\
\end{align*}
Overall, this yields
\begin{align*}
    T^{h} &= n \left[ D + E + F \right] \\
    &=n \left[ p^{2} (\overline{d^{1}})^{2} \left(  \Sigma^{-1}_{11} + 2 \Sigma^{-1}_{12} +\Sigma^{-1}_{22}  \right) +2 p(1-p) \bar{d^{1}} \bar{d^{2}} \left( \Sigma^{-1}_{11} + \Sigma^{-1}_{12} \right) +  (1-p)^{2} (\overline{d^{2}})^{2}  \Sigma^{-1}_{11}  \right] \\ 
    &=n  \bigl [ p^{2} (\overline{d^{1}})^{2} \dfrac{ \sigma^{2} + p \Delta_{2}^{2} }{ p \sigma^2 \left( (1 - p) \Delta_1^2 + p \Delta_2^2 + \sigma^2 \right)} +2 p(1-p) \bar{d^{1}} \bar{d^{2}} \dfrac{\Delta_{1} \Delta_{2}}{\sigma^2 \left( (1 - p) \Delta_1^2 + p \Delta_2^2 + \sigma^2 \right)} \\
    &+  (1-p)^{2} (\overline{d^{2}})^{2}  \dfrac{ \sigma^{2} + (1-p) \Delta_{1}^{2} }{(1 - p) \sigma^2 \left( (1 - p) \Delta_1^2 + p \Delta_2^2 + \sigma^2 \right)}  \bigr ] \\ 
    &< n_{1} \dfrac{(\overline{d^{1}})^{2}}{\sigma^{2}} + n [E +F] .
\end{align*}

As before, we denote \( c( \mathcal{D}, 1-\alpha) \) the critical value of a distribution \( \mathcal{D} \) at the \( 1-\alpha \) confidence level. 
We see that if \( c(\chi^{2}_{1}, 1-\alpha) < n_{1} \dfrac{(\overline{d^{1}})^{2}}{\sigma^{2}} \) , but \( n D < c(\chi^{2}_{2}, 1-\alpha) \) \footnote{It is clear that \( c(\chi^{2}_{1}, 1-\alpha) < c(\chi^{2}_{2}, 1-\alpha) \) and  \( n_{1} \dfrac{(\overline{d^{1}})^{2}}{\sigma^{2}} >  n D  \) }, i.e., when the statewise t test rejects, but the Wald type test would not reject solely on the evidence that comes from state 1, a rejection depends on the evidence that comes from state 2 in terms \( E \) and \( F \). Additionally, \( E \) and \( F \) need to compensate for the inflated denominator in term \( A \). This illustrates why for small values of \( \Delta_{2} \), the Wald type test has less power than the statewise test in state 1. Note that the term \( E \) is positive if \( sgn(\bar{d^{1}}) = sgn(\Delta_{1}) \) and \( sgn(\bar{d^{2}}) =  sgn(\Delta_{2}) \) and then makes a rejection of the Wald type test more likely.

\subsubsection*{Special case: \( \Delta_{2} = 0 \)}

In the special case of \( \Delta_{2} = 0 \), we have as factors in \( D, E, F \) in the expression of \( T^{h} \)  in Equation \ref{eq:S_h_D_E_F} above 
\begin{align*}
    \Sigma^{-1}_{11} + 2 \Sigma^{-1}_{12} +\Sigma^{-1}_{22}   &= \dfrac{1}{p \sigma^{2} + (1-p)p \Delta_{1}^{2}} \\
    \Sigma^{-1}_{11} + \Sigma^{-1}_{12} &= 0 \\
    \Sigma^{-1}_{11} &= \dfrac{1}{(1-p) \sigma^{2}}.
\end{align*}
Overall, we obtain 
\begin{align*}
    T^{h} &= n \left[ D + E + F \right] \\
	&=n \left[ p^{2} (\overline{d^{1}})^{2} \left(  \Sigma^{-1}_{11} + 2 \Sigma^{-1}_{12} +\Sigma^{-1}_{22}  \right) +2 p(1-p) \bar{d^{1}} \bar{d^{2}} \left( \Sigma^{-1}_{11} + \Sigma^{-1}_{12} \right) +  (1-p)^{2} (\overline{d^{2}})^{2}  \Sigma^{-1}_{11}  \right] \\ 
	&=n \left[ p^{2} (\overline{d^{1}})^{2} \dfrac{1}{p \sigma^{2} + (1-p)p \Delta_{1}^{2}}  +2 p(1-p) \bar{d^{1}} \bar{d^{2}} \cdot 0 +  (1-p)^{2} (\overline{d^{2}})^{2}  \dfrac{1}{(1-p) \sigma^{2}}  \right] \\ 
	&= n_{1}  \dfrac{ (\overline{d^{1}})^{2} }{\sigma^{2} + (1-p)\Delta_{1}^{2}} + n_{2} \dfrac{(\overline{d^{2}})^{2}  }{\sigma^{2}} .
\end{align*}
We see that the second term lives under the null of \( \Delta_{2}= 0 \) and does not make rejections of the Wald type test more likely as \( n_{2} \) grows, while the denominator of the first term is inflated by the term \( (1-p) \Delta_{1}^{2} \) relative to the statistic based on the statewise loss differentials in state 1. This shows the lower power of the Wald type test relative to the statewise test in state 1. 
\clearpage

\subsection{Data} \label{app:data}
\textbf{Large cap stocks:} 

\noindent Abbott Laboratories (ABT), Adobe Incorporation (ADBE), Applied Materials Incorporation (AMAT), Air Products and Chemicals Incorporation (APD),  
Bank of America (BAC), Franklin Resources Incorporation (BEN), Bank of New York Mellon Corporation (BK), Colgate-Palmolive Company (CL), Chevron Corporation (CVX), Dominion Energy Incorporation (D), Emerson Electric Company (EMR), Nextera Energy Incorporation (FPL), General Electric Company (GE), Illinois Tool Works Incorporation (ITW), Johnson \& Johnson (JNJ), JP Morgan Chase and Company (JPM), Coca-Cola Company (KO), Lowe's Companies Incorporation (LOW), 3M (MMM), Merck \& Co. Incorporation (MRK), Occidental Petroleum Corporation (OXY), Paccar Incorporation (PCAR), PepsiCo Incorporation (PEP), Pfizer Incorporation (PFE), Procter \& Gamble Company (PG), Southern Company (SO), Sysco Corporation (SYY), Target Corporation (TGT), Thermo Fisher Scientific Incorporation (TMO), Walmart Incorporation (WMT), Exxon Mobil Corporation (XOM).

\begin{table}[h!]
	\centering \footnotesize
	\caption{Descriptive statistics of large cap stocks}
	\begin{tabular}{l|ccc|ccc} \hline \hline
		& \multicolumn{3}{c|}{In-sample} & \multicolumn{3}{c}{Out-of-sample} \\ \hline
		stock & Mean  & Skewness & Kurtosis & Mean  & Skewness & Kurtosis \\ \hline
		ABT   & -0.005 & -14.792 & 557.536 & 0.096 & -0.398 & 3.001 \\
    ADBE  & 0.022 & -0.628 & 13.738 & 0.141 & 0.173 & 7.411 \\
    AMAT  & -0.024 & 0.289 & 4.828 & 0.067 & -0.563 & 2.231 \\
    APD   & 0.031 & -0.412 & 4.879 & 0.043 & -0.013 & 3.106 \\
    BAC   & -0.022 & -0.348 & 25.807 & 0.090 & -0.375 & 3.039 \\
    BEN   & -0.006 & -15.213 & 589.914 & -0.031 & -0.020 & 7.697 \\
    BK    & -0.015 & -0.249 & 18.934 & 0.021 & -0.857 & 5.195 \\
    CL    & 0.004 & -17.033 & 676.708 & -0.019 & -0.503 & 6.272 \\
    CVX   & 0.017 & -0.378 & 7.388 & 0.024 & -0.277 & 2.816 \\
    D     & 0.024 & -0.776 & 10.400 & 0.004 & -0.788 & 3.012 \\
    EMR   & 0.009 & -0.223 & 6.594 & 0.033 & -0.181 & 1.803 \\
    FPL   & 0.037 & 0.073 & 8.878 & 0.061 & -0.537 & 3.381 \\
    GE    & -0.016 & 0.037 & 9.110 & -0.187 & -0.259 & 4.406 \\
    ITW   & 0.028 & -0.055 & 4.503 & 0.037 & -1.371 & 5.758 \\
    JNJ   & 0.019 & -0.776 & 16.677 & 0.029 & -2.054 & 18.538 \\
    JPM   & 0.001 & 0.267 & 13.376 & 0.072 & -0.349 & 3.506 \\
    KO    & 0.010 & -0.304 & 7.203 & 0.014 & -0.628 & 2.795 \\
    LOW   & 0.040 & 0.295 & 4.989 & 0.031 & 0.418 & 6.582 \\
    MMM   & 0.031 & -0.107 & 5.217 & 0.028 & -0.993 & 6.265 \\
    MRK   & -0.011 & -1.895 & 32.944 & 0.048 & 0.727 & 10.287 \\
    OXY   & 0.046 & -0.355 & 8.841 & -0.015 & -0.055 & 2.704 \\
    PCAR  & 0.044 & 0.020 & 4.417 & 0.015 & -0.587 & 2.870 \\
    PEP   & 0.020 & -0.072 & 12.930 & 0.017 & -0.286 & 2.826 \\
    PFE   & -0.009 & -0.406 & 5.566 & 0.041 & 0.196 & 5.460 \\
    PG    & 0.023 & -0.222 & 6.709 & 0.023 & 0.576 & 11.279 \\
    SO    & 0.022 & 0.031 & 6.228 & -0.008 & -0.309 & 1.194 \\
    SYY   & 0.021 & 0.132 & 6.640 & 0.040 & -0.788 & 17.339 \\
    TGT   & 0.020 & 0.051 & 5.564 & -0.005 & -1.326 & 10.984 \\
    TMO   & 0.048 & -0.126 & 4.848 & 0.078 & -0.174 & 3.417 \\
    WMT   & 0.009 & 0.114 & 5.705 & 0.043 & 0.281 & 19.986 \\
    XOM   & 0.016 & -0.298 & 8.909 & -0.028 & -0.358 & 3.185 \\ \hline \hline
	\end{tabular}%
	\label{tab:dstat}%
\caption*{\scriptsize Table rows correspond to stocks. The left block of the table reports sample mean, skewness and kurtosis for in-sample period and the right block for the out-of-sample period.}
\end{table}%

\begin{table}[h!]
	\centering \small
	\caption{Implemented risk factors}
	\begin{tabular}{l|p{13cm}} \hline \hline
		$LH = 10$ & Interest rates: GS10 (Tbill), Sovereign bonds interest rates: UK, Japan, Australia, Canada, Sweden; exchange rates between ``standard'' pairs of currencies:
    USD/AUD, USD/BRL, USD/CAD, USD/CHF, USD/CNY, USD/EUR, USD/GBP, USD/HKD, 
    USD/INR, USD/JPY, USD/KRW, USD/MXN, USD/NOK, USD/NZD, USD/RUB, USD/SEK, 
    USD/SGD, USD/TRY, USD/ZAR \\
		$LH = 20$ & Sovereign bonds interest rates: France, Germany, Greece, Italy, Mexico, Netherlands, South Africa, Spain, Switzerland; Small cap index price; Large cap volatility (VIX); Equity market volatility tracker; exchange rates ``non-standard'' currencies USD/DKK, USD/LKR, USD/MYR, USD/THB, USD/TWD, USD/VEB; price of metals\\
		$LH = 40$ & Corporate bonds credit spread; exchnage rate volatility of currencies in $LH = 10$ \\
		$LH = 60$ & Interest rate volatility of sovereign bonds in $LH =10$; Volatility of small cap index; Other commodities price:  Potash, fertilizer, phosphate rocks, Rare earths,
    terephthalic acid, flat glass; Metal price volatility\\ 
        $LH = 120$ & Credit spread volatility; Food price: Barley, Maize, Sorghum,Rice, Wheat, Wheat, Banana, Banana, Orange, Beef, Chicken,
          Lamb, Shrimps, Mexican, Sugar\\ \hline \hline
	\end{tabular}%
    \caption*{\scriptsize The table lists examples of the risk factors suggested by the BCBS, e.g. p. 92 of the 2019 regulation. The variables listed in the table are sourced from St Louis Fed, Fed, Worldbank, Yahoo Finance.}
	\label{tab:LH_ours}%
\end{table}%


\clearpage
\subsection{Empirical application} \label{app:ea}

\subsubsection{One day ahead forecasts}
\begin{table}[h!]
	\centering
	\footnotesize
	\caption{Conditional Expected Shortfall forecasts $t+1$ for CMCS.}
	\begin{tabular}{lccccccc} \hline\hline
    & UC    & $LH = 10$ & $LH =20$ & $LH = 40$ & $LH = 60$ & $LH = 120$ & $ES_{BCBS}$ \bigstrut[b]\\
    \hline 
    ABT   & -1.35 & -1.39 & -1.39 & -1.66 & -1.68 & -1.67 & 17.52 \bigstrut[t]\\
    ADBE  & -1.71 & -1.71 & -1.71 & -1.68 & -1.67 & -1.66 & 17.64 \\
    AMAT  & -2.01 & -2.00 & -2.00 & -2.04 & -2.02 & -2.01 & 21.30 \\
    APD   & -1.31 & -1.24 & -1.24 & -1.22 & -1.22 & -1.22 & 12.85 \\
    BAC   & -1.60 & -1.64 & -1.64 & -1.60 & -1.61 & -1.61 & 17.01 \\
    BEN   & -2.12 & -1.85 & -1.85 & -1.99 & -1.91 & -1.88 & 20.24 \\
    BK    & -1.37 & -1.46 & -1.46 & -1.41 & -1.40 & -1.41 & 14.90 \\
    CL    & -1.10 & -1.37 & -1.37 & -1.34 & -1.28 & -1.22 & 13.57 \\
    CVX   & -1.56 & -1.43 & -1.43 & -1.49 & -1.46 & -1.45 & 15.36 \\
    D     & -1.25 & -1.36 & -1.36 & -1.30 & -1.32 & -1.31 & 13.90 \\
    EMR   & -1.46 & -1.47 & -1.47 & -1.46 & -1.45 & -1.45 & 15.34 \\
    FPL   & -1.21 & -1.29 & -1.29 & -1.32 & -1.33 & -1.31 & 14.01 \\
    GE    & -1.74 & -1.78 & -1.78 & -1.77 & -1.75 & -1.75 & 18.52 \\
    ITW   & -1.27 & -1.27 & -1.27 & -1.25 & -1.24 & -1.24 & 13.14 \\
    JNJ   & -1.06 & -1.03 & -1.03 & -1.04 & -1.03 & -1.03 & 10.91 \\
    JPM   & -1.51 & -1.37 & -1.37 & -1.34 & -1.34 & -1.34 & 14.15 \\
    KO    & -0.94 & -0.96 & -0.96 & -0.99 & -1.00 & -0.99 & 10.47 \\
    LOW   & -1.56 & -1.51 & -1.51 & -1.51 & -1.52 & -1.54 & 16.12 \\
    MMM   & -1.18 & -1.25 & -1.25 & -1.24 & -1.23 & -1.21 & 12.92 \\
    MRK   & -1.21 & -1.24 & -1.24 & -1.24 & -1.27 & -1.29 & 13.45 \\
    OXY   & -2.31 & -1.63 & -1.63 & -1.63 & -1.62 & -1.62 & 17.20 \\
    PCAR  & -1.67 & -1.55 & -1.55 & -1.55 & -1.55 & -1.56 & 16.41 \\
    PEP   & -0.98 & -0.94 & -0.94 & -0.95 & -0.93 & -0.96 & 10.03 \\
    PFE   & -1.10 & -1.13 & -1.13 & -1.13 & -1.11 & -1.11 & 11.79 \\
    PG    & -0.97 & -1.02 & -1.02 & -1.04 & -1.04 & -1.03 & 10.90 \\
    SO    & -1.12 & -1.11 & -1.11 & -1.07 & -1.07 & -1.09 & 11.45 \\
    SYY   & -1.17 & -1.25 & -1.25 & -1.24 & -1.23 & -1.24 & 13.16 \\
    TGT   & -1.79 & -1.84 & -1.84 & -1.96 & -1.94 & -1.89 & 20.22 \\
    TMO   & -1.35 & -1.36 & -1.36 & -1.33 & -1.31 & -1.32 & 13.94 \\
    WMT   & -1.24 & -1.27 & -1.27 & -1.33 & -1.32 & -1.32 & 13.94 \\
    XOM   & -1.28 & -1.23 & -1.23 & -1.22 & -1.22 & -1.21 & 12.85 \\
    \hline\hline
    \end{tabular}%
	\caption*{\scriptsize Numbers in the table correspond to out-of-sample average ES forecasts for different stocks (in rows) and liquidity horizons in columns. The last column corresponds to ES coefficient defined in \eqref{eq:ES_bcbs}.}
	\label{tab:cmcs1}%
\end{table}%

\begin{table}[h!]
	\centering \footnotesize
	\caption{Conditional Expected Shortfall forecasts $t+1$ for DFC with sample covariance estimator.}
	\begin{tabular}{lccccccc} \hline\hline
    & UC    & $LH = 10$ & $LH =20$ & $LH = 40$ & $LH = 60$ & $LH = 120$ & $ES_{BCBS}$ \bigstrut[b]\\
    \hline 
    ABT   & -1.51 & -1.51 & -1.51 & -1.51 & -1.51 & -1.51 & 15.92 \bigstrut[t]\\
    ADBE  & -1.64 & -1.64 & -1.64 & -1.64 & -1.64 & -1.64 & 17.26 \\
    AMAT  & -1.90 & -1.90 & -1.90 & -1.90 & -1.90 & -1.90 & 19.99 \\
    APD   & -1.22 & -1.22 & -1.22 & -1.22 & -1.22 & -1.22 & 12.90 \\
    BAC   & -1.64 & -1.64 & -1.64 & -1.64 & -1.64 & -1.64 & 17.32 \\
    BEN   & -2.30 & -2.31 & -2.31 & -2.30 & -2.30 & -2.30 & 24.22 \\
    BK    & -1.35 & -1.35 & -1.35 & -1.35 & -1.35 & -1.35 & 14.19 \\
    CL    & -1.49 & -1.49 & -1.49 & -1.49 & -1.49 & -1.49 & 15.74 \\
    CVX   & -1.33 & -1.33 & -1.33 & -1.33 & -1.33 & -1.33 & 14.04 \\
    D     & -1.06 & -1.06 & -1.06 & -1.06 & -1.06 & -1.06 & 11.12 \\
    EMR   & -1.36 & -1.36 & -1.36 & -1.36 & -1.36 & -1.36 & 14.37 \\
    FPL   & -1.02 & -1.02 & -1.02 & -1.02 & -1.02 & -1.02 & 10.74 \\
    GE    & -1.65 & -1.65 & -1.65 & -1.65 & -1.65 & -1.65 & 17.36 \\
    ITW   & -1.17 & -1.17 & -1.17 & -1.17 & -1.17 & -1.17 & 12.36 \\
    JNJ   & -0.95 & -0.95 & -0.95 & -0.95 & -0.95 & -0.95 & 10.04 \\
    JPM   & -1.32 & -1.32 & -1.32 & -1.32 & -1.32 & -1.32 & 13.87 \\
    KO    & -0.90 & -0.90 & -0.90 & -0.90 & -0.90 & -0.90 & 9.46 \\
    LOW   & -1.59 & -1.59 & -1.59 & -1.59 & -1.59 & -1.59 & 16.76 \\
    MMM   & -1.14 & -1.14 & -1.14 & -1.14 & -1.14 & -1.14 & 11.98 \\
    MRK   & -1.20 & -1.20 & -1.20 & -1.20 & -1.20 & -1.20 & 12.67 \\
    OXY   & -1.54 & -1.54 & -1.54 & -1.54 & -1.54 & -1.54 & 16.27 \\
    PCAR  & -1.49 & -1.49 & -1.49 & -1.49 & -1.49 & -1.49 & 15.67 \\
    PEP   & -0.88 & -0.88 & -0.88 & -0.88 & -0.88 & -0.88 & 9.22 \\
    PFE   & -1.12 & -1.12 & -1.12 & -1.12 & -1.12 & -1.12 & 11.79 \\
    PG    & -0.94 & -0.94 & -0.94 & -0.94 & -0.94 & -0.94 & 9.91 \\
    SO    & -1.00 & -1.00 & -1.00 & -1.00 & -1.00 & -1.00 & 10.56 \\
    SYY   & -1.14 & -1.14 & -1.14 & -1.14 & -1.14 & -1.14 & 11.99 \\
    TGT   & -1.57 & -1.57 & -1.57 & -1.57 & -1.57 & -1.57 & 16.50 \\
    TMO   & -1.30 & -1.30 & -1.30 & -1.30 & -1.30 & -1.30 & 13.70 \\
    WMT   & -1.25 & -1.25 & -1.25 & -1.25 & -1.25 & -1.25 & 13.21 \\
    XOM   & -1.15 & -1.15 & -1.15 & -1.15 & -1.15 & -1.15 & 12.10 \\
    
    \hline\hline
    \end{tabular}%
	\caption*{\scriptsize Numbers in the table correspond to out-of-sample average ES forecasts for different stocks (in rows) and liquidity horizons in columns. The last column corresponds to ES coefficient defined in \eqref{eq:ES_bcbs}.}
	\label{tab:bt_sample1}%
\end{table}%

\begin{table}[h!]
	\centering \footnotesize
	\caption{Conditional Expected Shortfall forecasts $t+1$ for DFC with truncation $T/4$.}
	\begin{tabular}{lccccccc} \hline\hline
    & UC    & $LH = 10$ & $LH =20$ & $LH = 40$ & $LH = 60$ & $LH = 120$ & $ES_{BCBS}$ \bigstrut[b]\\
    \hline 
    ABT   & -2.94 & -2.99 & -2.99 & -2.98 & -2.96 & -2.99 & 31.50 \bigstrut[t]\\
    ADBE  & -4.03 & -4.05 & -4.05 & -4.12 & -4.23 & -4.26 & 44.78 \\
    AMAT  & -4.29 & -4.53 & -4.53 & -4.41 & -4.47 & -4.53 & 47.77 \\
    APD   & -2.84 & -2.61 & -2.61 & -2.88 & -2.89 & -2.86 & 30.29 \\
    BAC   & -3.50 & -3.58 & -3.58 & -3.62 & -3.65 & -3.57 & 37.98 \\
    BEN   & -3.09 & -3.46 & -3.46 & -3.44 & -3.40 & -3.41 & 36.20 \\
    BK    & -3.09 & -3.08 & -3.08 & -2.91 & -3.05 & -3.04 & 32.28 \\
    CL    & -2.25 & -2.21 & -2.21 & -2.24 & -2.22 & -2.22 & 23.41 \\
    CVX   & -2.79 & -2.89 & -2.89 & -2.95 & -2.95 & -2.95 & 31.15 \\
    D     & -1.32 & -1.45 & -1.45 & -1.42 & -1.44 & -1.33 & 36.52 \\
    FPL   & -2.42 & -2.38 & -2.38 & -2.44 & -2.43 & -2.39 & 25.56 \\
    GE    & -3.30 & -3.17 & -3.17 & -3.36 & -3.29 & -3.27 & 34.87 \\
    ITW   & -2.98 & -2.71 & -2.71 & -2.85 & -2.80 & -2.75 & 29.96 \\
    JNJ   & -1.89 & -1.93 & -1.93 & -1.64 & -1.69 & -1.70 & 18.37 \\
    JPM   & -3.07 & -3.02 & -3.02 & -2.93 & -2.99 & -3.11 & 32.50 \\
    KO    & -1.94 & -1.91 & -1.91 & -1.94 & -1.94 & -1.95 & 20.50 \\
    LOW   & -2.61 & -2.58 & -2.58 & -2.51 & -2.51 & -2.50 & 26.52 \\
    MMM   & -2.31 & -2.20 & -2.20 & -2.23 & -2.29 & -2.30 & 24.15 \\
    MRK   & -2.44 & -2.54 & -2.54 & -2.44 & -2.46 & -2.49 & 26.35 \\
    OXY   & -3.26 & -3.25 & -3.25 & -3.27 & -3.18 & -3.17 & 33.97 \\
    PCAR  & -2.79 & -2.60 & -2.60 & -2.65 & -2.60 & -2.58 & 27.79 \\
    PEP   & -1.68 & -1.65 & -1.65 & -1.65 & -1.70 & -1.71 & 17.97 \\
    PG    & -2.02 & -1.91 & -1.91 & -1.96 & -1.95 & -1.98 & 20.83 \\
    SO    & -2.26 & -2.22 & -2.22 & -2.28 & -2.30 & -2.33 & 24.49 \\
    SYY   & -1.80 & -1.82 & -1.82 & -1.83 & -1.84 & -1.83 & 19.35 \\
    TGT   & -3.24 & -3.14 & -3.14 & -3.24 & -3.22 & -3.19 & 33.79 \\
    TMO   & -2.99 & -2.81 & -2.81 & -2.84 & -2.80 & -2.82 & 30.03 \\
    WMT   & -2.22 & -2.15 & -2.15 & -2.30 & -2.26 & -2.26 & 23.81 \\
    XOM   & -2.60 & -2.63 & -2.63 & -2.46 & -2.49 & -2.47 & 26.57 \\

    \hline\hline
    \end{tabular}%
	\caption*{\scriptsize Numbers in the table correspond to out-of-sample average ES forecasts for different stocks (in rows) and liquidity horizons in columns. The last column corresponds to ES coefficient defined in \eqref{eq:ES_bcbs}.}
	\label{tab:bt_tr4_1}%
\end{table}%

\begin{table}[h!]
	\centering \footnotesize
	\caption{Conditional Expected Shortfall forecasts $t+1$ for DFC with truncation $T/2$.}
	\begin{tabular}{lccccccc} \hline\hline
    & UC    & $LH = 10$ & $LH =20$ & $LH = 40$ & $LH = 60$ & $LH = 120$ & $ES_{BCBS}$ \bigstrut[b]\\
    \hline 
    ABT   & -3.12 & -3.00 & -3.00 & -3.04 & -3.06 & -3.02 & 32.05 \bigstrut[t]\\
    ADBE  & -4.06 & -4.09 & -4.09 & -3.99 & -4.04 & -3.98 & 42.65 \\
    AMAT  & -4.50 & -4.23 & -4.23 & -4.60 & -4.62 & -4.66 & 48.88 \\
    APD   & -2.48 & -2.90 & -2.90 & -2.75 & -2.75 & -2.83 & 29.87 \\
    BAC   & -3.70 & -3.70 & -3.70 & -3.59 & -3.63 & -3.56 & 38.03 \\
    BEN   & -3.37 & -3.41 & -3.41 & -3.41 & -3.42 & -3.43 & 36.16 \\
    BK    & -3.01 & -3.06 & -3.06 & -2.96 & -3.02 & -2.97 & 31.90 \\
    CL    & -2.26 & -2.18 & -2.18 & -2.22 & -2.23 & -2.24 & 23.52 \\
    CVX   & -2.92 & -2.94 & -2.94 & -2.95 & -2.96 & -2.95 & 31.24 \\
    D     & -0.39 & -1.02 & -1.02 & -1.42 & -1.45 & -1.35 & 37.94 \\
    FPL   & -2.30 & -2.41 & -2.41 & -2.32 & -2.41 & -2.40 & 25.28 \\
    GE    & -3.15 & -3.24 & -3.24 & -3.27 & -3.30 & -3.30 & 34.91 \\
    ITW   & -2.95 & -3.05 & -3.05 & -2.90 & -2.93 & -2.95 & 31.36 \\
    JNJ   & -1.82 & -1.85 & -1.85 & -1.70 & -1.68 & -1.69 & 18.26 \\
    JPM   & -2.92 & -3.17 & -3.17 & -3.04 & -2.98 & -3.13 & 33.03 \\
    KO    & -1.91 & -1.92 & -1.92 & -1.94 & -1.93 & -1.94 & 20.43 \\
    LOW   & -2.45 & -2.62 & -2.62 & -2.47 & -2.48 & -2.41 & 26.15 \\
    MMM   & -2.37 & -2.29 & -2.29 & -2.20 & -2.15 & -2.05 & 22.64 \\
    MRK   & -2.50 & -2.50 & -2.50 & -2.52 & -2.52 & -2.51 & 26.53 \\
    OXY   & -3.29 & -3.23 & -3.23 & -3.06 & -3.13 & -3.18 & 33.60 \\
    PCAR  & -2.90 & -2.99 & -2.99 & -2.91 & -2.89 & -2.80 & 30.35 \\
    PEP   & -1.75 & -1.72 & -1.72 & -1.72 & -1.78 & -1.78 & 18.68 \\
    PG    & -1.94 & -1.98 & -1.98 & -2.00 & -1.99 & -1.97 & 20.99 \\
    SO    & -2.21 & -2.23 & -2.23 & -1.95 & -2.07 & -2.07 & 22.38 \\
    SYY   & -1.82 & -1.81 & -1.81 & -1.75 & -1.81 & -1.81 & 19.04 \\
    TGT   & -3.26 & -3.22 & -3.22 & -3.17 & -3.20 & -3.18 & 33.70 \\
    TMO   & -2.92 & -2.87 & -2.87 & -2.92 & -2.90 & -2.88 & 30.57 \\
    WMT   & -2.19 & -2.24 & -2.24 & -2.28 & -2.26 & -2.25 & 23.80 \\
    XOM   & -2.42 & -2.57 & -2.57 & -2.54 & -2.56 & -2.51 & 26.97 \\
    
    \hline\hline
    \end{tabular}%
	\caption*{\scriptsize Numbers in the table correspond to out-of-sample average ES forecasts for different stocks (in rows) and liquidity horizons in columns. The last column corresponds to ES coefficient defined in \eqref{eq:ES_bcbs}.}
	\label{tab:bt_tr2_1}%
\end{table}%

\begin{figure}[h!]
	\caption{Method selection for 1 day ahead ES forecasts for different liquidity horizons:  DFC with truncation $T/4$.}
	\centering
	\includegraphics[trim={0cm 0cm 0cm 0cm},clip,width = 0.9\textwidth]{t10/BT_stocks_BT_TR978.eps}
	\caption*{\scriptsize Heatmaps correspond to the liquidity horizons as specified in Table \ref{tab:LH}. On each heatmap the x-axis corresponds to stocks and the y-axis to models. The heatmap cells correspond to the average number of the out-of-sample periods where the method is included in the MCS. The warmer the colour, the more frequently was the method selected for the forecast combination.}
	\label{fig:BT4_stocks_1}
\end{figure}

\begin{figure}[h!]
	\caption{Method selection for 1 day ahead ES forecasts for different liquidity horizons:  DFC with truncation $T/2$.}
	\centering
	\includegraphics[trim={0cm 0cm 0cm 0cm},clip,width = 0.9\textwidth]{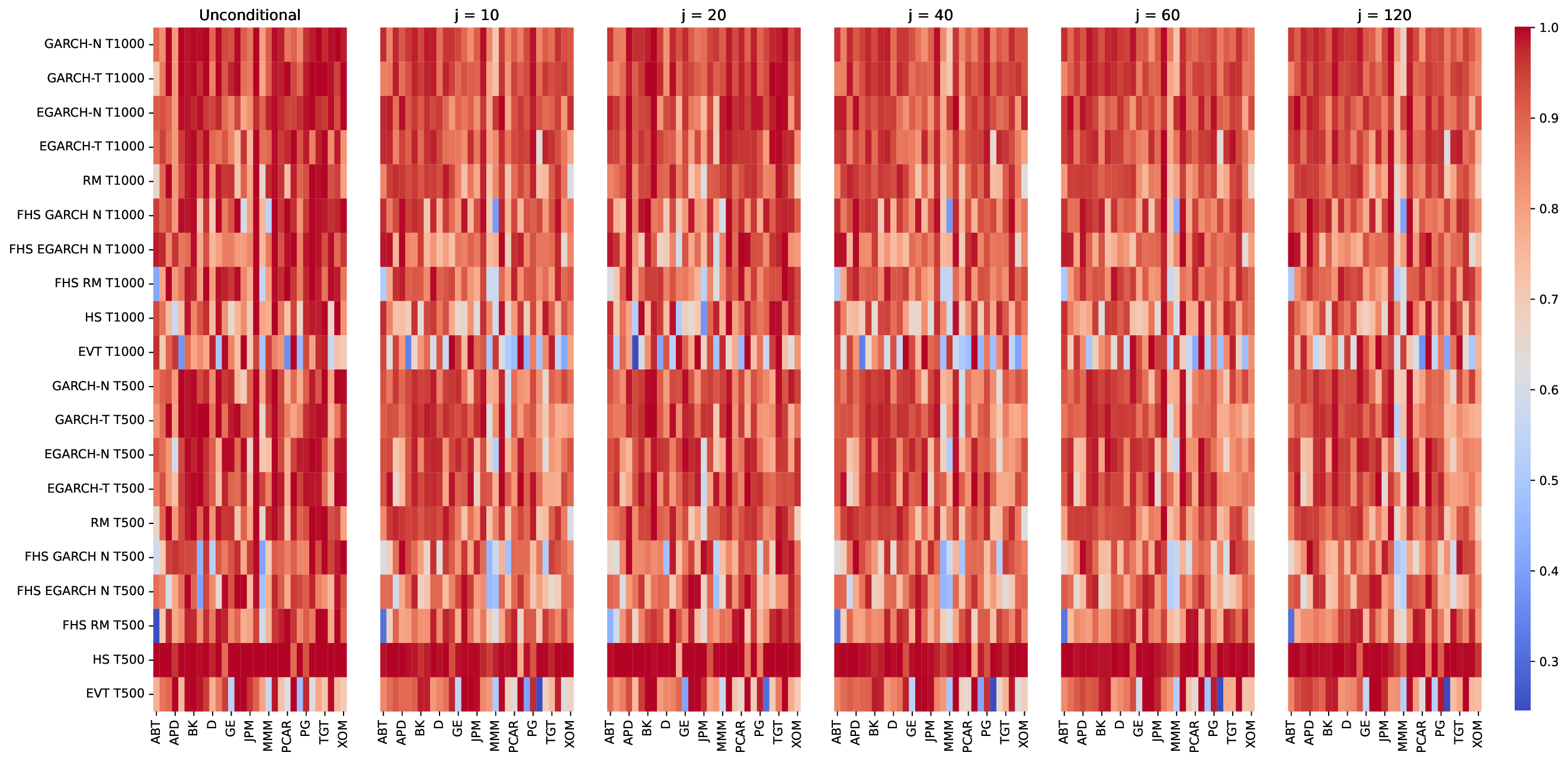}
	\caption*{\scriptsize Heatmaps correspond to the liquidity horizons as specified in Table \ref{tab:LH}. On each heatmap the x-axis corresponds to stocks and the y-axis to models. The heatmap cells correspond to the average number of the out-of-sample periods where the method is included in the MCS. The warmer the colour, the more frequently was the method selected for the forecast combination.}
	\label{fig:BT2_stocks_1}
\end{figure}

\begin{figure}[h!]
	\caption{Method selection for 1 day ahead ES forecasts for different liquidity horizons:  DFC with sample covariance estimator.}
	\centering
	\includegraphics[trim={0cm 0cm 0cm 0cm},clip,width = 0.9\textwidth]{BT_stocks_BT_sample.eps}
	\caption*{\scriptsize Heatmaps correspond to the liquidity horizons as specified in Table \ref{tab:LH}. On each heatmap the x-axis corresponds to stocks and the y-axis to models. The heatmap cells correspond to the average number of the out-of-sample periods where the method is included in the MCS. The warmer the colour, the more frequently was the method selected for the forecast combination.}
	\label{fig:BTs_stocks_1}
\end{figure}

\clearpage
\subsubsection{Ten days ahead forecasts}

\begin{table}[h!]
	\centering
	\footnotesize
	\caption{Conditional Expected Shortfall forecasts $t+10$ for CMCS.}
	\begin{tabular}{lccccccc} \hline\hline
    & UC    & $LH = 10$ & $LH =20$ & $LH = 40$ & $LH = 60$ & $LH = 120$ & $ES_{BCBS}$ \bigstrut[b]\\
    \hline 
       ABT   & -1.44 & -1.58 & -1.58 & -1.45 & -1.49 & -1.57 & 5.42 \bigstrut[t]\\
    ADBE  & -1.82 & -1.72 & -1.72 & -1.66 & -1.69 & -1.68 & 5.86 \\
    AMAT  & -1.96 & -2.07 & -2.07 & -2.08 & -2.07 & -2.07 & 7.23 \\
    APD   & -1.36 & -1.24 & -1.24 & -1.22 & -1.21 & -1.21 & 4.23 \\
    BAC   & -1.61 & -1.70 & -1.70 & -1.64 & -1.62 & -1.63 & 5.70 \\
    BEN   & -2.78 & -1.93 & -1.93 & -2.02 & -1.95 & -1.94 & 6.88 \\
    BK    & -1.49 & -1.46 & -1.46 & -1.45 & -1.47 & -1.46 & 5.08 \\
    CL    & -1.29 & -1.25 & -1.25 & -1.31 & -1.28 & -1.30 & 4.54 \\
    CVX   & -1.42 & -1.43 & -1.43 & -1.47 & -1.42 & -1.42 & 4.99 \\
    D     & -1.17 & -1.12 & -1.12 & -1.24 & -1.28 & -1.28 & 4.39 \\
    EMR   & -1.43 & -1.49 & -1.49 & -1.50 & -1.50 & -1.51 & 5.23 \\
    FPL   & -1.25 & -1.14 & -1.14 & -1.35 & -1.39 & -1.43 & 4.90 \\
    GE    & -1.82 & -1.75 & -1.75 & -1.75 & -1.73 & -1.73 & 6.05 \\
    ITW   & -1.38 & -1.29 & -1.29 & -1.32 & -1.30 & -1.30 & 4.53 \\
    JNJ   & -0.98 & -0.99 & -0.99 & -1.02 & -1.05 & -1.04 & 3.58 \\
    JPM   & -1.42 & -1.41 & -1.41 & -1.38 & -1.38 & -1.38 & 4.80 \\
    KO    & -0.93 & -0.94 & -0.94 & -1.00 & -1.00 & -1.01 & 3.47 \\
    LOW   & -1.53 & -1.62 & -1.62 & -1.51 & -1.53 & -1.55 & 5.40 \\
    MMM   & -1.28 & -1.23 & -1.23 & -1.21 & -1.23 & -1.24 & 4.28 \\
    MRK   & -1.30 & -1.24 & -1.24 & -1.26 & -1.26 & -1.25 & 4.34 \\
    OXY   & -1.62 & -1.81 & -1.81 & -1.84 & -1.94 & -1.91 & 6.80 \\
    PCAR  & -1.55 & -1.58 & -1.58 & -1.57 & -1.55 & -1.53 & 5.38 \\
    PEP   & -0.92 & -0.96 & -0.96 & -0.95 & -0.93 & -0.94 & 3.28 \\
    PFE   & -1.14 & -1.13 & -1.13 & -1.14 & -1.08 & -1.09 & 3.95 \\
    PG    & -0.94 & -0.92 & -0.92 & -1.02 & -1.07 & -1.07 & 3.63 \\
    SO    & -1.03 & -1.09 & -1.09 & -1.07 & -1.07 & -1.07 & 3.75 \\
    SYY   & -1.36 & -1.23 & -1.23 & -1.25 & -1.30 & -1.31 & 4.48 \\
    TGT   & -1.71 & -1.89 & -1.89 & -1.92 & -1.85 & -1.85 & 6.54 \\
    TMO   & -1.51 & -1.36 & -1.36 & -1.34 & -1.33 & -1.32 & 4.63 \\
    WMT   & -1.24 & -1.21 & -1.21 & -1.26 & -1.29 & -1.26 & 4.38 \\
    XOM   & -1.25 & -1.23 & -1.23 & -1.26 & -1.24 & -1.24 & 4.32 \\
    
    \hline\hline
    \end{tabular}%
	\caption*{\scriptsize Numbers in the table correspond to out-of-sample average ES forecasts for different stocks (in rows) and liquidity horizons in columns. The last column corresponds to ES coefficient defined in \eqref{eq:ES_bcbs}.}
	\label{tab:cmcs10}%
\end{table}%

\begin{table}[h!]
	\centering \footnotesize
	\caption{Conditional Expected Shortfall forecasts $t+10$ for DFC with sample covariance estimator.}
	\begin{tabular}{lccccccc} \hline\hline
    & UC    & $LH = 10$ & $LH =20$ & $LH = 40$ & $LH = 60$ & $LH = 120$ & $ES_{BCBS}$ \bigstrut[b]\\
    \hline 
    
    ABT   & -1.51 & -1.51 & -1.51 & -1.51 & -1.51 & -1.51 & 5.23 \bigstrut[t]\\
    ADBE  & -1.64 & -1.64 & -1.64 & -1.64 & -1.64 & -1.64 & 5.68 \\
    AMAT  & -1.90 & -1.90 & -1.90 & -1.90 & -1.90 & -1.90 & 6.57 \\
    APD   & -1.22 & -1.22 & -1.22 & -1.22 & -1.22 & -1.22 & 4.24 \\
    BAC   & -1.64 & -1.64 & -1.64 & -1.64 & -1.64 & -1.64 & 5.70 \\
    BEN   & -2.30 & -2.31 & -2.31 & -2.30 & -2.30 & -2.30 & 7.97 \\
    BK    & -1.35 & -1.35 & -1.35 & -1.35 & -1.35 & -1.35 & 4.67 \\
    CL    & -1.49 & -1.49 & -1.49 & -1.49 & -1.49 & -1.49 & 5.17 \\
    CVX   & -1.33 & -1.33 & -1.33 & -1.33 & -1.33 & -1.33 & 4.61 \\
    D     & -1.06 & -1.06 & -1.06 & -1.06 & -1.06 & -1.06 & 3.66 \\
    EMR   & -1.36 & -1.36 & -1.36 & -1.36 & -1.36 & -1.36 & 4.72 \\
    FPL   & -1.02 & -1.02 & -1.02 & -1.02 & -1.02 & -1.02 & 3.53 \\
    GE    & -1.65 & -1.65 & -1.65 & -1.65 & -1.65 & -1.65 & 5.71 \\
    ITW   & -1.17 & -1.17 & -1.17 & -1.17 & -1.17 & -1.17 & 4.06 \\
    JNJ   & -0.95 & -0.95 & -0.95 & -0.95 & -0.95 & -0.95 & 3.30 \\
    JPM   & -1.32 & -1.32 & -1.32 & -1.32 & -1.32 & -1.32 & 4.56 \\
    KO    & -0.90 & -0.90 & -0.90 & -0.90 & -0.90 & -0.90 & 3.11 \\
    LOW   & -1.59 & -1.59 & -1.59 & -1.59 & -1.59 & -1.59 & 5.51 \\
    MMM   & -1.14 & -1.14 & -1.14 & -1.14 & -1.14 & -1.14 & 3.94 \\
    MRK   & -1.20 & -1.20 & -1.20 & -1.20 & -1.20 & -1.20 & 4.17 \\
    OXY   & -1.54 & -1.54 & -1.54 & -1.54 & -1.54 & -1.54 & 5.35 \\
    PCAR  & -1.49 & -1.49 & -1.49 & -1.49 & -1.49 & -1.49 & 5.15 \\
    PEP   & -0.88 & -0.88 & -0.88 & -0.88 & -0.88 & -0.88 & 3.03 \\
    PFE   & -1.12 & -1.12 & -1.12 & -1.12 & -1.12 & -1.12 & 3.88 \\
    PG    & -0.94 & -0.94 & -0.94 & -0.94 & -0.94 & -0.94 & 3.26 \\
    SO    & -0.99 & -0.99 & -0.99 & -0.99 & -0.99 & -0.99 & 3.42 \\
    SYY   & -1.09 & -1.09 & -1.09 & -1.09 & -1.09 & -1.09 & 3.77 \\
    TGT   & -1.49 & -1.49 & -1.49 & -1.49 & -1.49 & -1.49 & 5.18 \\
    TMO   & -1.31 & -1.31 & -1.31 & -1.31 & -1.31 & -1.31 & 4.53 \\
    WMT   & -1.22 & -1.22 & -1.22 & -1.22 & -1.22 & -1.22 & 4.21 \\
    XOM   & -1.15 & -1.15 & -1.15 & -1.15 & -1.15 & -1.15 & 3.97 \\    
    \hline\hline
    \end{tabular}%
	\caption*{\scriptsize Numbers in the table correspond to out-of-sample average ES forecasts for different stocks (in rows) and liquidity horizons in columns. The last column corresponds to ES coefficient defined in \eqref{eq:ES_bcbs}.}
	\label{tab:bt_sample10}%
\end{table}%

\begin{table}[h!]
	\centering \footnotesize
	\caption{Conditional Expected Shortfall forecasts $t+10$ for DFC with truncation $T/4$.}
	\begin{tabular}{lccccccc} \hline\hline
    & UC    & $LH = 10$ & $LH =20$ & $LH = 40$ & $LH = 60$ & $LH = 120$ & $ES_{BCBS}$ \bigstrut[b]\\
    \hline 
    ABT   & -2.94 & -2.99 & -2.99 & -2.98 & -2.96 & -2.99 & 10.36 \bigstrut[t]\\
    ADBE  & -4.03 & -4.05 & -4.05 & -4.12 & -4.23 & -4.26 & 14.69 \\
    AMAT  & -4.29 & -4.53 & -4.53 & -4.41 & -4.47 & -4.53 & 15.73 \\
    APD   & -2.84 & -2.61 & -2.61 & -2.88 & -2.89 & -2.86 & 9.91 \\
    BAC   & -3.50 & -3.58 & -3.58 & -3.62 & -3.65 & -3.57 & 12.49 \\
    BEN   & -3.09 & -3.46 & -3.46 & -3.44 & -3.40 & -3.41 & 11.92 \\
    BK    & -3.09 & -3.08 & -3.08 & -2.91 & -3.05 & -3.04 & 10.63 \\
    CL    & -2.25 & -2.21 & -2.21 & -2.24 & -2.22 & -2.22 & 7.69 \\
    CVX   & -2.79 & -2.89 & -2.89 & -2.95 & -2.95 & -2.95 & 10.23 \\
    D     & -1.32 & -1.45 & -1.45 & -1.42 & -1.44 & -1.33 & 12.00 \\
    FPL   & -2.42 & -2.38 & -2.38 & -2.44 & -2.43 & -2.39 & 8.41 \\
    GE    & -3.30 & -3.17 & -3.17 & -3.36 & -3.29 & -3.27 & 11.45 \\
    ITW   & -2.98 & -2.71 & -2.71 & -2.85 & -2.80 & -2.75 & 9.85 \\
    JNJ   & -1.89 & -1.93 & -1.93 & -1.64 & -1.69 & -1.70 & 6.10 \\
    JPM   & -3.07 & -3.02 & -3.02 & -2.93 & -2.99 & -3.11 & 10.68 \\
    KO    & -1.94 & -1.91 & -1.91 & -1.94 & -1.94 & -1.95 & 6.73 \\
    LOW   & -2.61 & -2.58 & -2.58 & -2.51 & -2.51 & -2.50 & 8.74 \\
    MMM   & -2.31 & -2.20 & -2.20 & -2.23 & -2.29 & -2.30 & 7.93 \\
    MRK   & -2.44 & -2.54 & -2.54 & -2.44 & -2.46 & -2.49 & 8.68 \\
    OXY   & -3.26 & -3.25 & -3.25 & -3.27 & -3.18 & -3.17 & 11.19 \\
    PCAR  & -2.79 & -2.60 & -2.60 & -2.65 & -2.60 & -2.58 & 9.16 \\
    PEP   & -1.68 & -1.65 & -1.65 & -1.65 & -1.70 & -1.71 & 5.90 \\
    PG    & -2.02 & -1.91 & -1.91 & -1.96 & -1.95 & -1.98 & 6.84 \\
    SO    & -2.44 & -2.40 & -2.40 & -2.46 & -2.45 & -2.48 & 8.54 \\
    SYY   & -1.69 & -1.77 & -1.77 & -1.77 & -1.77 & -1.78 & 6.17 \\
    TGT   & -3.23 & -3.30 & -3.30 & -3.24 & -3.27 & -3.28 & 11.42 \\
    TMO   & -2.77 & -2.71 & -2.71 & -2.71 & -2.80 & -2.80 & 9.69 \\
    WMT   & -2.36 & -2.23 & -2.23 & -2.27 & -2.27 & -2.28 & 7.87 \\
    XOM   & -2.42 & -2.57 & -2.57 & -2.50 & -2.50 & -2.54 & 8.86 \\
    \hline\hline
    \end{tabular}%
	\caption*{\scriptsize Numbers in the table correspond to out-of-sample average ES forecasts for different stocks (in rows) and liquidity horizons in columns. The last column corresponds to ES coefficient defined in \eqref{eq:ES_bcbs}.}
	\label{tab:bt_tr4_10}%
\end{table}%

\begin{table}[h!]
	\centering \footnotesize
	\caption{Conditional Expected Shortfall forecasts $t+10$ for DFC with truncation $T/2$.}
	\begin{tabular}{lccccccc} \hline\hline
    & UC    & $LH = 10$ & $LH =20$ & $LH = 40$ & $LH = 60$ & $LH = 120$ & $ES_{BCBS}$ \bigstrut[b]\\
    \hline 
    ABT   & -3.12 & -3.00 & -3.00 & -3.04 & -3.06 & -3.02 & 10.53 \bigstrut[t]\\
    ADBE  & -4.06 & -4.09 & -4.09 & -3.99 & -4.04 & -3.98 & 14.06 \\
    AMAT  & -4.50 & -4.23 & -4.23 & -4.60 & -4.62 & -4.66 & 15.99 \\
    APD   & -2.48 & -2.90 & -2.90 & -2.75 & -2.75 & -2.83 & 9.84 \\
    BAC   & -3.70 & -3.70 & -3.70 & -3.59 & -3.63 & -3.56 & 12.54 \\
    BEN   & -3.37 & -3.41 & -3.41 & -3.41 & -3.42 & -3.43 & 11.89 \\
    BK    & -3.01 & -3.06 & -3.06 & -2.96 & -3.02 & -2.97 & 10.51 \\
    CL    & -2.26 & -2.18 & -2.18 & -2.22 & -2.23 & -2.24 & 7.72 \\
    CVX   & -2.92 & -2.94 & -2.94 & -2.95 & -2.96 & -2.95 & 10.27 \\
    D     & -0.39 & -1.02 & -1.02 & -1.42 & -1.45 & -1.35 & 12.53 \\
    FPL   & -2.30 & -2.41 & -2.41 & -2.32 & -2.41 & -2.40 & 8.32 \\
    GE    & -3.15 & -3.24 & -3.24 & -3.27 & -3.30 & -3.30 & 11.47 \\
    ITW   & -2.95 & -3.05 & -3.05 & -2.90 & -2.93 & -2.95 & 10.34 \\
    JNJ   & -1.82 & -1.85 & -1.85 & -1.70 & -1.68 & -1.69 & 6.05 \\
    JPM   & -2.92 & -3.17 & -3.17 & -3.04 & -2.98 & -3.13 & 10.89 \\
    KO    & -1.91 & -1.92 & -1.92 & -1.94 & -1.93 & -1.94 & 6.72 \\
    LOW   & -2.45 & -2.62 & -2.62 & -2.47 & -2.48 & -2.41 & 8.64 \\
    MMM   & -2.37 & -2.29 & -2.29 & -2.20 & -2.15 & -2.05 & 7.49 \\
    MRK   & -2.50 & -2.50 & -2.50 & -2.52 & -2.52 & -2.51 & 8.72 \\
    OXY   & -3.29 & -3.23 & -3.23 & -3.06 & -3.13 & -3.18 & 11.07 \\
    PCAR  & -2.90 & -2.99 & -2.99 & -2.91 & -2.89 & -2.80 & 10.01 \\
    PEP   & -1.75 & -1.72 & -1.72 & -1.72 & -1.78 & -1.78 & 6.13 \\
    PG    & -1.94 & -1.98 & -1.98 & -2.00 & -1.99 & -1.97 & 6.91 \\
    SO    & -2.42 & -2.45 & -2.45 & -2.39 & -2.43 & -2.40 & 8.49 \\
    SYY   & -1.78 & -1.77 & -1.77 & -1.73 & -1.73 & -1.74 & 6.06 \\
    TGT   & -3.26 & -3.22 & -3.22 & -3.17 & -3.20 & -3.18 & 11.09 \\
    TMO   & -2.84 & -2.80 & -2.80 & -2.84 & -2.76 & -2.75 & 9.68 \\
    WMT   & -2.23 & -2.22 & -2.22 & -2.27 & -2.25 & -2.25 & 7.84 \\
    XOM   & -2.64 & -2.61 & -2.61 & -2.59 & -2.49 & -2.43 & 8.76 \\
    \hline\hline
    \end{tabular}%
	\caption*{\scriptsize Numbers in the table correspond to out-of-sample average ES forecasts for different stocks (in rows) and liquidity horizons in columns. The last column corresponds to ES coefficient defined in \eqref{eq:ES_bcbs}.}
	\label{tab:bt_tr2_10}%
\end{table}%


\begin{figure}[h!]
	\caption{Time series of 10 days ahead ES forecasts for BAC:  DFC with truncation $T/2$.}
	\centering
	\includegraphics[trim={0cm 0cm 0cm 0cm},clip,width = 0.8\textwidth]{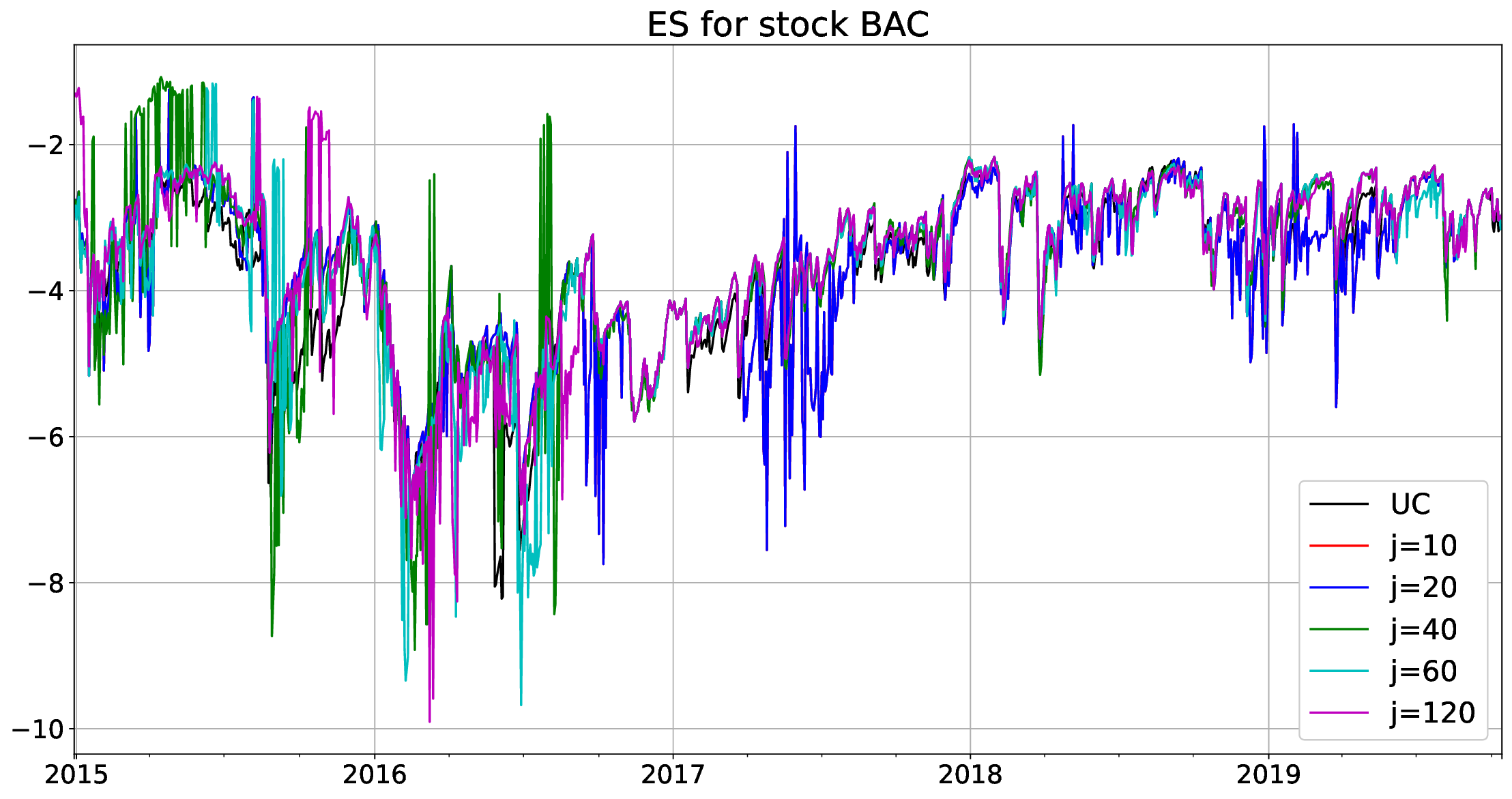}\vspace{-5pt}
	\caption*{\scriptsize 
		Coloured lines correspond to 10-day ahead ES forecasts for different liquidity horizons.}
	\label{fig:BT2_BAC}
\end{figure}

\begin{figure}[h!]
	\caption{Time series of 10 days ahead ES forecasts for BAC: DFC with sample covariance estimator.}
	\centering
	\includegraphics[trim={0cm 0cm 0cm 0cm},clip,width = 0.9\textwidth]{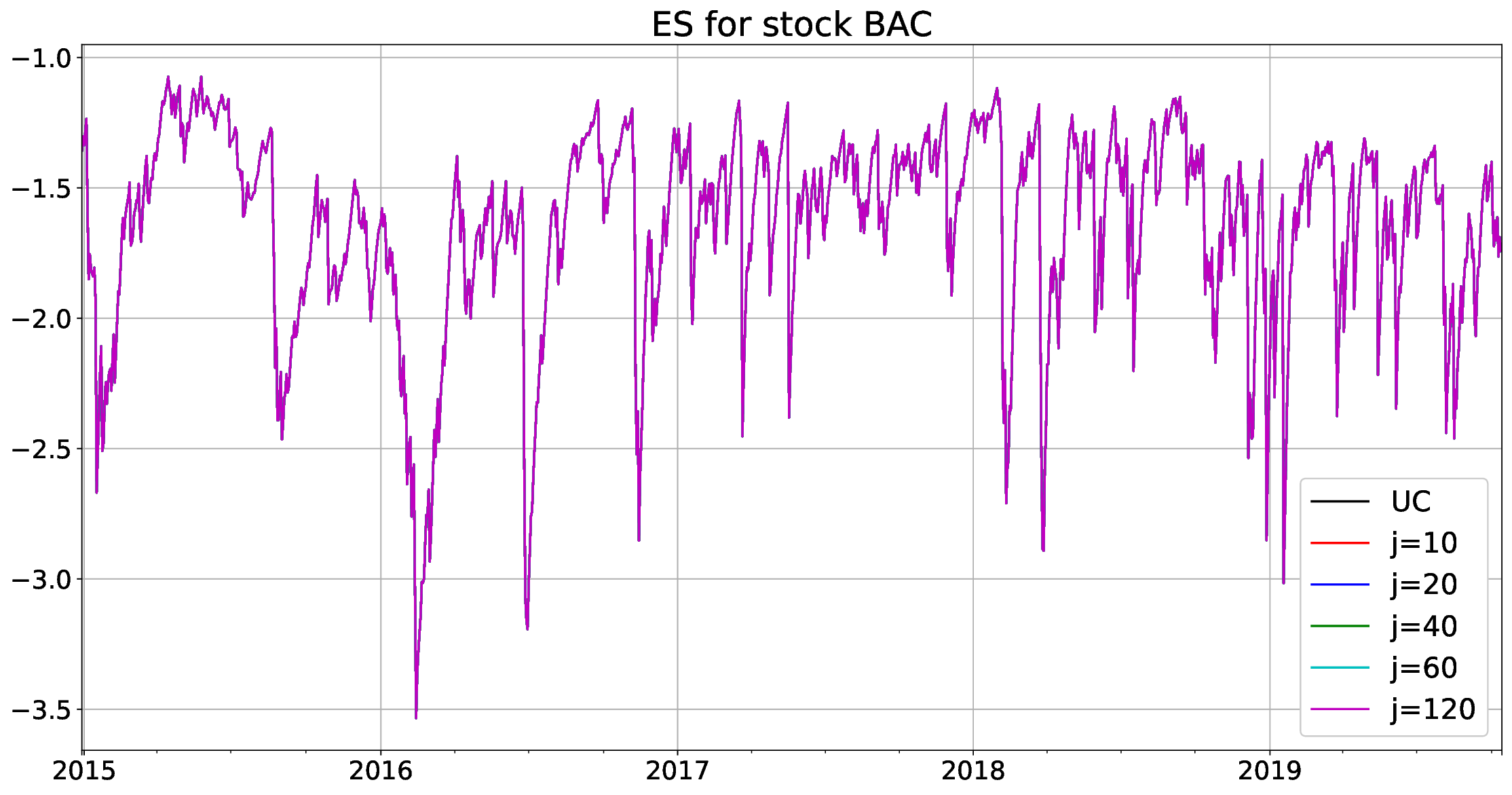}\vspace{-5pt}
	\caption*{\scriptsize 
		Coloured lines correspond to 1-day ahead ES forecasts for different liquidity horizons.}
	\label{fig:BTs_BAC}
\end{figure}

\begin{figure}[h!]
	\caption{Method selection for 10 day ahead ES forecasts for different liquidity horizons:  DFC with truncation $T/2$.}
	\centering
	\includegraphics[trim={0cm 0cm 0cm 0cm},clip,width = 0.9\textwidth]{t10/BT_stocks_BT_TR1955.eps}
	\caption*{\scriptsize Heatmaps correspond to the liquidity horizons as specified in Table \ref{tab:LH}. On each heatmap the x-axis corresponds to stocks and the y-axis to models. The heatmap cells correspond to the average number of the out-of-sample periods where the method is included in the MCS. The warmer the colour, the more frequently was the method selected for the forecast combination.}
	\label{fig:BT2_stocks}
\end{figure}

\begin{figure}[h!]
	\caption{Method selection for 10 day ahead ES forecasts for different liquidity horizons:  DFC with sample covariance estimator.}
	\centering
	\includegraphics[trim={0cm 0cm 0cm 0cm},clip,width = 0.9\textwidth]{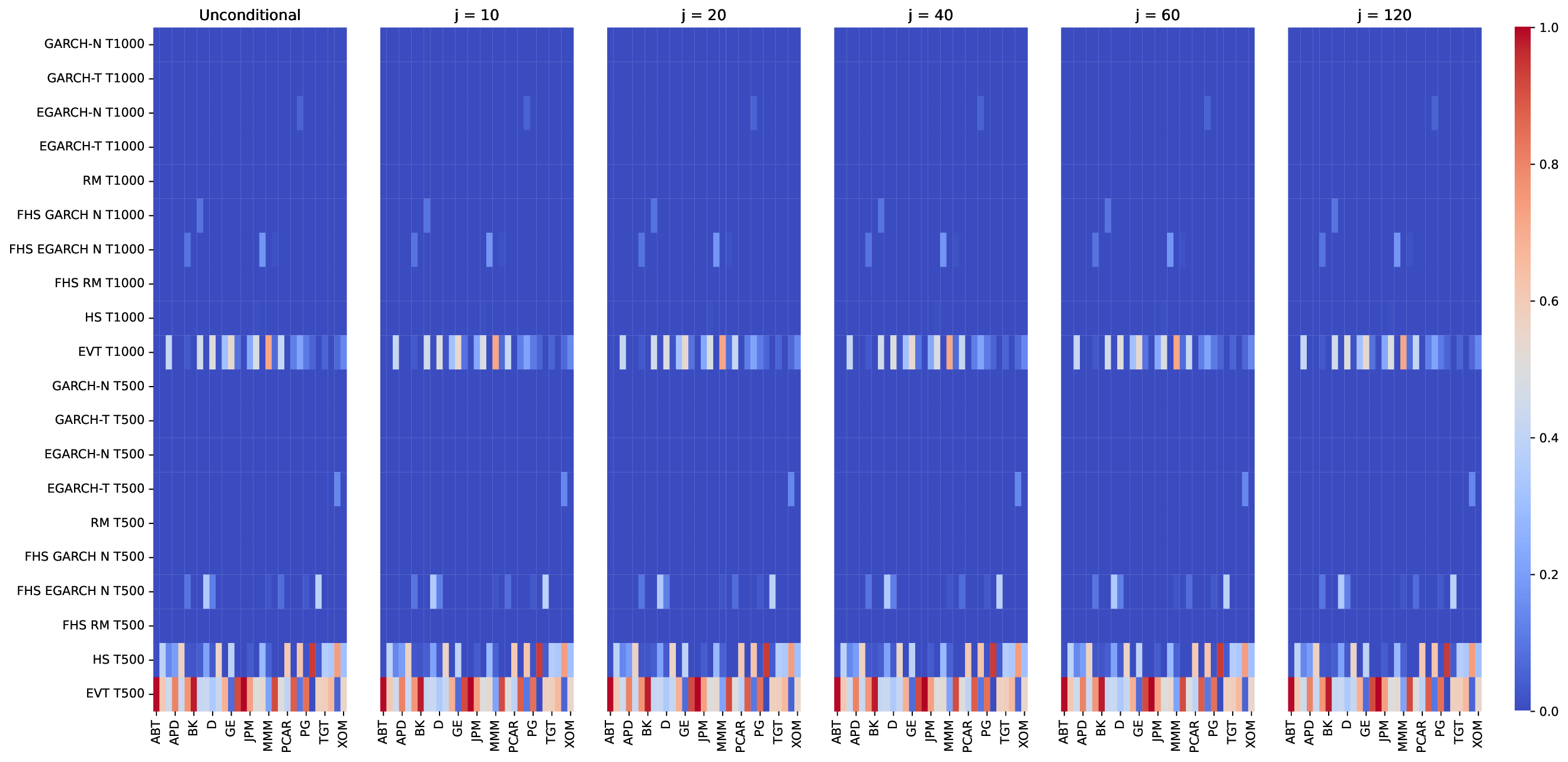}
	\caption*{\scriptsize Heatmaps correspond to the liquidity horizons as specified in Table \ref{tab:LH}. On each heatmap the x-axis corresponds to stocks and the y-axis to models. The heatmap cells correspond to the average number of the out-of-sample periods where the method is included in the MCS. The warmer the colour, the more frequently was the method selected for the forecast combination.}
	\label{fig:BTs_stocks}
\end{figure}

\end{appendix}

\end{document}